\newtheorem{theorem}{Theorem}
\newtheorem{proposition}{Proposition}
\newtheorem{lemma}{Lemma}
\newtheorem{corollary}{Corollary}
\theoremstyle{definition}
\newtheorem{definition}{Definition}
\renewcommand{\vec}[1]{\mathbf{#1}} 
\newcommand{\vecg}[1]{\boldsymbol{#1}} 
\newcommand{\mat}[1]{\mathbf{#1}} 
\newcommand{\blue}{}
\newcommand{\red}{}
\newcommand{\yellow}{}
\newcommand{\gray}{}
\newcommand{\sect}{Section~}
\newcommand\hcancel[2][black]{\setbox0=\hbox{$#2$}%
\rlap{\raisebox{.45\ht0}{\textcolor{#1}{\rule{\wd0}{1pt}}}}#2}
\begin{document}

\title[Resilience Bounds of Network Clock Synchronization]{Resilience Bounds of Network Clock Synchronization with Fault Correction}

\author{Linshan Jiang, Rui Tan, Arvind Easwaran}
\thanks{This manuscript was accepted by {\em ACM Transactions on Sensor Networks} on June 25th 2020. A preliminary version of this work appeared in {\em The 24th IEEE International Conference on Parallel and Distributed Systems (ICPADS)} held in Singapore, December 2018.}
\thanks{The authors were with School of Computer Science and Engineering, Nanyang Technological University, 50 Nanyang Ave, Singapore 639798. E-mail: \{linshan001, tanrui, arvinde\}@ntu.edu.sg}
\thanks{This research was supported in part by an Nanyang Technological University (NTU) Start-up Grant and in part by the Delta-NTU Corporate Laboratory for Cyber-Physical Systems with funding support from Delta Electronics Inc. and the National Research Foundation (NRF), Singapore under the Corp Lab @ University Scheme. The authors acknowledge Mr. Jothi Prasanna Shanmuga Sundaram for discussions during the early stage of this work and Dr. Yi Li for useful feedback on this work.}

\maketitle

\begin{abstract}
  Naturally occurring disturbances and malicious attacks can lead to faults in synchronizing the clocks of two network nodes. In this paper, we investigate the fundamental resilience bounds of network clock synchronization for a system of $N$ nodes against the peer-to-peer synchronization faults. Our analysis is based on practical synchronization algorithms with time complexity down to $O(N^3)$ that attempt to correct the faults by checking the consistency among the following three types of data: 1) the estimated faults, 2) the estimated clock offsets among the nodes, and 3) the measured clock offsets from the potentially faulty peer-to-peer synchronization sessions. Our analysis gives the following three major results. First, the maximum number of faults that can be corrected by the algorithms has a tight bound of $\lfloor N/2 \rfloor - 1$ when every node pair performs a synchronization session. Second, {\yellow by converting the fault resilience problem to a graph-theoretic edge connectivity problem and applying Menger's theorem,} we develop an algorithm to compute the tight bound when not every node pair performs a synchronization session. Third, the number of synchronization sessions to achieve the capability of correcting any $K$ faults has a lower bound of $\lceil N (2K+1) / 2 \rceil$; we also develop an algorithm to schedule the synchronization sessions to approach the lower bound. The above results provide basic understanding and useful guidelines to the design of resilient clock synchronization systems. For instance, our results suggest that, the 4-node network achieves the highest degree of resilience that is defined as the ratio of the maximum number of correctable faults to the number of synchronization sessions. Therefore, by  organizing a large-scale clock synchronization system into a hierarchy of multiple tiers with each consisting of 4-node synchronization groups, we can achieve satisfactory and understood resilience against faults with reduced synchronization sessions.
\end{abstract}

\section{Introduction}
\label{sec:intro}

For {\blue network systems} such as {\blue wireless sensor networks (WSNs)} {\blue and coordinated robots}, accurate clock synchronization among the distributed nodes is important. Correct timestamps make sense {\blue the sensing} data; synchronized clocks enable punctual coordinated operations among {\blue multiple robotic arms that collaborate on a production line in a manufacturing system}. In contrast, desynchronized clocks will undermine system performance and even lead to physical damages {\blue (e.g., clashing of robotic arms)} and system disruptions in time-critical systems. {\blue However, as the distributed nodes are often deployed in complex physical environments with various naturally occurring disturbances and even malicious attacks, maintaining resilient system-wide clock synchronization can be challenging.}

Network Time Protocol (NTP) \cite{mills1991internet} is the foremost means of clock synchronization that is widely known and adopted. The nodes in a system running NTP are organized into a layered hierarchy, in which a stratum-$n$ node acts as a slave in synchronizing itself with a stratum-$(n-1)$ node, and as a master when providing its clock values to stratum-$(n+1)$ and other stratum-$n$ nodes. Thus, in an NTP system, the global time that is directly accessed by the stratum-1 nodes is disseminated to all the nodes stratum by stratum. The intact dissemination highly depends on the successful peer-to-peer (p2p) clock synchronization sessions. {\gray A p2p synchronization session can be realized by the direct communication between the two nodes or by the multi-hop communications via several relay nodes.} The p2p synchronization session estimates the offset between the clocks of two synchronizing nodes (referred to as {\em clock offset} in this paper) using a round-trip timing approach. Specifically, with the one-way packet delivery time estimated {\yellow as} half of the measured round-trip time, the clock offset can be computed based on the two nodes' respective clock values when the packet leaves one node and arrives at the other. With the estimated clock offset, a slave node can reset its clock value or calibrate its clock advance speed to achieve synchronization with the master node. This round-trip timing approach is also the basis of the Precision Time Protocol (PTP) \cite{4579760} that is adopted in industrial Ethernets for higher synchronization accuracy. To reduce communication overhead, the clock synchronization approaches developed for WSNs (e.g., RBS \cite{elson2002fine}, TPSN \cite{ganeriwal2003timing}, and FTSP \cite{maroti2004flooding}) often assume near-zero wireless signal propagation times by accessing radio chips' hardware interrupts and therefore can synchronize two nodes with a one-way communication.

However, the {\yellow integrity} of the p2p synchronization sessions can be compromised. A basis of the round-trip timing approach is that the communication link between the two synchronizing nodes is symmetric, which may not hold faithfully in practice, however. For instance, in a packet-switched network, the uplink to and downlink from the master node may take different routes with distinct end-to-end delays. Over a wireless link, the media access control (MAC) may introduce uncertain delays of up to hundreds of milliseconds in transmitting a packet \cite{maroti2004flooding}. However, not all systems can perform the packet timestamping in the MAC layer to exclude this uncertainty from the clock offset estimation. Moreover, as discussed in RFC 7384 \cite{rfc7384}, the attackers may introduce controlled delays to the deliveries of the packets and breach the symmetric link assumption. As shown in \cite{rabadi2017taming}, this {\em packet delay attack} can be implemented in a wired network via a compromised network router. Moreover, this threat cannot be solved by conventional security measures such as crytographic authentication and encryption \cite{rfc7384,mizrahi2012game,ullmann2009delay}. The violations of the symmetric link assumption caused by any of the above reasons will lead to errors in estimating the clock offset and faulty p2p clock synchronization sessions. The one-way synchronization approaches adopted for WSNs are also susceptible to the delay. In \cite{gu2019attack}, the packet delay attack against a low-power wide-area network is implemented by a combination of malicious packet collision and delayed replay.

In this paper,
we {\blue investigate} the {\blue fundamental} resilience {\blue bounds} of {\em network clock synchronization} (NCS) for a system of $N$ nodes against the p2p synchronization faults.
{\blue Specifically, we study the problem of deriving the maximum number of p2p synchronization faults that the network can correct to maintain the clock synchronization among all the nodes, as well as the dual problem of deriving the minimum number of p2p synchronization sessions to ensure the network's ability to correct a specified number of p2p synchronization faults. Our analysis is based on an NCS algorithm that attempts to correct the p2p synchronization faults. The algorithm is as follows. Consider an {\em NCS graph} $G=(V, E)$, where $V$ and $E$ respectively denote the set of the nodes in the network and the set of the p2p synchronization sessions each performed between two nodes.}  We use $|E|$ to denote the cardinality of the set $E$.
{\yellow The $k$th} step of the algorithm assumes that $k$ out of totally $|E|$ p2p synchronization sessions are faulty, exhaustively tests all possible $|E| \choose k$ combinations of the {\blue assumed} faulty p2p synchronization sessions {\blue among the $|E|$ sessions}, and yields a solution once the estimated clock offsets and the estimated p2p synchronization faults agree with all the p2p clock offset measurements. Starting from $k=0$, the algorithm increases $k$ by one in each step and terminates once a solution is found. {\blue This algorithm is practical} in that it does not require any run-time knowledge about the p2p synchronization faults, including the {\blue actual} number of the faults and {\yellow which synchronization sessions out of totally $|E|$ sessions are actually faulty.} {\gray The implementation of NCS needs a central node that can execute the compute-intensive NCS algorithm and can communicate reliably with all the nodes in the network to collect the results of p2p synchronization sessions. In a sensor network, the gateway with sufficient compute resources can serve as the NCS central node.}

{\yellow However, analyzing the resilience bounds of the NCS algorithm is challenging. First, the approach of analyzing all possible cases of the actual faults and assumed faults incurs prohibitive overhead. Specifically, in the $k$th step of the algorithm, we need to analyze a total of  $|E| \choose k$ possible distributions of the assumed faulty p2p synchronization sessions among the $|E|$ synchronization sessions. Thus, this approach becomes infeasible when $|E|$ is large. Second, for larger networks with more nodes, it becomes difficult to enumerate all possible isomorphic NCS graphs with a certain number of edges. Thus, enumerating all possible cases to analyze the resilience bounds is not a promising approach.
	
	In this paper, to analyze the resilience bounds, we introduce fault-free NCS subgraphs and convert the NCS resilience problem to a graph-theoretic problem. Assisted with the existing results in graph theory, we obtain the following main} {\blue   analytic results  for the resilient NCS problem:
\begin{itemize}
\item For a complete NCS graph in which every node pair performs a p2p synchronization session, the tight bound of the maximum number of p2p synchronization faults that can be corrected by the NCS algorithm is $\left\lfloor \frac{N}{2} \right\rfloor - 1$. In other words, the NCS algorithm can synchronize all nodes when the number of p2p synchronization faults is no greater than the tight bound; otherwise, some nodes in the network will be desynchronized due to the faults.
\item For any NCS graph that may be incomplete, {\yellow we convert the fault resilience problem to a graph-theoretic edge connectivity problem. From our analysis based on Menger's theorem \cite{bohme2001menger},} we develop an algorithm to compute the tight bound of the maximum number of p2p synchronization faults that can be corrected by the NCS algorithm. {\yellow Moreover, we develop a new NCS algorithm with a time complexity of $O(N^3)$ that achieves the same fault correction capability as the original NCS algorithm that has a time complexity of $O\left( \sqrt{2}^N \right)$}.
\item We study the {\em minimum} NCS graph that uses the least edges to provide resilience against a specified number of p2p synchronization faults (denoted by $K$). We develop an algorithm to compute the minimum NCS graphs. We {\yellow prove that} $\left\lceil \frac{N \cdot (2K+1)}{2} \right\rceil$ {\yellow is a lower bound of} the number of edges of any NCS graph that is resilient against $K$ p2p synchronization faults. {\yellow  Numeric results suggest that the lower bound is tight.} The lower bound can be used to understand the order of magnitude of the number of edges in any minimum NCS graph.
\end{itemize}
}

{\blue
  The analytic results in this paper provide important understanding and useful guidelines to the design of clock synchronization systems that are resilient to p2p clock synchronization faults. They are useful to time-critical systems such as industrial wireline and wireless Ethernets. Particularly, in addition to the analysis, this paper discusses the design of a clock synchronization architecture that strikes a good trade-off between the p2p synchronization communication overhead and resilience to p2p faults. Specifically, we use the {\em degree of resilience} (DoR) as the resilience metric, which is defined as the ratio between the number of faults that can be corrected and the number of edges in an NCS graph. Based on our analytic results, we show that a 4-node network with complete NCS graph achieves the highest DoR of $1/6$. From this observation, we propose a tiered clock synchronization architecture for large-scale networks, in which the nodes in a network are grouped into 4-node synchronization groups that are organized into multiple tiers. This architecture provides reduced but well understood resilience against p2p clock synchronization faults (i.e., every 4-node synchronization group can have one fault), compared with the original large-scale network with a minimum NCS graph. The number of p2p synchronization sessions in the proposed architecture is much reduced, compared with that of the minimum NCS graph.
}

The remainder of this paper is organized as follows. \sect\ref{sec:related} reviews related work. \sect\ref{sec:problem} states the problem. {\blue \sect\ref{sec:analysis} analyzes the resilience of several small-scale networks to illustrate the key challenges in the resilience analysis. \sect\ref{sec:fully} derives the tight resilience bound of complete NCS graphs. \sect\ref{sec:lower-bounds} develops the algorithms to compute the resilience bound of any graph. {\yellow Section~\ref{sec:newal} develops} a fast NCS algorithm {\yellow with a cubic time complexity that achieves the same fault correction capability as the original NCS algorithm}. \sect\ref{sec:another} studies minimum NCS graphs. \sect\ref{sec:implication} discusses the implications of our results and a tiered clock synchronization architecture for fault resilience.} \sect\ref{sec:conclude} concludes the paper.

\section{Related Work}
\label{sec:related}

In this section, we review the implementations of clock synchronization for different network systems and the existing studies on the fault tolerance of NCS.

\subsection{Implementations of Clock Synchronization}

Highly stable time sources are often ill-suited for {\blue network systems}. Despite an initial study of using chip-scale atomic clock (CSAC) on {\blue WSN} platforms \cite{dongare2017pulsar}, CSAC is still too expensive (\$1,500 per unit \cite{dongare2017pulsar}) for wide adoption. {\blue Thus, how to synchronize the nodes in different kinds of network systems has received extensive research.}

{\blue On the Internet, NTP \cite{mills1991internet} has been widely used to synchronize computer hosts. It is universal because it imposes few requirements, i.e., it only requires the host to timestamp the transmission and reception of the synchronization packets in the operating system (OS). Compared with NTP, PTP \cite{4579760} additionally requires the network interfaces of the synchronizing hosts and all the switches on the network paths among the hosts to have hardware-level timestamping capability. As such, PTP can exclude the uncertain OS and packet switching delays from the packet delivery time measurements, largely improving the accuracy in estimating the clock offsets. However, malfunctioned hardware-level timestamping will lead to p2p synchronization faults. The reference implementations of NTP and PTP have various mechanisms to improve their robustness against p2p synchronization faults. For instance, in NTP, when the round-trip time exceeds one second, the current p2p synchronization session is considered faulty and will not be used to calibrate the host's clock. Moreover, a slave node will average the clock offset estimation results obtained with multiple master nodes to guide its clock calibration. Despite these heuristics in the protocol implementation for fault resilience, an analytic understanding regarding the network's resilience against the p2p synchronization faults is still lacking.}

{\blue Over the past decade, WSN clock synchronization has been widely studied.} {\blue There are accurate global time broadcasts from the Global Positioning System (GPS) and timekeeping radio stations (e.g., WWVB in U.S.).} However, GPS and radio receivers have various limitations such as high power consumption, poor signal reception in indoor environments (e.g., 47\% good time for WWVB \cite{chen2011ultra}), and susceptibility to wireless spoofing attacks \cite{nighswander2012gps}. Thus, GPS and radio receivers are often employed on a limited number of time masters to provide global time to the slave nodes via some clock synchronization protocol. The resilience of the clock synchronization between the master and the slaves is the focus of this paper.

{\blue Early studies have designed clock synchronization protocols based on message passing, such as RBS \cite{elson2002fine}, TPSN \cite{ganeriwal2003timing}, and FTSP \cite{maroti2004flooding}.} {\blue Recent studies exploit various} external periodic signals {\gray for clock synchronization \cite{viswanathan2016exploiting,rabadi2017taming,yan2017}, time fingerprinting \cite{lukac2009recovering,gupchup2009sundial,viswanathan2016exploiting,li2017natural,maroti2004flooding}, }and clock calibration \cite{rowe2009low,li2012flight,hao2011,li2011exploiting}. Time fingerprinting approaches focus on studying the global time information embedded in the sensing data such as microseisms \cite{lukac2009recovering}, sunlight \cite{gupchup2009sundial}, {\blue powerline voltage \cite{viswanathan2016exploiting}}, and electromagnetic radiation \cite{li2017natural}. They can be a basis for clock synchronization. {\blue For instance, the work \cite{viswanathan2016exploiting} achieves microseconds clock synchronization accuracy by using the time fingerprints found in the electric voltages of a building's power network.}
{\gray Different from clock synchronization that ensures the clocks to have the same value, {\em clock calibration} ensures different clocks to advance at the same speed. The approaches presented in \cite{rowe2009low,li2012flight,hao2011,li2011exploiting} exploit powerline electromagnetic radiation, fluorescent lamp flickering, Wi-Fi beacons, and FM Radio Data System broadcasts to calibrate the clocks of WSN nodes. Clock calibration may not need any inter-node communications, whereas clock synchronization must need communicating one or more timestamps between two synchronizing nodes.}  {\blue However, all the above studies \cite{elson2002fine,ganeriwal2003timing,maroti2004flooding,viswanathan2016exploiting,rabadi2017taming,yan2017,lukac2009recovering,gupchup2009sundial,li2017natural,rowe2009low,li2012flight,hao2011,li2011exploiting} focus on devising clock synchronization/calibration approaches. They fall short of analyzing the resilience of the system against potential clock synchronization faults.}

\subsection{Fault Tolerance of NCS}

The fault tolerance of NCS against Byzantine clock faults has been studied \cite{dolev1984possibility,Lamport1985}. A Byzantine faulty clock gives an arbitrary clock value whenever being read. It has been proved that, to guarantee the synchronization of non-faulty clocks in the presence of $m$ faulty clocks, a total of at least $(3m+1)$ clocks are needed. Different from the Byzantine faulty clock model, we consider faulty p2p synchronization sessions between the clocks. The conversion of our problem to the Byzantine clock synchronization problem by considering either node involving a faulty p2p synchronization session as a faulty clock is {\em invalid}, because this faulty clock after the conversion is not a Byzantine faulty clock, unless all p2p synchronization sessions involving this clock are faulty. As our problem does not have this assumption, the the analysis in \cite{dolev1984possibility,Lamport1985} and the resulted fault tolerance bound are not applicable to our problem. {\blue Moreover, different from the fault-tolerant systems in \cite{dolev1984possibility,Lamport1985} that do not try to correct the faults, our resilient NCS system tries to correct the p2p synchronization faults.}

{\blue Our prior work \cite{tan2018resilience} presented the formulation of the resilience of NCS against p2p synchronization faults. It developed an algorithm to compute a lower bound and derived a closed-form upper bound of the maximum number of faults that can be corrected for any complete NCS graph. In this paper, we derive the closed-form tight bound of resilience for any complete NCS graph, which represents {\yellow a substantial} improvement to the results in \cite{tan2018resilience}. Moreover, this paper studies three new problems: (1) the resilience bounds of NCS graphs that can be incomplete (\sect\ref{sec:lower-bounds}) {\yellow , (2) fast NCS algorithm with polynomial complexity to achieve the same fault correction capability as the original NCS algorithm (Section~\ref{sec:newal}), and (3)} the minimum NCS graphs providing a specified level of resilience (\sect\ref{sec:another}).
}

\section{Problem Statement}
\label{sec:problem}

This section presents the system model (\sect\ref{subsec:system-model}) and states the research problem (\sect\ref{subsec:problem}). In \sect\ref{subsec:background}, we discuss several abstractions in our system model and their relations with real NCS systems.

\subsection{System Model and NCS}
\label{subsec:system-model}

To improve the robustness of clock synchronization against p2p synchronization faults, this section proposes an approach to cross-check the p2p synchronization results among multiple nodes and correct the faults if present.

Let $V$ denote the set of $N$ nodes in a network, i.e., $V = \{n_0, n_1, \ldots, n_{N-1}\}$, where $n_i$ represents the $i$th node. Let $\delta_{ij}$ denote the clock offset between the nodes $n_i$ and $n_j$, which is unknown and to be estimated. Specifically, $\delta_{ij} = c_i(t) - c_j(t)$, where $c_i(t)$ and $c_j(t)$ are the clock values of $n_i$ and $n_j$ at any given Newtonian time instant $t$, respectively. We assume that $\delta_{ij}$ is time-invariant. In \sect\ref{subsec:background}, we will discuss the validity of this assumption in real systems. By designating $n_0$ as the {\em reference node}, we have {\red the relationship }$\delta_{ij} = \delta_{i0} - \delta_{j0}$, {\red which will be used for analysis in the rest of this paper.}

Denote by $n_i \leftrightarrow n_j$ the p2p synchronization session between $n_i$ and $n_j$. Denote by $\widetilde{\delta}_{ij}$ the measured clock offset via $n_i \leftrightarrow n_j$. If the synchronization session $n_i \leftrightarrow n_j$ is successful (i.e., non-faulty), $\widetilde{\delta}_{ij} = \delta_{ij}$; if the synchronization session is faulty, $\widetilde{\delta}_{ij} = \delta_{ij} + e_{ij}$, where $e_{ij}$ is the p2p synchronization fault {\blue which is a non-zero and finite real number}. {\blue Let $E$ denote the set of all p2p synchronization sessions performed in a {\em synchronization round}. In our NCS approach, the result of at most one p2p synchronization session performed between any pair of nodes is used in one synchronization round. Note that the analysis of this paper is agnostic to the technique used for each synchronization session. For instance, a round-trip timing process can be used to obtain $\widetilde{\delta}_{ij}$ between $n_i$ and $n_j$. Moreover, since at most one synchronization session between $n_i$ and $n_j$ is used in one synchronization round, the edge $n_i \leftrightarrow n_j$ can be modeled undirected. }

{\blue For a synchronization round, the undirected graph $G=(V, E)$ is called the NCS graph. In a complete NCS graph, every node pair performs a p2p synchronization session, resulting in $|E|={N \choose 2} = \frac{N(N-1)}{2}$.}
{\blue For any NCS graph $G=(V,E)$ that may be incomplete, the NCS is performed as follows.}
All the clock offset measurements are transmitted to a central node, which runs {\blue the NCS algorithm that is shown in Algorithm~\ref{alg:error-correction}. {\gray The central node can undertake compute-intensive NCS algorithm and can communicate reliably with all the nodes. It can be an external entity (e.g., a cloud service) or any connected node in the network.} For the latter case, the central node may not be the reference node; various strategies can be used to select the central node. For example, in a battery-powered network that concerns about the nodes' energy consumption, a node with the most remaining battery energy can perform the role of central node to receive the clock offset measurements and {\yellow run} the NCS algorithm. We assume that the p2p clock synchronization sessions are separate from the transmissions of the clock offset measurements to the central node. In certain cases, the transmission of the measured clock offset can be avoided. For instance, if the round-trip timing approach is used and the central node initiates the round-trip timing, the central node obtains $\widetilde{\delta}_{ij}$ on the completion of the round-trip timing and requires no separate transmission of $\widetilde{\delta}_{ij}$.} Note that, when every node performs a p2p synchronization session with the reference node, the NCS graph $G$ will have a star topology centered at the reference node. Algorithm~\ref{alg:error-correction} and all analytic results in this paper are also applicable to this star NCS graph.

\renewcommand{\algorithmicrequire}{\textbf{Given:}}
\renewcommand{\algorithmicensure}{\textbf{Output:}}
\renewcommand{\algorithmiccomment}[1]{// #1}

\begin{algorithm}[t]
\caption{{\blue NCS algorithm with fault correction}.}
\label{alg:error-correction}
\begin{algorithmic}[1]
\REQUIRE $\{ \widetilde{\delta}_{ij} | \forall n_i \leftrightarrow n_j \in E \}$
\ENSURE $\{ \hat{\delta}_{j0}| \forall j \in [1, N-1] \}$ and $\{\hat{e}_{ij} | \forall n_i \leftrightarrow n_j \in E \}$

\STATE $k \leftarrow 0$
\WHILE{$k \le |E|$}
\FOR{each distribution of the $k$ estimated p2p synchronization faults among the $|E|$ p2p synchronization sessions}
\label{line:foreach}
\IF{the corresponding variant of Eq.~(\ref{eq:error-correction}) with the $k$ estimated p2p synchronization faults has a solution}
\label{line:solve}
\STATE return $\{ \hat{\delta}_{j0}| \forall j \in [1, N-1] \}$ and $\{\hat{e}_{ij} | \forall n_i \leftrightarrow n_j \in E \}$
\ENDIF
\ENDFOR
\STATE $k \leftarrow k + 1$
\ENDWHILE
\end{algorithmic}
\end{algorithm}

{\blue We now explain Algorithm~\ref{alg:error-correction}.} Denote by $\hat{\delta}_{ij}$ and $\hat{e}_{ij}$ the estimates for $\delta_{ij}$ and $e_{ij}$. A general equation system assuming that all synchronization sessions are faulty is
\begin{equation}
\left\{
\begin{array}{ll}
\hat{\delta}_{j0} + \hat{e}_{j0} = \widetilde{\delta}_{j0}, & \forall n_j \leftrightarrow n_0 \in E, j \neq 0; \\
\hat{\delta}_{i0} - \hat{\delta}_{j0} + \hat{e}_{ij} = \widetilde{\delta}_{ij}, & \forall n_i \leftrightarrow n_j \in E, i \neq 0, j \neq 0.
\end{array}
\right.
\label{eq:error-correction}
\end{equation}
The variables to be solved are the unknowns $\{ \hat{\delta}_{j0}| \forall j \in [1, N-1] \}$ and {\blue $\{\hat{e}_{ij} | \forall n_i \leftrightarrow n_j \in E \}$}, where $\hat{\delta}_{j0}$ is the estimated clock offset between $n_j$ and the reference node $n_0$; $\hat{e}_{ij}$ is the estimated p2p clock synchronization fault between $n_i$ and $n_j$ {\blue if they have performed the p2p synchronization session in the current synchronization round. Note that regardless of $E$, we aim at solving the clock offset between every node and the reference node. If there are too few edges in $E$, Eq.~(\ref{eq:error-correction}) may have infinite number of solutions. Our resilience definition in \sect\ref{subsec:problem} accounts for this situation.}

{\blue Let $k$ denote the assumed number of faults, which can be different from the actual number of faults. {\yellow The scattering of the $k$ assumed faults on the $|E|$ p2p synchronization sessions is called {\em distribution} of the assumed faults.} As shown in Algorithm~\ref{alg:error-correction}, the NCS algorithm starts by assuming there are no faults (i.e., $k \leftarrow 0$). In each iteration of the algorithm that increases $k$ by one, the algorithm solves the variants of Eq.~(\ref{eq:error-correction}) that capture all ${|E| \choose k}$ possible distributions of the $k$ assumed faulty p2p synchronization sessions among all the $|E|$ p2p synchronization sessions. Specifically, a variant of Eq.~(\ref{eq:error-correction}) is generated by keeping $k$ estimated p2p synchronization faults (i.e., $\hat{e}_{j0}$ or $\hat{e}_{ij}$) in Eq.~(\ref{eq:error-correction}) and removing other estimated p2p synchronization faults.
Once a solution is found, Algorithm~\ref{alg:error-correction} returns the estimates $\{ \hat{\delta}_{j0}| \forall j \in [1, N-1] \}$ and $\{\hat{e}_{ij} | \forall n_i \leftrightarrow n_j \in E \}$. If $\hat{\delta}_{j0} = \delta_{j0}$, $\forall j \in [1, N-1]$, we say Algorithm~\ref{alg:error-correction} can correct the faults.}

Algorithm~\ref{alg:error-correction} requires neither the {\blue actual} number nor the {\blue actual} distribution of the p2p synchronization faults. Whether it can correct the faults and how many faults it can {\blue correct} will be the focus of this paper. Algorithm~\ref{alg:error-correction} is a centralized algorithm executed on the central node. The time complexity of the $k$th step of Algorithm~\ref{alg:error-correction} is $O\left( {|E| \choose k} \right)$. Thus, the time complexity upper bound of Algorithm~\ref{alg:error-correction} is $O \left(\sum_{k=0}^{|E|}{|E| \choose k}\right)$ $= O\left(2^{|E|}\right)$. In \sect\ref{subsec:comparision}, we further show that the time complexity lower bound of Algorithm~\ref{alg:error-correction} is $\Omega\left(\sqrt{2}^N\right)$ for complete NCS graphs. Therefore, Algorithm~\ref{alg:error-correction} has an exponential time complexity. In \sect\ref{subsec:comparision}, based on a graph-theoretic analysis, we will develop a fast NCS algorithm with a time complexity of $O(N^3)$ that provides the same fault correction capability as Algorithm~\ref{alg:error-correction}. With the fast algorithm, the centralized NCS is scalable to network size. The resilience of the centralized NCS provides important baseline understanding on the resilience of synchronizing a network of nodes.

\subsection{Problem Statement}
\label{subsec:problem}

Let $\mathbb{Z}_{\ge 0}$ denote the set of non-negative integers.

\begin{definition}[$K$-resilience]
  Let $K \in \mathbb{Z}_{\ge 0}$ denote the number of faulty p2p synchronization sessions among a total of $|E|$ sessions in an NCS graph $G = (V, E)$. The network with the NCS graph $G$ is $K$-resilient if Algorithm~\ref{alg:error-correction} can correct \textit{any} $K$ p2p synchronization faults.\qed
  \label{def:resilience}
\end{definition}

From Algorithm~\ref{alg:error-correction}, we define the $K$-resilience condition that can be used to check whether a network {\blue with $G$} is $K$-resilient.

\begin{definition}[$K$-resilience condition]
A network with $G$ is $K$-resilient if the following conditions are satisfied:
\begin{enumerate}
\item $\forall k \in [0,K)$, the variant of Eq.~(\ref{eq:error-correction}) corresponding to \textit{any} distribution of the $K$ actual p2p synchronization faults and \textit{any} distribution of the $k$ estimated p2p synchronization faults has no solutions;
\item When $k=K$, for \textit{any} distribution of the $K$ actual p2p synchronization faults and \textit{any} distribution of the $k$ estimated p2p synchronization faults,
  \begin{enumerate}
  \item if the distribution of the $k$ estimated p2p synchronization faults is identical to the distribution of the actual faults, Eq.~(\ref{eq:error-correction}) has a unique solution;
  \item otherwise, Eq.~(\ref{eq:error-correction}) has no solutions. \qed
  \end{enumerate}
\end{enumerate}
\label{def:resilience-condition}
\end{definition}

Note that {\blue under} the condition (2)-(a) of Definition~\ref{def:resilience-condition}, {\blue if Eq.~(\ref{eq:error-correction}) has a unique solution, the solution must give the correct estimates of the clock offsets and the p2p synchronization faults, since these correct estimates form a valid solution.} Note that if the $K$-resilience condition in Definition~\ref{def:resilience-condition} is satisfied, Algorithm~\ref{alg:error-correction} must be able to correct any $K$ faults. However, when the condition (2)-(b) of Definition~\ref{def:resilience-condition} is not satisfied, Algorithm~\ref{alg:error-correction} can still correct $K$ faults with a specific distribution of the faults. This occurs when the first attempted distribution of the $K$ estimated faults happens to be identical to the actual distribution of the $K$ faults. However, in this case, the network is not $K$-resilient, because Definition~\ref{def:resilience} requires that Algorithm~\ref{alg:error-correction} can correct {\em any} $K$ faults to claim $K$-resilience. Thus, Definition~\ref{def:resilience-condition} gives a sufficient condition for Algorithm~\ref{alg:error-correction} to correct any $K$ faults; it is a sufficient and necessary condition for $K$-resilience.

{\blue Let $\mathbb{G}$ denote the infinite set of all NCS graphs. We define the following resilience bounds:}

\begin{definition}[Lower bound of maximum resilience]
  A function $f_l(G): \mathbb{G} \mapsto \mathbb{Z}_{\ge 0}$ is a lower bound of maximum resilience for a network with an NCS graph $G$ if the network is $K$-resilient for $K \le f_l(G)$.\qed
\end{definition}

\begin{definition}[Upper bound of maximum resilience]
  A function $f_u(G): \mathbb{G} \mapsto \mathbb{Z}_{\ge 0}$ is an upper bound of maximum resilience for a network with an NCS graph $G$ if the network is not $K$-resilient for $K > f_u(G)$.\qed
\end{definition}

\begin{definition}[Tight bound of maximum resilience]
  A function $f_t(G): \mathbb{G} \mapsto \mathbb{Z}_{\ge 0}$ is a tight bound of maximum resilience for a network with an NCS graph $G$ if the network is $K$-resilient for $K \le f_t(G)$ and not $K$-resilient for $K > f_t(G)$.\qed
  \label{def:tight}
\end{definition}

{\blue This paper aims at investigating the above resilience bounds under various NCS graph (e.g., complete or not) and the dual problem of what NCS graph condition can ensure a certain resilience bound.}

{\blue
\subsection{Relations with Real Clock Synchronization Systems}
\label{subsec:background}


The system model described in \sect\ref{subsec:system-model} includes several abstractions to clearly formulate the $K$-resilience concept in \sect\ref{subsec:problem} and allow us to focus on the essence of the problem. In this section, we discuss the potential deviations of the real systems from these abstractions and the impact of such deviations on our analysis in the reminder of this paper.

\subsubsection{Definition of fault}
\label{subsubsec:fault-definition-discussion}
In this paper, we focus on the faults that are caused by erroneous clock offset estimates. We do not consider other faults such as missing clock offset estimates. In \sect\ref{subsec:system-model}, any deviation of the measured clock offset from its true value is regarded as a fault. Under this rigorous definition of fault, we can describe the NCS algorithm and define the $K$-resilience without any vagueness. In real systems, a p2p clock synchronization session may have some clock offset estimation error due to inevitable random noises. The system designer often has good knowledge of these random noises (e.g., their sources and probabilistic distributions) and designs the clock synchronization mechanism to limit the resulted clock offset estimation errors to acceptable ranges. In practice, the clock offset estimation errors that are caused by unforeseen situations (e.g., hardware malfunction and packet delay attack \cite{rabadi2017taming,rfc7384,mizrahi2012game,ullmann2009delay}) and beyond the acceptable ranges can be regarded as faults. Following this principle, in this section, we discuss how to extend our formulation to address 1) a class of {\em sensing-based} clock synchronization systems and 2) other systems under more general settings.

In the sensing-based clock synchronization systems \cite{viswanathan2016exploiting,rabadi2017taming,yan2017,li2017natural}, the clock offset estimation errors follow a discrete pattern. Specifically, the error is given by $e_{ij} = \epsilon_{ij} + m_{ij} \cdot T$, where $T$ is the period of the external signal being sensed, $m_{ij}$ is an integer, and $\epsilon_{ij}$ is a random noise with magnitude much smaller than $T$. For instance, in the study \cite{viswanathan2016exploiting} that exploits powerline voltages to synchronize nodes in a city, $T$ is 20 milliseconds in a $50\,\text{Hz}$ power grid and the absolute value of $\epsilon_{ij}$ is about 0.1 milliseconds (i.e., 0.5\% of $T$). The discrete pattern is caused by abnormal noises of the used external signals and some integer nature of the clock synchronization algorithms to leverage on the periodicity of the external signals. For these systems, the p2p synchronization sessions with $m_{ij} \neq 0$ can be regarded as faulty sessions. Due to the random noises $\epsilon_{ij}$, Eq.~(\ref{eq:error-correction}) generally has no exact solutions even when there are no faulty p2p synchronization sessions (i.e., $K=0$) and the considered $k=0$. Instead, a candidate solution to Eq.~(\ref{eq:error-correction}) can be obtained by minimizing the following overall residual:
\begin{equation}
  \sum_{\substack{\forall n_j \leftrightarrow n_0 \in E \\ \forall j \neq 0}} \!\! \left| \hat{\delta}_{j0} + \hat{e}_{j0} - \widetilde{\delta}_{j0} \right|^2 + \!\! \sum_{\substack{\forall n_i \leftrightarrow n_j \in E \\ \forall i \neq 0, \forall j \neq 0}} \left| \hat{\delta}_{i0} - \hat{\delta}_{j0} + \hat{e}_{ij} - \widetilde{\delta}_{ij} \right|^2.
\label{eq:1}
\end{equation}
The candidate solution can be substituted into each equation in Eq.~(\ref{eq:error-correction}) to check if the absolute residual exceeds some threshold set according to the distribution of the random noise $\epsilon_{ij}$. For example, we can set one millisecond for the system in \cite{viswanathan2016exploiting}. If every absolute residual does not exceed the threshold, we view the candidate solution as a valid solution to Eq.~(\ref{eq:error-correction}) in Line~\ref{line:solve} of Algorithm~\ref{alg:error-correction}. {\gray The integer programming formulation in Eq.~(2) exploits the discrete pattern of the synchronization faults. It reduces the impact of random synchronization errors on the accuracy of determining whether Eq.~(1) has a solution.}

For systems with a more general error pattern of $e_{ij} = \epsilon_{ij} + x_{ij} \cdot F_{ij}$ where $x_{ij}$ is 0 or 1 and $F_{ij}$ is an arbitrary number beyond the range of $\epsilon_{ij}$, the synchronization sessions with $x_{ij} = 1$ can be regarded as faulty sessions. The residual minimization and candidate solution checking approaches described above can be applied as well. {\yellow In \sect\ref{subsec:simulations}, we will present a set of simulation results that consider the general error pattern.}

Despite the above variations to address acceptable clock offset estimation errors, our abstracted formulation in \sect\ref{subsec:system-model} and \sect\ref{subsec:problem} capture the essence of the problem. The analysis based on the formulation will provide insightful understanding regarding the fault resilience of the NCS mechanism in Algorithm~\ref{alg:error-correction}.

\subsubsection{Time-invariant clock offset}
In \sect\ref{subsec:system-model}, we assume that the clock offset $\delta_{ij}$ is time-invariant. In practice, the clock offset $\delta_{ij}$ can be time-varying because the clocks of $n_i$ and $n_j$ may advance at different speeds.
However, the change of $\delta_{ij}$ during a p2p synchronization session is often negligible compared with the clock offset estimation errors of successful p2p synchronization sessions. In most clock synchronization systems, a p2p synchronization session takes a short time (e.g., tens of milliseconds in NTP, PTP, and sensing-based clock synchronization such as \cite{rabadi2017taming}). Typical crystal oscillators found in microcontrollers and personal computers have drift rates of 30 to 50 parts-per-million (ppm) \cite{hao2011}. Thus, the change of the clock offset during a synchronization session of 100 milliseconds is at most 5 microseconds only, whereas the clock offset estimation errors of successful synchronization sessions are at sub-millisecond \cite{viswanathan2016exploiting,rabadi2017taming} or milliseconds levels \cite{yan2017}. Thus, the small variation of the clock offset during a synchronization session can be viewed as a nearly negligible part of the clock offset estimation error, where the latter is further much smaller than the synchronization faults as discussed in \sect\ref{subsubsec:fault-definition-discussion}. Therefore, we can safely ignore the variation of clock offset in studying the resilience of NCS against synchronization faults.}

\section{\blue $K$-Resilience Analysis for Example Networks}
\label{sec:analysis}

{\blue In this section, we present the vectorization of Eq.~(\ref{eq:error-correction}) to facilitate our analysis (\sect\ref{subsec:vectorization}) and then analyze the $K$-resilience for several small-scale networks with complete NCS graphs (\sect\ref{subsec:manual-check}). The analysis illustrates the challenges in the general analysis of the $K$-resilience for any network, but also provides guiding insights. Lastly, we provide a set of simulation results to show the impact of non-faulty synchronization errors on the NCS algorithm (\sect\ref{subsec:simulations}).}

\subsection{Vectorization}
\label{subsec:vectorization}

We vectorize the representation of Eq.~(\ref{eq:error-correction}) that is solved by Line~\ref{line:solve} of Algorithm~\ref{alg:error-correction}. Define $\hat{\vecg{\delta}} \in \mathbb{R}^{N-1}$ composed of all clock offset estimates, i.e., $\hat{\vecg{\delta}} = \left(\hat{\delta}_{10}, \hat{\delta}_{20}, \ldots, \hat{\delta}_{(N-1)0}\right)^\intercal$. Define $\hat{\vec{e}} \in \mathbb{R}^k$ composed of the $k$ p2p synchronization fault estimates. Eq.~(\ref{eq:error-correction}) can be rewritten as $
  \left( \mat{A}_1 \mat{A}_2 \right)
  \left(
  \begin{array}{c}
    \hat{\vecg{\delta}} \\
    \hat{\vec{e}}
  \end{array}
\right) = \vec{b}
$,
where {\blue $\mat{A}_1 \in \mathbb{R}^{|E| \times (N-1)}$ and $\mat{A}_2 \in \mathbb{R}^{|E| \times k}$} are two matrices composed of $-1$, $0$, and $1$ containing coefficients corresponding to $\hat{\delta}_{\cdot 0}$ and $\hat{e}_{\cdot\cdot}$, respectively;
the vector {\blue $\vec{b} \in \mathbb{R}^{|E|}$} consists of all the measured clock offsets. To simplify notation, we define $\mat{A} = \left( \mat{A}_1 \mat{A}_2 \right)$ and $\vec{x} = \left( \begin{array}{c}
    \hat{\vecg{\delta}} \\
    \hat{\vec{e}}
  \end{array} \right)$. {\blue The $\mat{A}\vec{x}=\vec{b}$ is called {\em NCS equation system}.} From the Rouch\'{e}-Capelli theorem \cite{Shafarevich2012}, the necessary and sufficient condition that $\mat{A}\vec{x} = \vec{b}$ has no solutions is $\mathrm{rank}(\mat{A} | \vec{b}) \neq \mathrm{rank}(\mat{A})$, where $\mat{A} | \vec{b}$ is the augmented matrix.

\subsection{\blue $K$-Resilience of Small-Scale Networks}
\label{subsec:manual-check}

This section presents the analysis on the $K$-resilience of {\blue several small-scale networks} with complete NCS graphs.

\begin{proposition}
  {\blue A 3-node network with a complete NCS graph is not 1-resilient.}
  \label{exmp:1}
\end{proposition}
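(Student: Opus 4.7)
\medskip

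\noindent\textbf{Proof proposal.} My plan is to exhibit an explicit single-fault scenario in which Algorithm~\ref{alg:error-correction} finds a solution at $k=1$ whose fault distribution differs from the actual one, thereby violating condition (2)-(b) of Definition~\ref{def:resilience-condition}. Concretely, I would take the three-node network with $V=\{n_0,n_1,n_2\}$ and $E=\{n_0\!\leftrightarrow\! n_1,\, n_0\!\leftrightarrow\! n_2,\, n_1\!\leftrightarrow\! n_2\}$, and suppose that the single actual fault occurs on $n_1\!\leftrightarrow\! n_2$ with fault value $e_{12}\neq 0$. Then the three measurements are
\begin{equation*}
\widetilde{\delta}_{10}=\delta_{10},\qquad \widetilde{\delta}_{20}=\delta_{20},\qquad \widetilde{\delta}_{12}=\delta_{10}-\delta_{20}+e_{12}.
\end{equation*}

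The next step is to walk through the iterations of Algorithm~\ref{alg:error-correction}. For $k=0$, Eq.~(\ref{eq:error-correction}) reduces to $\hat{\delta}_{10}=\widetilde{\delta}_{10}$, $\hat{\delta}_{20}=\widetilde{\delta}_{20}$, and $\hat{\delta}_{10}-\hat{\delta}_{20}=\widetilde{\delta}_{12}$; substituting the first two into the third gives $e_{12}=0$, a contradiction, so the algorithm advances to $k=1$. At $k=1$ I would enumerate the three possible single-edge distributions of the estimated fault. The case that places $\hat{e}_{12}$ on the truly faulty edge yields the correct unique solution, as expected. The key step is to show that placing the estimated fault on $n_0\!\leftrightarrow\! n_1$ instead yields the alternative consistent solution
\begin{equation*}
\hat{\delta}_{10}=\delta_{10}+e_{12},\qquad \hat{\delta}_{20}=\delta_{20},\qquad \hat{e}_{10}=-e_{12},
\end{equation*}
which satisfies all three equations (and by symmetry a third consistent solution exists with the estimated fault on $n_0\!\leftrightarrow\! n_2$).

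Because two distinct distributions of the $k=1$ estimated fault produce valid solutions to Eq.~(\ref{eq:error-correction}) under a single actual fault distribution, condition (2)-(b) of Definition~\ref{def:resilience-condition} is violated. I would then conclude via the necessity part of Definition~\ref{def:resilience-condition} that the network is not $1$-resilient: in particular, the order in which Algorithm~\ref{alg:error-correction} enumerates the distributions in the \texttt{for} loop at Line~\ref{line:foreach} may cause it to return the incorrect estimate $\hat{\delta}_{10}=\delta_{10}+e_{12}\neq\delta_{10}$.

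There is no real obstacle in this proof; it is purely a small worked example. The only conceptual point worth flagging is that the result is \emph{not} merely that the algorithm \emph{may} err depending on enumeration order, but that $K$-resilience as defined requires success for \emph{any} fault distribution, which the existence of a spurious alternative solution already rules out. Equivalently, one could package the argument via the vectorization in \sect\ref{subsec:vectorization}: the $3\times 3$ coefficient matrix corresponding to the wrong guess on $n_0\!\leftrightarrow\! n_1$ has full rank, so the augmented system is consistent regardless of $\vec b$, providing a clean certificate that Rouch\'e--Capelli yields a solution in the wrong branch.
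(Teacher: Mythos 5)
Your proposal is correct and follows essentially the same route as the paper: the same actual-fault scenario on $n_1 \leftrightarrow n_2$, the same observation that $k=0$ fails, and the same spurious consistent solution at $k=1$ with the estimated fault placed on $n_0 \leftrightarrow n_1$ (your $\hat{\delta}_{10}=\delta_{10}+e_{12}$, $\hat{e}_{10}=-e_{12}$ matches the paper's solution up to the sign convention $e_{12}=-e_{21}$), which violates condition (2)-(b) of Definition~\ref{def:resilience-condition}. The only cosmetic difference is that the paper verifies consistency via the Rouch\'e--Capelli rank argument on the vectorized system, while you substitute directly; both are equivalent.
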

\begin{proof}
  Consider a case where the p2p synchronization session $n_1 \leftrightarrow n_2$ is faulty. When $k=0$ in Algorithm~\ref{alg:error-correction}, the vectorized equation system in Eq.~(\ref{eq:error-correction}) is
  \begin{equation*}
    \begin{array}{cccc}
      \left(
        \begin{array}{cc}
          1 & 0 \\
          0 & 1 \\
          -1 & 1
        \end{array}
      \right)
      &
      \left(
        \begin{array}{c}
          \hat{\delta}_{10} \\
          \hat{\delta}_{20}
        \end{array}
      \right)
      &
      =
      &
      \left(
        \begin{array}{c}
          \delta_{10} \\
          \delta_{20} \\
          \delta_{20} - \delta_{10} + e_{21}
        \end{array}
      \right).
      \\
      \Uparrow & \Uparrow & & \Uparrow \\
      \mat{A} & \vec{x} &  & \vec{b}
    \end{array}
  \end{equation*}
  Note that $\mat{A}_2$ and $\hat{\vec{e}}$ are empty. With $e_{21} \neq 0$, Gaussian elimination shows that $\mathrm{rank}(\mat{A} | \vec{b}) \neq \mathrm{rank}(\mat{A})$. Thus, the equation system has no solutions and Algorithm~\ref{alg:error-correction} will move on to the case of $k=1$. The algorithm will attempt to test all the {\blue ${|E| \choose k}={3 \choose 1}=3$} possible cases of a single faulty p2p synchronization session. For instance, when the algorithm assumes that $n_0 \leftrightarrow n_1$ is faulty, the NCS equation system is
    \begin{equation*}
    \left(
      \begin{array}{ccc}
        1 & 0 & 1\\
        0 & 1 & 0 \\
        -1 & 1 & 0
      \end{array}
    \right)
    \left(
      \begin{array}{c}
        \hat{\delta}_{10} \\
        \hat{\delta}_{20} \\
        \hat{e}_{10}
      \end{array}
    \right)
    =
    \left(
      \begin{array}{c}
        \delta_{10} \\
        \delta_{20} \\
        \delta_{20} - \delta_{10} + e_{21}
      \end{array}
    \right).
  \end{equation*}
  With $e_{21} \neq 0$, we have $\mathrm{rank}(\mat{A}|\vec{b}) = \mathrm{rank}(\mat{A})$ and $\mat{A}$ has full column rank. Thus, the NCS equation system has a unique solution. Therefore, the condition (2)-(b) of Definition~\ref{def:resilience-condition} is not satisfied and the network is not 1-resilient. The unique solution is $\{ \hat{\delta}_{10} = \delta_{10} - e_{21}, \hat{\delta}_{20} = \delta_{20}, \hat{e}_{10} = e_{21} \}$, which gives wrong clock offset estimates.
\end{proof}

\begin{proposition}
  {\blue A 4-node network with a complete NCS graph is 1-resilient.}
  \label{exmp:2}
\end{proposition}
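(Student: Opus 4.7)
The plan is to verify the two parts of the $K$-resilience condition of Definition~\ref{def:resilience-condition} for $K=1$. By the symmetry of the complete 4-node graph under relabeling (the equational structure is identical whether or not the faulty edge is incident to the reference node), I would without loss of generality take the single actual faulty session to be $n_1 \leftrightarrow n_2$, with $e_{21}\ne 0$.

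I would first handle condition (1), with $k=0$. The NCS equation system then has six equations in the three unknowns $\hat{\delta}_{10},\hat{\delta}_{20},\hat{\delta}_{30}$. The five equations on the non-faulty edges form a connected spanning subgraph on the four vertices (specifically, the complete graph minus the edge $n_1 \leftrightarrow n_2$), so they are consistent and uniquely pin down $\hat{\vecg{\delta}}=\vecg{\delta}$. Substituting into the sixth (actually faulty) equation would give $\delta_{20}-\delta_{10}=\delta_{20}-\delta_{10}+e_{21}$, i.e., $e_{21}=0$, a contradiction. Hence $\mathrm{rank}(\mat{A}\,|\,\vec{b})\ne \mathrm{rank}(\mat{A})$ and the system has no solution.

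Next, for condition (2)-(a) with $k=1$ and the assumed faulty session equal to the actual one $n_1 \leftrightarrow n_2$, the variable $\hat{e}_{21}$ appears in exactly one of the six equations. The remaining five equations correspond to non-faulty edges forming a connected spanning subgraph on the four vertices and therefore uniquely yield $\hat{\vecg{\delta}}=\vecg{\delta}$; the equation on $n_1 \leftrightarrow n_2$ then determines $\hat{e}_{21}=e_{21}$. Thus $\mat{A}$ has full column rank $4$ and the NCS equation system admits the unique, correct solution.

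Finally, for condition (2)-(b) with $k=1$ and the assumed faulty session being some edge other than $n_1 \leftrightarrow n_2$, the four equations on the non-faulty edges that are not assumed faulty involve only $\hat{\vecg{\delta}}$. The crux is the graph-theoretic fact that the complete graph on four vertices is $3$-edge-connected, so removing any two of its edges leaves a connected spanning subgraph; hence these four equations uniquely determine $\hat{\vecg{\delta}}=\vecg{\delta}$. Substituting into the remaining non-$\hat{e}$ equation, which is the one on the actually faulty edge $n_1 \leftrightarrow n_2$, yields $0=e_{21}\ne 0$, a contradiction, so the system has no solution. The main (and essentially only) obstacle is this edge-connectivity observation, which here is immediate by inspection but foreshadows the general edge-connectivity viewpoint exploited in Section~\ref{sec:lower-bounds}.
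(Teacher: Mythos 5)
Your proof is correct and complete. It is worth noting how it sits relative to the paper: the paper only sketches this proposition by exhaustively checking the rank conditions $\mathrm{rank}(\mat{A}|\vec{b})$ versus $\mathrm{rank}(\mat{A})$ for the ${6 \choose 1}$ assumed-fault cases, and defers the rigorous argument to Theorem~\ref{thm:1}, whose proof goes through Lemmas~\ref{lemma:graph1} and~\ref{lemma:graph2} (connectivity of the fault-free NCS subgraph plus chain substitution along a traversal). Your argument is essentially that general proof specialized to $N=4$: the observation that $K_4$ is $3$-edge-connected, so deleting the at most $k+K\le 2$ edges carrying actual or assumed faults leaves a connected spanning subgraph, is exactly the $N=4$ instance of Lemma~\ref{lemma:graph2}, and your substitution into the remaining equations (unique correct $\hat{\vecg{\delta}}$, then either $\hat{e}_{21}=e_{21}$ or the contradiction $0=e_{21}\neq 0$) reproduces the three cases in Lemma~\ref{lemma:graph1}. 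The WLOG relabeling you invoke is harmless (and, as in the paper's Proposition~\ref{exmp:4}, legitimate since resilience is invariant under renaming and change of reference node), though your connectivity argument in fact works uniformly over all fault placements, so it is not even needed. Compared with the in-text sketch, your route avoids case enumeration and makes the edge-connectivity mechanism explicit, which is precisely the viewpoint the paper later develops in Theorem~\ref{thm:new1}.
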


We provide a sketch of the proof as follows instead of a complete proof for presentation conciseness. In fact, this proposition is a corollary of Theorem~\ref{thm:1} with a complete proof in \sect\ref{subsec:tight-bound}. Thus, the omission of the complete proof here does not cause loss of rigor. Consider a case where the p2p synchronization session $n_0 \leftrightarrow n_2$ is faulty. When $k=0$ in Algorithm~\ref{alg:error-correction},
similar to Proposition~\ref{exmp:1}, the NCS equation system has no solutions and Algorithm~\ref{alg:error-correction} will move on to the case of $k=1$. The algorithm will test all the {\blue ${|E| \choose k} = {6 \choose 1}=6$} possible cases of a single faulty p2p synchronization session. For instance, when the algorithm assumes $n_0 \leftrightarrow n_1$ is faulty, the NCS equation system is
\begin{equation}
\left(
\begin{array}{cccc}
1 & 0 & 0 & 1 \\
0 & 1 & 0 & 0 \\
0 & 0 & 1 & 0 \\
-1 & 1 & 0 & 0 \\
0 & -1 & 1 & 0 \\
-1 & 0 & 1 & 0
\end{array}
\right)
\left(
\begin{array}{c}
\hat{\delta}_{10}\\
\hat{\delta}_{20}\\
\hat{\delta}_{30}\\
\hat{e}_{10}
\end{array}
\right)
=
\left(
\begin{array}{c}
\delta_{10} \\
\delta_{20} + e_{20} \\
\delta_{30} \\
\delta_{20} - \delta_{10} \\
\delta_{30} - \delta_{20} \\
\delta_{30} - \delta_{10}
\end{array}
\right).
\label{eq:4node-system0}
\end{equation}
As $\mathrm{rank}(\mat{A} | \vec{b}) \neq \mathrm{rank}(\mat{A})$, the NCS equation system has no solutions. An exhaustive check shows that, only when the algorithm assumes the synchronization session between $n_0$ and $n_2$ is faulty, the NCS equation system has a unique solution (i.e., $\mathrm{rank}(\mat{A} | \vec{b}) = \mathrm{rank}(\mat{A})$ and $\mat{A}$ has full column rank). Thus, the algorithm can correct the fault. In fact, it can be verified that, for the {\blue complete 4-node NCS graph}, no matter which p2p synchronization session is faulty, the algorithm can correct the fault. Therefore, the 4-node system is 1-resilient.

\begin{proposition}
  {\blue A 4-node network with a complete NCS graph is not 2-resilient.}
  \label{exmp:3}
\end{proposition}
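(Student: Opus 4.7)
The plan is to establish the negation of 2-resilience by producing a single 2-fault scenario in which Algorithm~\ref{alg:error-correction} halts before reaching $k=2$ with a spurious, incorrect clock-offset estimate. By Definition~\ref{def:resilience-condition}, it suffices to violate condition~(1) for some choice of actual faults, i.e., to exhibit a distribution of $K=2$ actual faults together with a distribution of $k<2$ estimated faults for which the corresponding variant of Eq.~(\ref{eq:error-correction}) is consistent yet yields $\hat{\delta}_{j0}\neq\delta_{j0}$ for some $j$.

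The concrete construction I would use places the two faults on the reference-incident sessions $n_0\leftrightarrow n_1$ and $n_0\leftrightarrow n_2$ with equal magnitude $e\neq 0$, so that $\widetilde{\delta}_{10}=\delta_{10}+e$ and $\widetilde{\delta}_{20}=\delta_{20}+e$ while the other four measurements coincide with the true offsets. The driving intuition is a hidden symmetry of the equation system: if one simultaneously shifts $\hat{\delta}_{10}$, $\hat{\delta}_{20}$, and $\hat{\delta}_{30}$ upward by $+e$, then the three non-reference cross-edge equations $n_i\leftrightarrow n_j$ (with $i,j\neq 0$) remain satisfied because they depend only on the differences $\hat{\delta}_{i0}-\hat{\delta}_{j0}$; the two faulty reference measurements are matched exactly by the shift; and only the measurement on the remaining reference edge $n_0\leftrightarrow n_3$ develops a residual of $-e$, which a single assumed fault $\hat{e}_{30}=-e$ can absorb.

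The argument would then proceed in two quick rank-computation steps along the lines of Proposition~\ref{exmp:1}. First, I would verify that the $k=0$ variant is inconsistent by noting that $\widetilde{\delta}_{10}$, $\widetilde{\delta}_{30}$, and $\widetilde{\delta}_{31}$ combine to force $-e=0$. Second, at $k=1$ I would single out the distribution that assumes the single fault lies on $n_0\leftrightarrow n_3$ and verify by direct substitution that $(\hat{\delta}_{10},\hat{\delta}_{20},\hat{\delta}_{30},\hat{e}_{30})=(\delta_{10}+e,\delta_{20}+e,\delta_{30}+e,-e)$ satisfies all six NCS equations; since every clock-offset estimate is shifted by $+e$ from the truth, Algorithm~\ref{alg:error-correction} terminates with incorrect clock offsets and the network is not 2-resilient. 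The main obstacle, and the only non-mechanical step, is selecting the right 2-fault configuration: an arbitrary pair of faulty sessions need not admit a lower-$k$ spurious solution, and the key conceptual observation is that a uniform shift of all three non-reference clock-offset estimates is a symmetry of the three non-reference cross-edge equations, which lets two equal-magnitude faults on reference edges masquerade as a single fault on the remaining reference edge.
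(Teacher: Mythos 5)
Your proposal is correct and follows essentially the same route as the paper: it uses the same counterexample (equal-valued faults on $n_0\leftrightarrow n_1$ and $n_0\leftrightarrow n_2$, with the algorithm at $k=1$ assuming a fault on $n_0\leftrightarrow n_3$) and exhibits the same spurious solution $\{\hat{\delta}_{j0}=\delta_{j0}+e,\ \hat{e}_{30}=-e\}$, concluding non-2-resilience from the violation of the resilience condition. The only cosmetic difference is that the paper starts with general $e_{10},e_{20}$ and then specializes to $e_{10}=e_{20}$, whereas you fix the equal-fault case from the outset, which is equally valid since a single uncorrectable fault instance suffices.
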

\begin{proof}
  Consider the 4-node network with two faulty p2p synchronization sessions: $n_0 \leftrightarrow n_1$ and $n_0 \leftrightarrow n_2$. When $k=0$, the equation system has no solutions. When $k=1$, consider a case where $n_0 \leftrightarrow n_3$ is assumed to be faulty by the algorithm. The NCS equation system is
  \begin{equation}
    \left(
      \begin{array}{cccc}
        1 & 0 & 0 & 0 \\
        0 & 1 & 0 & 0 \\
        0 & 0 & 1 & 1 \\
        -1 & 1 & 0 & 0 \\
        -1 & 0 & 1 & 0 \\
        0 & -1 & 1 & 0
      \end{array}
    \right)
    \left(
      \begin{array}{c}
        \hat{\delta}_{10} \\
        \hat{\delta}_{20} \\
        \hat{\delta}_{30} \\
        \hat{e}_{30}
      \end{array}
    \right)
    =
    \left(
      \begin{array}{c}
        \delta_{10} + e_{10} \\
        \delta_{20} + e_{20} \\
        \delta_{30} \\
        \delta_{20} - \delta_{10} \\
        \delta_{30} - \delta_{10} \\
        \delta_{30} - \delta_{20}
      \end{array}
    \right).
    \label{eq:4node-system1}
  \end{equation}
  If $e_{10} \neq e_{20}$, $\mathrm{rank}(\mat{A}|\vec{b}) \neq \mathrm{rank}(\mat{A})$ and the equation system has no solutions. However, if $e_{10} = e_{20}$, $\mathrm{rank}(\mat{A} | \vec{b}) = \mathrm{rank}(\mat{A})$ and $\mat{A}$ has full column rank; the equation system has a unique solution of $\{\hat{\delta}_{10} = \delta_{10} + e_{10}, \hat{\delta}_{20} = \delta_{20} + e_{10},  \hat{\delta}_{30} = \delta_{30} + e_{10}, \hat{e}_{30} = - e_{10}\}$, which gives wrong clock offset estimates. Although this counterexample against the 4-node network's 2-resilience is obtained under a certain condition of $e_{10} = e_{20}$, we can conclude that the 4-node network is not 2-resilient.
  \end{proof}

To gain more insights, we also analyze a case of $k=2$ with $n_0 \leftrightarrow n_1$ and $n_0 \leftrightarrow n_3$ assumed to be faulty by the algorithm. The NCS equation system is
  \begin{equation}
      \left(
      \begin{array}{ccccc}
        1 & 0 & 0 & 1 & 0 \\
        0 & 1 & 0 & 0 & 0 \\
        0 & 0 & 1 & 0 & 1 \\
        -1 & 1 & 0 & 0 & 0 \\
        -1 & 0 & 1 & 0 & 0 \\
        0 & -1 & 1 & 0 & 0
      \end{array}
    \right)
    \left(
      \begin{array}{c}
        \hat{\delta}_{10} \\
        \hat{\delta}_{20} \\
        \hat{\delta}_{30} \\
        \hat{e}_{10} \\
        \hat{e}_{30}
      \end{array}
    \right)
    =
    \left(
      \begin{array}{c}
        \delta_{10} + e_{10} \\
        \delta_{20} + e_{20} \\
        \delta_{30} \\
        \delta_{20} - \delta_{10} \\
        \delta_{30} - \delta_{10} \\
        \delta_{30} - \delta_{20}
      \end{array}
    \right).
    \label{eq:4node-system2}
\end{equation}
As $\mathrm{rank}(\mat{A} | \vec{b}) = \mathrm{rank}(\mat{A})$ and $\mat{A}$ has full column rank, the equation system has a unique solution, which violates the 2-resilience condition. In fact, the equation system has a unique solution that  gives wrong clock offset estimates and does not require any relationship between $e_{10}$ and $e_{20}$. This solution is $\{\hat{\delta}_{10} = \delta_{10} + e_{20}, \hat{\delta}_{20} = \delta_{20} + e_{20}, \hat{\delta}_{30} = \delta_{30} + e_{20}, \hat{e}_{10} = e_{10} - e_{20}, \hat{e}_{30} = -e_{20}\}$.

\begin{proposition}
  {\blue A 5-node network with a complete NCS graph is 1-resilient.}
  \label{exmp:4}
\end{proposition}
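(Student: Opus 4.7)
My strategy mirrors the sketch given for Proposition~\ref{exmp:2}. The complete NCS graph on five nodes has $|E|=\binom{5}{2}=10$ edges. Up to relabeling the non-reference nodes, every single-edge fault falls into one of two symmetry classes: (i) the faulty edge is incident to the reference node $n_0$, for which a representative is $n_0\leftrightarrow n_1$; or (ii) the faulty edge connects two non-reference nodes, for which a representative is $n_1\leftrightarrow n_2$. For each representative, I would write out the NCS equation system $\mat{A}\vec{x}=\vec{b}$ and verify the $K$-resilience condition of Definition~\ref{def:resilience-condition} with $K=1$.

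For $k=0$, each representative equation system has no solution. The nine non-faulty equations contain a spanning star at $n_0$, so they already pin $\hat{\vecg{\delta}}$ down to the true clock offsets; the remaining equation, corresponding to the faulty session, then contradicts this choice because $e_{ij}\neq 0$. Formally this is shown by observing $\mathrm{rank}(\mat{A}|\vec{b})=\mathrm{rank}(\mat{A})+1$ via Gaussian elimination, exactly as in the $k=0$ step of Proposition~\ref{exmp:1}.

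For $k=1$, I would enumerate the ten possible placements of the single assumed fault. In each instance $\mat{A}\in\mathbb{R}^{10\times 5}$ has full column rank $5$, because the star at $n_0$ already contributes an identity block $\mat{I}_4$ inside $\mat{A}_1$ and the single appended column from $\mat{A}_2$ is easily seen to be independent of the existing columns. The key check is therefore consistency: I expect $\mathrm{rank}(\mat{A}|\vec{b})=\mathrm{rank}(\mat{A})$ to hold only when the assumed fault location coincides with the actual one, in which case the unique solution is the true assignment $\{\hat\delta_{j0}=\delta_{j0},\,\hat e_{ij}=e_{ij}\}$. When the assumed location is wrong, one of the six triangle-closure equations $\hat\delta_{i0}-\hat\delta_{j0}+\hat e_{ij}=\widetilde\delta_{ij}$ among the remaining non-faulty edges is violated by a nonzero constant, producing the required rank mismatch without any algebraic relationship among the $e$'s (in contrast with the degenerate scenario encountered in Proposition~\ref{exmp:3}).

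The main obstacle is pure bookkeeping: roughly twenty rank checks (two representative actual faults times ten assumed-fault placements) are routine but tedious by hand. A cleaner route, and the one I would actually adopt for presentation, is to invoke the tight bound $\left\lfloor N/2\right\rfloor -1 = 1$ for $N=5$ established later in Theorem~\ref{thm:1} of \sect\ref{sec:fully}; Proposition~\ref{exmp:4} is then immediate as a corollary, and the explicit case analysis above serves only to build intuition for the general argument.
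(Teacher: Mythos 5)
Your proposal is correct and takes essentially the same route as the paper: a symmetry-reduced exhaustive verification of the $1$-resilience condition via rank checks on the NCS equation systems (the paper collapses the cases further to the single representative faulty session $n_0 \leftrightarrow n_1$ by noting that resilience is independent of how the nodes are named), together with the observation that the statement is ultimately a corollary of Theorem~\ref{thm:1}, whose tight bound gives $\left\lfloor 5/2 \right\rfloor - 1 = 1$. One small correction to your side remark: when the actual fault is on an edge incident to $n_0$, the nine non-faulty equations do \emph{not} contain a spanning star at $n_0$; the $k=0$ inconsistency in that case follows instead from connectivity of the fault-free subgraph (the complete graph minus one edge), which still forces $\hat{\vecg{\delta}}=\vecg{\delta}$ and hence contradicts the faulty equation since $e_{ij}\neq 0$.
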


We provide a sketch of the proof as follows instead of a complete proof due to space limit. This proposition is in fact a corollary of Theorem~\ref{thm:1} with a complete proof. Thus, the omission here does not cause loss of rigor. Consider a 5-node network with one p2p synchronization fault. The resilience is independent from how we name the nodes. We name the two nodes involved in the faulty synchronization session as $n_0$ and $n_1$. An exhaustive check over all the {\blue ${|E| \choose k} = {10 \choose 1}$ = 10} possible cases for a single assumed faulty synchronization session shows that the 1-resilience condition is satisfied. Thus, the 5-node network is 1-resilient.

\begin{proposition}
  {\blue A 5-node network with a complete NCS graph is not 2-resilient.}
  \label{exmp:5}
\end{proposition}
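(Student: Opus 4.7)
The plan is to construct an explicit 2-fault counterexample that violates condition (2)(b) of the $K$-resilience condition (Definition~\ref{def:resilience-condition}) at $K=2$, in direct analogy with the construction in Proposition~\ref{exmp:3} for the 4-node case. I take the two actual faults to lie on $n_0 \leftrightarrow n_1$ and $n_0 \leftrightarrow n_2$, with equal error magnitudes $e_{10} = e_{20} = e \neq 0$. The guiding idea is a uniform \emph{shift} of all non-reference clock offset estimates: setting $\hat{\delta}_{j0} = \delta_{j0} + e$ for $j = 1, 2, 3, 4$ leaves every one of the six edges internal to $\{n_1, n_2, n_3, n_4\}$ automatically satisfied (common shifts cancel in pairwise differences), while the two perturbed measurements $\widetilde{\delta}_{10}$ and $\widetilde{\delta}_{20}$ are matched by the shift itself. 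The only remaining discrepancy, confined to the (correct) edges $n_0 \leftrightarrow n_3$ and $n_0 \leftrightarrow n_4$, can be absorbed by fake fault estimates $\hat{e}_{30} = \hat{e}_{40} = -e$.

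This suggests analyzing Algorithm~\ref{alg:error-correction} at $k=2$ under the wrong assumed distribution $\{n_0 \leftrightarrow n_3,\, n_0 \leftrightarrow n_4\}$. I will write out the resulting $10 \times 6$ NCS equation system $\mat{A}\vec{x} = \vec{b}$ explicitly, in the style of the proofs of Propositions~\ref{exmp:1} and \ref{exmp:3}, and verify two things: first, the tuple $(\hat{\delta}_{10}, \hat{\delta}_{20}, \hat{\delta}_{30}, \hat{\delta}_{40}, \hat{e}_{30}, \hat{e}_{40}) = (\delta_{10}+e,\, \delta_{20}+e,\, \delta_{30}+e,\, \delta_{40}+e,\, -e,\, -e)$ satisfies every equation; second, $\mat{A}$ has full column rank~$6$ (for which it suffices to exhibit six rows -- e.g.\ those corresponding to $n_0 \leftrightarrow n_1$, $n_0 \leftrightarrow n_2$, $n_1 \leftrightarrow n_3$, $n_1 \leftrightarrow n_4$, $n_0 \leftrightarrow n_3$, $n_0 \leftrightarrow n_4$ -- whose restriction can be triangularized to yield six pivots), so that by the Rouch\'{e}-Capelli theorem this wrong tuple is the unique solution. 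Since the assumed distribution differs from the actual one yet yields a valid solution, condition (2)(b) of Definition~\ref{def:resilience-condition} is violated, and by the necessity half of that condition the 5-node network is not 2-resilient.

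A convenient byproduct is that I do not need to separately examine the $k=0$ or $k=1$ iterations of the algorithm: failure of condition (2)(b) alone already rules out 2-resilience. The main obstacle is spotting the right structural counterexample: for $N=4$ the same ``shift the entire non-reference set'' idea worked with only one non-faulty cut edge needing a fake fault, whereas for $N=5$ the $n_0$-cut now has four edges, so after placing two real faults we need \emph{two} fake faults on the remaining two cut edges, which is exactly what the wrong $k=2$ distribution provides. A minor subtlety, shared with Proposition~\ref{exmp:3}, is that the counterexample relies on the coincidence $e_{10} = e_{20}$; this does not weaken the conclusion, because Definition~\ref{def:resilience} demands resilience against \emph{every} choice of fault values, so exhibiting a single bad choice suffices.
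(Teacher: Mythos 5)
Your proposal is correct and follows essentially the same route as the paper: exhibit one actual-fault configuration with coincidentally related fault values together with a mismatched assumed distribution at $k=K=2$ for which the NCS equation system admits a solution giving wrong clock offsets, thereby violating condition (2)-(b) of Definition~\ref{def:resilience-condition}. The paper's counterexample isolates $n_1$ instead of $n_0$ (actual faults on $n_0 \leftrightarrow n_1$ and $n_1 \leftrightarrow n_4$ with $e_{10}=-e_{41}$, assumed faults on $n_1 \leftrightarrow n_2$ and $n_1 \leftrightarrow n_3$), but this differs from your construction only by relabeling within the same cut-based idea, so the approaches are essentially identical.
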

\begin{proof}
  We consider a 5-node network, in which (i) the p2p synchronization sessions $n_0 \leftrightarrow n_1$ and $n_1 \leftrightarrow n_4$ are faulty and (ii) the p2p synchronization sessions $n_1 \leftrightarrow n_2$ and $n_1 \leftrightarrow n_3$ are assumed by the algorithm to be faulty. The NCS equation system is
  \begin{equation}
    \left(
      \begin{array}{cccccc}
        1 & 0 & 0 & 0 & 0 & 0 \\
        0 & 1 & 0 & 0 & 0 & 0 \\
        0 & 0 & 1 & 0 & 0 & 0 \\
        0 & 0 & 0 & 1 & 0 & 0 \\
        -1 & 1 & 0 & 0 & 1 & 0 \\
        -1 & 0 & 1 & 0 & 0 & 1 \\
        -1 & 0 & 0 & 1 & 0 & 0 \\
        0 & -1 & 1 & 0 & 0 & 0 \\
        0 & -1 & 0 & 1 & 0 & 0 \\
        0 & 0 & -1 & 1 & 0 & 0
      \end{array}
    \right)
    \left(
      \begin{array}{c}
        \hat{\delta}_{10} \\
        \hat{\delta}_{20} \\
        \hat{\delta}_{30} \\
        \hat{\delta}_{40} \\
        \hat{e}_{21} \\
        \hat{e}_{31}
      \end{array}
    \right)
    =
    \left(
      \begin{array}{c}
        \delta_{10} + e_{10} \\
        \delta_{20} \\
        \delta_{30} \\
        \delta_{40} \\
        \delta_{20} - \delta_{10} \\
        \delta_{30} - \delta_{10} \\
        \delta_{40} - \delta_{10} + e_{41} \\
        \delta_{30} - \delta_{20} \\
        \delta_{40} - \delta_{20} \\
        \delta_{40} - \delta_{30}
      \end{array}
    \right).
    \label{eq:5node-system}
  \end{equation}
If $e_{10} = -e_{41}$, the equation system has a unique solution of $\{\hat{\delta}_{10} = \delta_{10} + e_{10}, \hat{\delta}_{20} = \delta_{20}, \hat{\delta}_{30} = \delta_{30}, \hat{\delta}_{40} = \delta_{40}, \hat{e}_{21} = e_{10}, \hat{e}_{31} = e_{10} \}$, which violates the resilience condition. Thus, a 5-node network is not 2-resilient.
\end{proof}

{\blue In the proofs of Propositions~\ref{exmp:1}, \ref{exmp:3}, and \ref{exmp:5}, we adopt an approach of enumerating counterexamples to prove that a network is not $K$-resilient. In the proofs of Propositions~\ref{exmp:3} and \ref{exmp:5}, if the actual faults satisfy certain conditions, the rank of $\mat{A}|\vec{b}$ may change, presenting a pitfall to the approach of enumerating counterexamples. This is a challenge in pursuing the general analysis for $K$-resilience. {\yellow To address this challenge, in} \sect\ref{sec:fully}, we will introduce a fault-free NCS subgraph method to analyze the tight bound of complete NCS graphs.}

\subsection{Simulations with Non-Faculty Synchronization Errors}
\label{subsec:simulations}

We conduct simulations to evaluate the impact of the non-faculty synchronization errors discussed in \sect\ref{subsubsec:fault-definition-discussion} on the performance of Algorithm~\ref{alg:error-correction}. The simulations are for the small-scale example network topologies analyzed in \sect\ref{subsec:manual-check}. The p2p synchronization session follows a general error pattern of $e = \epsilon + x \cdot F$, where $\epsilon$ is a Gaussian noise following the standard normal distribution, the absolute value of the fault $F$ is uniformly distributed within the range of [2, 8], the binary cofficient $x$ is randomly sampled from $\{0, 1\}$. The synchronization error with $x=1$ is viewed as a fault. The clock offset of each node with respect to the reference node is randomly and uniformly sampled from $[-10, 10]$. We use the least squares approach in Eq.~(\ref{eq:1}) to generate the candidate solution of the NCS equation system. As discussed in \sect\ref{subsubsec:fault-definition-discussion}, we apply a threshold of $\eta=2$ to check the residual of each equation of the NCS equation system to decide whether a candidate solution is a valid solution. With the setting of $\eta=2$, the probability of misclassifying a non-faulty synchronization error as a fault is $\Pr(|\epsilon| \ge 2) = 0.046$. For each network, we simulate a number of cases with different numbers of faults. For each case, we report  the mean-square error (MSE) of the estimated clock offsets of all the nodes and whether the distribution of the estimated p2p synchronization faults is identical to the distribution of the actual faults.

Tables~\ref{table:3node},~\ref{table:4node}, and~\ref{table:5node} show the simulation results for the 3-node, 4-node, and 5-node networks, respectively. From Table~1, we can see that Algorithm~1 can opportunistically correct one fault, in which the distributions of the estimated faults and actual faults are identical and the MSE is small. When the distributions of the estimated p2p synchronization faults and the actual faults are not identical, the MSE of the estimated clock offsets is large. This means that Algorithm~\ref{alg:error-correction} cannot correct the faults. From Table~1, Algorithm~\ref{alg:error-correction} cannot correct more than one fault. This result is consistent with Proposition~1. Note that we have explained in \sect\ref{subsec:problem} that, for a network that is not $K$-resilient, Algorithm~\ref{alg:error-correction} may be able to correct $K$ faults with a specific distribution that happens to be identical to the first attempted fault distribution in Algorithm~\ref{alg:error-correction}. From Tables~2 and 3, we can see that Algorithm~\ref{alg:error-correction} can always correct one fault and opportunistically correct two faults. This result is consistent with Propositions~2,~3,~4, and~5. From this set of simulation results, we can see that our analytic results provide good understanding for the scenarios with non-faulty synchronization errors.

\begin{table}[ht]
	\caption{NCS results of a $3$-node network with non-faulty synchronization errors.} 
	\centering 
		\begin{minipage}{\columnwidth}
		\centering
	\begin{tabular}{c c c c} 
		\hline\hline 
		Case & Number of faults & Identical?* & MSE of estimated clock offsets \\ [1ex] 
		\hline 
		1 & 1 & yes & 2.8721 \\ 
		2 & 1 & no & 14.5525  \\
		3 & 1 & no & 23.4256 \\
		4 & 1 & no & 37.2651  \\
		5 & 2 & no  & 34.5164  \\
		6 & 2 & no  & 53.2659   \\
		7 & 2 & no  &  67.5983  \\
		8 & 3 & no  &  79.3514  \\ [1ex] 
		\hline 
	\end{tabular}

        {\footnotesize *This column indicates whether the distributions of estimated faults and actual faults are identical.}
\end{minipage}
    \centering
    
	\label{table:3node} 
\end{table}

\begin{table}[ht]
	\caption{NCS results of a $4$-node network with non-faulty synchronization errors.} 
	\centering 
	\begin{minipage}{\columnwidth}
		\centering
	\begin{tabular}{c c c c} 
		\hline\hline 
		Case & Number of faults & Identical?* & MSE of estimated clock offsets \\ [0.5ex] 
		\hline 
		1 & 1 & yes & 1.1329  \\ 
		2 & 1 & yes & 1.9569  \\
		3 & 1 & yes & 0.5390  \\
		4 & 1 & yes & 2.1312  \\
		5 & 2 & yes & 2.0863   \\
		6 & 2 & no  & 13.3885   \\
		7 & 2 & no  &  40.8532   \\
		8 & 3 & no  &  58.8646  \\ [1ex] 
		\hline 
	\end{tabular}

        {\footnotesize *This column indicates whether the distributions of estimated faults and actual faults are identical.}
\end{minipage}
	\label{table:4node} 
\end{table}

\begin{table}[ht]
	\caption{NCS results of a $4$-node network with non-faulty synchronization errors.} 
	\centering 
		\begin{minipage}{\columnwidth}
		\centering
	\begin{tabular}{c c c c} 
		\hline\hline 
		Case& Number of faults & Identical?*  & MSE of estimated clock offsets\\ [0.5ex] 
		\hline 
		1 & 1 & yes & 2.1296  \\ 
		2 & 1 & yes & 1.0586  \\
		3 & 1 & yes & 2.2766  \\
		4 & 2 & yes & 2.5198  \\
		5 & 2 & no  & 13.5984   \\
		6 & 2 & no  & 10.5354   \\
		7 & 3 & no  &19.8177   \\
		8 & 3 & no  &22.5147  \\ [1ex] 
		\hline 
	\end{tabular}

        {\footnotesize *This column indicates whether the distributions of estimated faults and actual faults are identical.}
\end{minipage}
	\label{table:5node} 
      \end{table}

\section{Tight Bound of Maximum Resilience of Any Network with Complete NCS Graph}
\label{sec:fully}

In this section, our analysis shows that the tight bound of maximum resilience of any $N$-node network with a complete NCS graph is $f_t(N) = \left\lfloor \frac{N}{2} \right\rfloor - 1$. Note that in this section we change the notation $f_t(G)$ defined in Definition~\ref{def:tight} to $f_t(N)$, because the complete NCS graph $G$ solely depends on $N$. In what follows, we introduce the {\em fault-free NCS subgraph} (\sect\ref{subsec:ff-graph}) and prove two lemmas (\sect\ref{subsec:tight-lemmas}). The lemmas will be used to prove the tight bound of maximum resilience (\sect\ref{subsec:tight-bound}).

\subsection{Fault-Free NCS Subgraph}
\label{subsec:ff-graph}

For a certain distribution of the estimated p2p synchronization faults among the $|E|$ sessions, we retain all the equations in Eq.~(\ref{eq:error-correction}) that contain neither estimated fault $\hat{e}_{ij}$ nor actual fault $e_{ij}$ to generate an equation subsystem $\mat{A}_s \hat{\vecg{\delta}} = \vec{b}_s$.
This equation subsystem corresponds to a fault-free NCS subgraph $G_s = (V, E_s)$, where each edge in $E_s$ represents a p2p synchronization session associated with neither estimated nor actual synchronization fault. The $G_s$ is a subgraph of the original complete NCS graph $G$.

For instance, to generate the $\mat{A}_s \hat{\vecg{\delta}} = \vec{b}_s$ of Eq.~(\ref{eq:5node-system}), we can remove the rows and columns of Eq.~(\ref{eq:5node-system}) as follows:
\begin{equation*}
      \left(
      \begin{array}{cccccc}
        \hcancel{1} & \hcancel{0} & \hcancel{0} & \hcancel{1} & \hcancel{0} \\
        \hcancel{0} & \hcancel{1} & \hcancel{0} & \hcancel{0} & \hcancel{0} \\
        \hcancel{0} & \hcancel{0} & \hcancel{1} & \hcancel{0} & \hcancel{1} \\
        -1 & 1 & 0 & \hcancel{0} & \hcancel{0} \\
        -1 & 0 & 1 & \hcancel{0} & \hcancel{0} \\
        0 & -1 & 1 & \hcancel{0} & \hcancel{0}
      \end{array}
    \right)
    \left(
      \begin{array}{c}
        \hat{\delta}_{10} \\
        \hat{\delta}_{20} \\
        \hat{\delta}_{30} \\
        \hcancel{\hat{e}_{10}} \\
        \hcancel{\hat{e}_{30}}
      \end{array}
    \right)
    =
    \left(
      \begin{array}{c}
        \hcancel{\delta_{10} + e_{10}} \\
        \hcancel{\delta_{20} + e_{20}} \\
        \hcancel{\delta_{30}} \\
        \delta_{20} - \delta_{10} \\
        \delta_{30} - \delta_{10} \\
        \delta_{30} - \delta_{20}
      \end{array}
    \right).
\end{equation*}
The first and the third rows of $\mat{A}$ and $\vec{b}$ are removed because they involve estimated faults $\hat{e}_{10}$ and $\hat{e}_{30}$. Specifically, the fourth element of $\mat{A}$'s first row that corresponds to $\hat{e}_{10}$ is 1; the last element of $\mat{A}$'s third row that corresponds to $\hat{e}_{30}$ is 1. The second row of $\mat{A}$ is removed because it involves the actual fault $e_{20}$  from the second row of $\vec{b}$. The last two columns of $\mat{A}$ are removed because we no longer have $\hat{e}_{10}$ and $\hat{e}_{30}$. The remainders form $\mat{A}_s \hat{\vecg{\delta}} = \vec{b}_s$, i.e.,
\begin{equation*}
\left(
\begin{array}{ccc}
-1 & 1 & 0\\
-1 & 0 & 1 \\
 0 & -1 & 1
\end{array}
\right)
\left(
\begin{array}{c}
\hat{\delta}_{10} \\
\hat{\delta}_{20} \\
\hat{\delta}_{30}
\end{array}
\right)
=
\left(
\begin{array}{c}
\delta_{20} - \delta_{10} \\
\delta_{30} - \delta_{10} \\
\delta_{30} - \delta_{20} 
\end{array}
\right).
\end{equation*}

\subsection{The Lemmas}
\label{subsec:tight-lemmas}

Following the convention of graph theory, we say an undirected graph is {\em connected} when there is a path between every pair of vertices. We have the following lemmas.

\begin{lemma}
  For a complete NCS graph $G = (V, E)$ and a certain distribution of the estimated p2p synchronization faults, if the fault-free NCS subgraph $G_s$ is connected, the NCS equation system $\mat{A}\vec{x}=\vec{b}$ has at most one solution.
  \label{lemma:graph1}
\end{lemma}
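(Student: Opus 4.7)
The plan is to prove the statement by showing that, whenever $G_s$ is connected, the coefficient matrix $\mat{A}$ has full column rank; the Rouch\'e--Capelli theorem invoked in \sect\ref{subsec:vectorization} then yields that $\mat{A}\vec{x}=\vec{b}$ has at most one solution (exactly one if consistent, zero otherwise). Equivalently, I will establish that the homogeneous system $\mat{A}\vec{x}=\vec{0}$ admits only the trivial solution, since this is the standard criterion for injectivity of the linear map defined by $\mat{A}$.

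I would split the unknowns into the clock-offset block $\hat{\vecg{\delta}}$ and the fault block $\hat{\vec{e}}$. The central observation is that every row of $\mat{A}$ arising from an edge of $G_s$ is, by construction, free of any fault variable (that is what ``fault-free edge'' means). Restricting $\mat{A}\vec{x}=\vec{0}$ to precisely those rows produces the homogeneous subsystem $\mat{A}_s\,\hat{\vecg{\delta}}=\vec{0}$ on the offset variables alone. Thus the argument reduces to two claims: (i) the nullspace of $\mat{A}_s$ is trivial, and (ii) once $\hat{\vecg{\delta}}=\vec{0}$ has been forced, each remaining row of $\mat{A}\vec{x}=\vec{0}$ that carries an estimated fault variable $\hat{e}_{ij}$ in turn forces $\hat{e}_{ij}=0$.

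For claim (i) I plan to use the connectivity of $G_s$ via induction along paths leading to the reference node $n_0$. A fault-free edge $n_0 \leftrightarrow n_j \in E_s$ contributes the single-variable equation $\hat{\delta}_{j0}=0$ directly, because $n_0$ has no associated offset variable; a fault-free edge $n_i \leftrightarrow n_j \in E_s$ with $i,j\neq 0$ contributes $\hat{\delta}_{i0}-\hat{\delta}_{j0}=0$. Connectivity of $G_s$ guarantees, for every vertex $n_j$, the existence of a path from $n_j$ to $n_0$ using only edges of $E_s$, and iterating the equations above along such a path propagates the zero from $n_0$ outward, yielding $\hat{\delta}_{j0}=0$ for all $j\in[1,N-1]$. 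Claim (ii) is then immediate: each variable $\hat{e}_{ij}$ appears in exactly one row of $\mat{A}$, with coefficient $+1$; substituting $\hat{\vecg{\delta}}=\vec{0}$ into that row collapses it to $\hat{e}_{ij}=0$. Together these give $\vec{x}=\vec{0}$, completing the rank argument.

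The main point requiring care is the propagation step in claim (i), where I must cleanly distinguish the two shapes of rows in $\mat{A}_s$, namely those incident to the reference $n_0$ (which behave as ``boundary'' equations pinning the value to zero) and those between two non-reference nodes (which only equate two offsets). This is essentially the well-known fact that the oriented incidence matrix of a connected graph with one grounded vertex has trivial kernel; I will phrase the induction on the length of a shortest path in $G_s$ from each $n_j$ to $n_0$ so that the two cases appear respectively as the base step and the inductive step, which keeps the argument short and avoids any combinatorial case analysis.
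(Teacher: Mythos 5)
Your proposal is correct, and it proves the lemma by a genuinely different (though related) mechanism. The paper argues directly on the inhomogeneous system: it walks a traversal of the connected $G_s$ from $n_0$, forces $\hat{\vecg{\delta}}=\vecg{\delta}$ equation by equation, and then performs a three-way case analysis on the remaining rows (actual fault correctly assumed, spurious assumed fault, actual fault missed) to conclude that the full system is either inconsistent or has exactly that one solution. You instead reduce uniqueness to full column rank of $\mat{A}$ and show the homogeneous system $\mat{A}\vec{x}=\vec{0}$ has trivial kernel: the rows coming from $E_s$ carry no fault variable, so they form $\mat{A}_s\hat{\vecg{\delta}}=\vec{0}$, whose kernel is trivial because the incidence-type matrix of a connected graph grounded at $n_0$ pins every offset to zero along paths; then each $\hat{e}_{ij}$, appearing in exactly one row with coefficient $+1$, is forced to zero as well. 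Your route is cleaner for the statement as written—it needs no case analysis and never touches the actual fault values $e_{ij}$ or the right-hand side $\vec{b}$, and it makes explicit the standard grounded-incidence-matrix fact underlying the propagation. What the paper's route buys in exchange is constructive information that the lemma's later use requires: the proof of Theorem~\ref{thm:1} cites cases (1)--(3) of this proof to argue when the unique solution exists, that it yields the correct offsets, and that a missed actual fault (case (3)) forces inconsistency. Your argument establishes the lemma itself but would leave those follow-on facts to be re-derived where they are needed. One small point to keep straight when writing it up: the rows free of estimated-fault variables are exactly those of edges not \emph{assumed} faulty, a superset of $E_s$; restricting to $E_s$ is legitimate and is what the connectivity hypothesis is stated for, so your restriction is the right choice, but the parenthetical "that is what fault-free edge means" should acknowledge that fault-freeness also excludes actually faulty edges, which is irrelevant to $\mat{A}$ but relevant to the definition.
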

\begin{proof}
\label{proof:fully1}
Since $G_s$ is connected, we can find a traversal of $G_s$ starting from $n_0$ and ending at any node $n_i$, which is represented by a list $\langle n_0, n_{w_1}, n_{w_2}, \ldots, n_{w_p}, n_i \rangle$.
Note that, in the above list, two different symbols $n_{w_x}$ and $n_{w_y}$ may refer to the same node in the network. We can formulate a system of equations along the above traversal, where each equation corresponds to an edge connecting two consecutive nodes in the traversal. The equation system consists of
$\hat{\delta}_{w_10}=\delta_{w_10}$, $\hat{\delta}_{w_10}-\hat{\delta}_{w_20}=\delta_{w_10}-\delta_{w_20}$, $\hat{\delta}_{w_20}-\hat{\delta}_{w_30}=\delta_{w_20}-\delta_{w_30}$, $\ldots$, $\hat{\delta}_{w_{p-1}0}-\hat{\delta}_{w_p0}=\delta_{w_{p-1}0}-\delta_{w_p0}$, $\hat{\delta}_{w_p0}-\hat{\delta}_{w_i0}=\delta_{w_p0}-\delta_{w_i0}$. By substituting the solution of the previous equation to the next equation in the above chain of equations, we have a unique solution of $\hat{\vecg{\delta}}=\vecg{\delta}$.

We substitute the solution $\hat{\vecg{\delta}}=\vecg{\delta}$ into the original equation system $\mat{A}\vec{x}=\vec{b}$ to solve the remaining unknown variables $\hat{\vec{e}}$. There are three cases for the equations in $\mat{A}\vec{x}=\vec{b}$ but not in $\mat{A}_s \hat{\vecg{\delta}} = \vec{b}_s$:
\begin{enumerate}
\item For an actually faulty edge $n_i \leftrightarrow n_j$ that is correctly assumed to be faulty, the equation is $\hat{\delta}_{i0}-\hat{\delta}_{j0}+\hat{e}_{ij}=\delta_{i0}-\delta_{j0}+e_{ij}$. By substituting $\hat{\delta}_{i0} = \delta_{i0}$ and $\hat{\delta}_{j0} = \delta_{j0}$ (which are from $\hat{\vecg{\delta}}=\vecg{\delta}$) into the above equation, we have $\hat{e}_{ij} = e_{ij}$.
\item For an actually non-faulty edge $n_i \leftrightarrow n_j$ that is wrongly assumed to be faulty, the equation is $\hat{\delta}_{i0}-\hat{\delta}_{j0}+\hat{e}_{ij}=\delta_{i0}-\delta_{j0}$. The solution is $\hat{e}_{ij} = 0$.
\item For an actually faulty edge $n_i \leftrightarrow n_j$ that is wrongly assumed to be non-faulty, the equation is $\hat{\delta}_{i0}-\hat{\delta}_{j0}=\delta_{i0}-\delta_{j0}+e_{ij}$. Since $\hat{\delta}_{i0} = \delta_{i0}$, $\hat{\delta}_{j0} = \delta_{j0}$, and $e_{ij} \neq 0$ (which is the given condition), the above equation does not hold.
\end{enumerate}

The $\mat{A} \vec{x} = \vec{b}$ that contains case (3) has no solution; the $\mat{A} \vec{x} = \vec{b}$ that does not contain case (3) has a unique solution that gives correct clock offset estimates. Thus, $\vec{A} \vec{x} = \vec{b}$ has at most one solution.
\end{proof}

Denote by $A \setminus B$ the relative complement of a set $B$ with respect to a set $A$, i.e., the set of elements in $A$ but not in $B$. We have the following lemma.

\begin{lemma}
  For a complete NCS graph $G = (V, E)$ and any edge subset $M \subseteq E$, a sufficient condition for {\yellow the} subgraph $G' = (V, E \setminus M)$ {\yellow to be connected} is $|M| \leq 2 \cdot \left( \left\lfloor \frac{N}{2} \right\rfloor - 1 \right)$, where $N = |V|$.
  \label{lemma:graph2}
\end{lemma}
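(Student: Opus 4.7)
My plan is to reduce the claim to the classical fact that the edge connectivity of $K_N$ equals $N-1$, so that deleting any set of at most $N-2$ edges cannot disconnect it. The factor of $2$ in front of $\lfloor N/2 \rfloor - 1$ is not sharp for this lemma in isolation; it arises because in the intended application $M$ will be the union of the actual faulty edges and the estimated faulty edges, at most $\lfloor N/2 \rfloor - 1$ of each (cf.\ the target bound $f_t(N) = \lfloor N/2 \rfloor - 1$ of Theorem~\ref{thm:1}).

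The first step is a routine arithmetic check by parity. For even $N$, $\lfloor N/2 \rfloor = N/2$ and $2(\lfloor N/2 \rfloor - 1) = N-2$; for odd $N$, $\lfloor N/2 \rfloor = (N-1)/2$ and $2(\lfloor N/2 \rfloor - 1) = N-3$. In both regimes $|M| \leq N-2$, so it suffices to prove the stronger uniform statement that removing any $N-2$ or fewer edges from $K_N$ leaves the graph connected.

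For the connectivity step I would use a short self-contained cut-counting argument rather than quote edge connectivity. Suppose for contradiction that $G' = (V, E \setminus M)$ is disconnected. Then there is a nonempty proper subset $S \subseteq V$, say with $|S| = s$ and $1 \leq s \leq N-1$, whose entire cut in the complete graph $G$ is contained in $M$. That cut has exactly $s(N-s)$ edges, and on the integer range $1 \leq s \leq N-1$ the function $s(N-s)$ is minimized at the endpoints $s = 1$ and $s = N-1$, giving value $N-1$. Hence $|M| \geq N-1$, contradicting $|M| \leq N-2$ established in the previous step.

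There is no real obstacle here; the lemma is essentially a parity-aware restatement of $\lambda(K_N) = N-1$. The only thing to be careful about is not to overlook the even case where the bound $2(\lfloor N/2 \rfloor - 1)$ meets $N-2$ with equality, so that the strict inequality $|M| < N-1$ still goes through.
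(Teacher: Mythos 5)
Your proposal is correct and takes essentially the same route as the paper: both argue by contraposition/contradiction that disconnecting $K_N$ requires removing at least $N-1$ edges (the paper bounds the number of surviving edges over the partitions, you bound the cut $s(N-s)\ge N-1$ directly), and both conclude via the parity observation that $N-1 > 2\left(\left\lfloor \frac{N}{2}\right\rfloor - 1\right)$. Your cut-counting presentation is slightly more streamlined than the paper's partition-edge count, but the underlying argument is the same.
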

\begin{proof}
  Let $C_1$ denote the clause of $|M| \leq 2 \cdot \left( \left\lfloor \frac{N}{2} \right\rfloor - 1 \right)$; let $C_2$ denote the clause of $G'$ is connected. From logic, we have the following equivalence: $(C_1 \Rightarrow C_2) \Leftrightarrow (\neg C_1 \Leftarrow \neg C_2)$, where $\neg$ represents negation. The clause $\neg C_1$ is $|M| > 2 \cdot \left( \left\lfloor \frac{N}{2} \right\rfloor - 1 \right)$. As $|E| = \frac{N(N-1)}{2}$, the clause $\neg C_1$ is also equivalent to $|E \setminus M| \le \frac{N(N-1)}{2} - 2 \cdot \left( \left\lfloor \frac{N}{2} \right\rfloor - 1 \right)$. The clause $\neg C_2$ is that $G'$ is disconnected. From the above reasoning, the sufficient condition to be proved is equivalent to the following: a sufficient condition for $|E \setminus M| \le \frac{N(N-1)}{2} - 2 \cdot \left( \left\lfloor \frac{N}{2} \right\rfloor - 1 \right)$ is that $G'$ is disconnected. In the following, we prove this equivalent sufficient condition.

Since $G'$ is disconnected, we assume that it has a total of $P$ partitions, where $P \ge 2$. Let $N_p \in \mathbb{Z}_{>0}$ denote the number of vertices in the $p$th partition. Thus, $\sum_{p=1}^P N_p = N$. Define $N_r = N - N_1 = \sum_{p=2}^{P} N_p$. We have
  \begin{align}
    \frac{N_r(N_r-1)}{2} &= \frac{ \left(\sum_{p=2}^{P} N_p \right) \left(\sum_{p=2}^{P} N_p - 1 \right)}{2} \nonumber \\
                         &= \sum_{p=2}^P \frac{N_p \left( N_p-1 \right)}{2} + \sum_{\forall p,q \in [2, P], p \neq q}N_pN_q \nonumber \\
                         &\geq \sum_{p=2}^P \frac{N_p(N_p-1)}{2}. \label{eq:inq-1}
  \end{align}
  As the number of edges of the $p$th partition is no greater than $\frac{N_p(N_p-1)}{2}$, we have $|E \setminus M| \le \sum_{p=1}^P \frac{N_p(N_p-1)}{2} = \frac{N_1 (N_1 - 1)}{2} + \sum_{p=2}^P \frac{N_p(N_p-1)}{2} \le \frac{N_1 (N_1 - 1)}{2} + \frac{N_r(N_r-1)}{2}$, where the last inequality follows from Eq.~(\ref{eq:inq-1}). By substituting $Nr = N-N_1$ into the above inequality, we have $|E \setminus M| \le \frac{N(N-1)}{2} + N_1 (N_1 - N)$. Note that $N_1 \in [1, N-1]$. When $N_1 = 1$ or $N_1 = N-1$, the quadratic $N_1(N_1 - N)$ achieves its maximum value of $-(N-1)$. Thus, $|E \setminus M| \le \frac{N(N-1)}{2} - (N-1)$. As $-(N-1) < -2 \cdot \left( \left\lfloor \frac{N}{2} \right\rfloor - 1 \right)$, we have $|E \setminus M| \le \frac{N(N-1)}{2} -2 \cdot \left( \left\lfloor \frac{N}{2} \right\rfloor - 1 \right)$.
\end{proof}

\subsection{Tight Bound of Maximum Resilience}
\label{subsec:tight-bound}

\begin{theorem}
  The tight bound of maximum resilience of any $N$-node network with a complete NCS graph $G=(V,E)$ is $f_t(N) = \left\lfloor \frac{N}{2} \right\rfloor - 1$.
  \label{thm:1}
\end{theorem}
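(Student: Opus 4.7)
The plan is to establish the two-sided bound in Definition~\ref{def:tight} directly by showing $K$-resilience for $K\le \lfloor N/2\rfloor - 1$ and failure of $K$-resilience for $K\ge \lfloor N/2\rfloor$. Both directions are routed through the $K$-resilience condition in Definition~\ref{def:resilience-condition}. The sufficiency side will reduce almost entirely to invoking Lemmas~\ref{lemma:graph1} and~\ref{lemma:graph2}; the necessity side will require an explicit construction of bad actual and assumed fault distributions.

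For sufficiency, I would fix any $K\le \lfloor N/2\rfloor - 1$, an arbitrary distribution of $K$ actual faults, an arbitrary $k\in[0,K]$, and an arbitrary distribution of $k$ assumed faults. Let $M$ denote the union of the actual and assumed fault edges, so $|M|\le K+k\le 2K\le 2(\lfloor N/2\rfloor - 1)$. By Lemma~\ref{lemma:graph2}, $G'=(V,E\setminus M)$ is connected, and since $G'$ is exactly the fault-free NCS subgraph $G_s$, Lemma~\ref{lemma:graph1} gives that $\mat{A}\vec{x}=\vec{b}$ has at most one solution. I then split into two sub-cases to verify Definition~\ref{def:resilience-condition}. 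When $k<K$, or when $k=K$ with the assumed distribution differing from the actual one, at least one actually faulty edge is wrongly assumed to be non-faulty; this is scenario (3) in the proof of Lemma~\ref{lemma:graph1}, where the connectedness of $G_s$ forces $\hat{\vecg{\delta}}=\vecg{\delta}$ and collapses that equation to the impossible identity $0=e_{ij}$ with $e_{ij}\neq 0$. When $k=K$ and the distributions coincide, only scenarios (1) and (2) arise; existence of the ``truth'' solution $\hat{\vecg{\delta}}=\vecg{\delta}$, $\hat{e}_{ij}=e_{ij}$ paired with the ``at most one'' from Lemma~\ref{lemma:graph1} gives uniqueness, meeting clause~(2)-(a).

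For necessity I would exhibit a single failing instance at $K=\lfloor N/2\rfloor$. Place the $K$ actual faults on the star edges $n_0\leftrightarrow n_j$ for $j=1,\ldots,K$, and set every fault value equal to a single nonzero $e$. Have Algorithm~\ref{alg:error-correction} consider the assumption that the remaining $N-1-K$ star edges $n_0\leftrightarrow n_j$ for $j=K+1,\ldots,N-1$ are the faulty ones. Direct substitution into Eq.~(\ref{eq:error-correction}) yields the candidate $\hat{\delta}_{j0}=\delta_{j0}+e$ for all $j\neq 0$ and $\hat{e}_{j0}=-e$ on the assumed edges; uniqueness follows because the ${N-1 \choose 2}$ equations on edges $n_i\leftrightarrow n_j$ with $i,j\neq 0$ (none of which are faulty or assumed faulty) force all $\hat{\delta}_{j0}$ to share the same additive shift relative to $\delta_{j0}$, and the $K$ actually-faulty star equations pin that shift to $e$. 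This wrong solution misidentifies every clock offset. When $N$ is even one has $k=K-1<K$, violating clause~(1) of Definition~\ref{def:resilience-condition}; when $N$ is odd one has $k=K$ with disjoint distributions, violating clause~(2)-(b). Thus the network is not $\lfloor N/2\rfloor$-resilient, and for every larger $K$ I can extend the same template by adding further actual faults and mirroring them in the assumed set to preserve the failure, completing the upper bound.

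The step I expect to be the main obstacle is the sufficiency case analysis, specifically the careful bookkeeping needed to guarantee that for every variant of Eq.~(\ref{eq:error-correction}) that Algorithm~\ref{alg:error-correction} can construct, the identity $\hat{\vecg{\delta}}=\vecg{\delta}$ propagated through the connected $G_s$ is genuinely inherited by every edge outside $G_s$; this is what turns scenario~(3) into an outright inconsistency and what lets clauses~(1) and~(2)-(b) be discharged uniformly. The necessity side is largely routine linear algebra, with the only subtlety being the parity of $N$, which decides whether the constructed counterexample lands in clause~(1) or clause~(2)-(b) of Definition~\ref{def:resilience-condition}.
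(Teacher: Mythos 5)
Your proposal is correct and follows essentially the same route as the paper's proof: the sufficiency direction combines Lemma~\ref{lemma:graph2} (connectivity of the fault-free subgraph when $|M|\le 2(\lfloor N/2\rfloor-1)$) with Lemma~\ref{lemma:graph1} and the same $k<K$ versus $k=K$ case split, and your necessity counterexample—equal-valued actual faults and the assumed faults together covering all edges incident to one hub node, yielding a consistent uniformly-shifted wrong solution—is the paper's construction with the hub specialized to $n_0$. The only cosmetic difference is that you verify the wrong solution by direct substitution on the star rather than via the paper's partition argument, and you make the extension to all $K>\lfloor N/2\rfloor$ slightly more explicit than the paper does.
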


\begin{proof}
  First, we prove that, if $K \le f_t(N)$, the network is $K$-resilient. Let $k$ denote the assumed number of faults in Algorithm~\ref{alg:error-correction}, where $k \le K$. For any distribution of the estimated faults, let $M$ denote the set of edges excluded from $E$ to generate the fault-free NCS subgraph $G_s$. Thus, $|M| \le k + K$. Moreover, since $k \le K \le f_t(N)$, we have $|M| \le k + K \le 2 \cdot f_t(N) = 2 \cdot \left( \left\lfloor \frac{N}{2} \right\rfloor - 1 \right)$. From Lemma~\ref{lemma:graph2}, $G_s$ is connected. From Lemma~\ref{lemma:graph1}, the NCS equation system $\mat{A}\vec{x}=\vec{b}$ has at most one solution. Now, we verify the $K$-resilience condition in Definition~\ref{def:resilience-condition} as follows:
  \begin{enumerate}
    \item When $k < K$: There must exist an actually faulty edge wrongly assumed to be non-faulty, i.e., Case (3) in the proof of Lemma~\ref{lemma:graph1}. Thus, the $\mat{A}\vec{x}=\vec{b}$ has no solution and Algorithm~\ref{alg:error-correction} will not return when $k < K$.
    \item When $k = K$: Only when the distribution of the estimated faults is correct, the $\mat{A}\vec{x}=\vec{b}$ does not encompass Case (3) in the proof of Lemma~\ref{lemma:graph1} and it must yield a solution that gives correct clock offset estimates. Otherwise, $\mat{A}\vec{x}=\vec{b}$ must encompass Case (3) in the proof of Lemma~\ref{lemma:graph1} and it has no solution.
  \end{enumerate}
Since the $K$-resilience condition holds, the network is $K$-resilient.

Second, we prove that, if $K > f_t(N)$, the network is not $K$-resilient. We prove it using an example network that is not $K$-resilient. The example network has the following two properties: (1) all actually faulty synchronization sessions and all assumed faulty synchronization sessions involve a certain node $n_i$; (2) each of the edges involving $n_i$ is either actually faulty or assumed to be faulty, or both. For this example network, $k + K \ge N - 1$, where $N-1$ is the total number of edges involving $n_i$. Moreover, as $k \le K$, we have $2K \ge k + K \ge N-1$ and $K \ge \frac{N-1}{2} > \left\lfloor \frac{N}{2} \right\rfloor - 1$. Thus, the example network satisfies $K > f_t(N)$. We now prove that this example network is not $K$-resilient. The fault-free NCS subgraph $G_s$ is disconnected and has two partitions. One of them involving all nodes except $n_i$ is a complete NCS graph without any fault. Thus, a unique partial solution that give correct clock offset estimate can be obtained for this partition. By substituting the partial solution into the original equation system $\mat{A}\vec{x} = \vec{b}$, any remaining equation that must involve $n_i$ will be in one of the following three forms: (1) $\hat{\delta}_{i0}=\delta_{i0}+e_{ik}$, (2) $\hat{\delta}_{i0}+\hat{e}_{ik}=\delta_{i0}+e_{ik}$, and (3) $\hat{\delta}_{i0}+\hat{e}_{ik}=\delta_{i0}$. If all actual faults have identical value, i.e., $e_{ik}=e$, the remaining equations will yield a solution that gives wrong clock offset estimates, in which (1) $\hat{\delta}_{i0}=\delta_{i0}+e$, (2) $\hat{e}_{ik}=0$, and (3) $\hat{e}_{ik}=-e$ that respectively correspond to the three forms. Thus, the network is not $K$-resilient.
\end{proof}

From Theorem~\ref{thm:1}, for networks with complete NCS graphs, the maximum number of correctable faults increases with $N$ in a nearly linear manner. However, the number of edges increases with $N$ quadratically. This suggests that, for networks with complete NCS graphs, the fault correction capability decreases with $N$. Thus, it is interesting to study whether we can remove edges from a complete NCS graph while maintaining $K$-resilience. To answer this question, we analyze the resilience bounds for NCS graphs that may be incomplete (\sect\ref{sec:lower-bounds}) and then analyze the minimum number of edges needed to ensure $K$-resilience (\sect\ref{sec:another}).

\section{Algorithm to Compute Tight Bound of Maximum Resilience of any Network}
\label{sec:lower-bounds}

 In this section, we study the tight bound for any NCS graphs that may be incomplete. In \sect\ref{subsec:subgraph-method}, we interpret the $K$-resilience of any NCS graph from the edge connectivity of the graph. The interpretation is mainly from the Menger's theorem \cite{bohme2001menger}. Based on the edge-connectivity interpretation, in Section~\ref{subsec:alguncomplete}, we present an algorithm to compute the tight bound of maximum resilience for any given NCS graph. Note that, different from \sect\ref{sec:fully} that gives the closed-form tight bound of maximum resilience of any complete NCS graph, the closed-form tight bound may not exist for NCS graphs that may be incomplete, because the $K$-resilience depends on the topology of the incomplete NCS graph.

\subsection{Graph-Theoretic $K$-Resilience Condition}
\label{subsec:subgraph-method}
{\yellow First, we define \textit{fault-free} path between any two nodes in a connected NCS graph. For any path between node $n_i$ and $n_j$, if every edge in the path is associated with neither estimated nor actual synchronization fault, the path is called fault-free path. Note that for a certain distribution of the estimated p2p synchronization faults among the sessions, any path is either fault-free path or non-fault-free path.}

Now, we introduce the concept of {\yellow \textit{edge-connectivity}} \cite{west1996introduction} of any graph and \textit{minimum edge cut} of any pair of nodes to extend the fault-free NCS subgraph method. In graph theory, a connected graph is \textit{$L$-edge-connected} if it remains connected {when any no greater than $L$} edges are removed from the graph. The edge-connectivity of a graph is the largest $L$ for which the graph is still $L$-edge-connected \cite{west1996introduction}. A minimum edge cut of any pair of nodes is an edge cut of the pair such that there is no other edge cut of the pair containing fewer edges. The Menger's theorem \cite{bohme2001menger} stated below will be used to analyze the resilience of any NCS graph.
\begin{theorem}
\label{thm:menger1}
  \textbf{Menger's theorem \cite{bohme2001menger}.} In a graph $G$, the size of the minimum edge cut of any pair of nodes is equal to the maximum number of disjoint paths that can be found between the  node pair. Extended to all node pairs, $G$ is $L$-edge-connected if and only if every node pair has $L$ edge-disjoint paths connecting them.
\end{theorem}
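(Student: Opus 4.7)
The plan is to establish the theorem in two stages: first the pair-wise equality between the minimum edge cut and the maximum number of edge-disjoint paths for a fixed pair, and then derive the global $L$-edge-connectivity characterization as a direct corollary.

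For the pair-wise statement, I would reduce to the max-flow min-cut theorem. Fix a pair of vertices $(s, t)$ and replace each undirected edge $\{u, v\}$ of $G$ by two antiparallel directed arcs $(u,v)$ and $(v,u)$, each of unit capacity. In this auxiliary network, the max-flow min-cut theorem asserts that the value of a maximum $s$--$t$ flow equals the capacity of a minimum $s$--$t$ cut. Because all capacities are integers, the integrality theorem provides an integer-valued maximum flow, which decomposes into a family of edge-disjoint directed $s$--$t$ paths whose cardinality equals the flow value; conversely, any collection of edge-disjoint $s$--$t$ paths yields an integer flow of the same value. Hence the maximum number of edge-disjoint $s$--$t$ paths equals the minimum $s$--$t$ cut in the auxiliary digraph.

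The delicate point is reconciling the unit-capacity directed cut with the desired undirected edge cut. I would show that there is always an optimal cut of the auxiliary digraph that, for every pair $\{u,v\}$, either contains both directed copies or neither, so its size agrees with the size of a minimum undirected edge cut separating $s$ from $t$. One clean way is to argue that if some minimum cut contained only one direction of an antiparallel pair, one could modify the flow decomposition to avoid the uncut direction without decreasing the flow, producing a contradiction or an equivalent cut that respects pairs. This bijection between integer flows and edge-disjoint path collections, together with the cut correspondence, gives $\lambda(s,t) = \kappa(s,t)$, where $\lambda$ denotes the maximum number of edge-disjoint paths and $\kappa$ denotes the minimum edge cut size.

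For the $L$-edge-connectivity characterization, observe that $G$ is $L$-edge-connected if and only if deleting any set of fewer than $L$ edges leaves $G$ connected, which is equivalent to the statement that for every pair $(s,t)$ the minimum $s$--$t$ edge cut has size at least $L$. Applying the pair-wise equality just established, this condition is equivalent to every pair $(s,t)$ admitting at least $L$ edge-disjoint paths, which is the desired statement. The main obstacle I anticipate is the careful bookkeeping in the undirected-to-directed reduction to guarantee that optimal digraph cuts do not spuriously inflate the count and that the integer max-flow decomposition genuinely corresponds to undirected edge-disjoint paths; once that correspondence is locked down, everything else is an appeal to max-flow min-cut and a straightforward quantifier exchange.
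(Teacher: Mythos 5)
The paper does not actually prove this statement: it is quoted as a classical result with the citation \cite{bohme2001menger} and then used as a black box in the resilience analysis, so there is no in-paper argument to compare against --- your proposal supplies a proof where the paper defers to the literature. Your route (replace each undirected edge by two antiparallel unit-capacity arcs, apply max-flow min-cut with integrality, decompose an integral maximum flow into edge-disjoint paths, then obtain the global $L$-edge-connectivity characterization by a quantifier exchange over all pairs) is a standard and correct way to prove the edge version of Menger's theorem, and the final deduction is fine. One small caveat: the paper's own wording of $L$-edge-connectivity (``remains connected when any no greater than $L$ edges are removed'') is off by one from the standard definition you use (``fewer than $L$''); your reading is the one consistent with the theorem as stated, so this is a discrepancy in the paper's phrasing rather than in your argument.

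Two details in your middle paragraph are stated imprecisely, though both are easily repaired. First, the claim that one should exhibit an optimal directed cut containing, for each pair $\{u,v\}$, both antiparallel arcs or neither is not what you want: a minimum $s$--$t$ cut in the auxiliary digraph is induced by a vertex bipartition $(S,T)$ with $s \in S$, $t \in T$, and its capacity counts only arcs from $S$ to $T$, so it contains exactly one of the two arcs for each crossing undirected edge --- that is precisely why its capacity equals the size of the undirected edge cut across $(S,T)$; conversely, a minimum undirected $s$--$t$ edge cut is, by minimality, the set of edges crossing the partition determined by the component of $s$ after deletion. Second, the genuine subtlety lies on the flow side rather than the cut side: an integral maximum flow may push one unit along both arcs of an antiparallel pair, and a path decomposition of such a flow can traverse the same undirected edge twice, so the directed paths need not be edge-disjoint as undirected paths. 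The standard fix is the cancellation step --- whenever both arcs of an antiparallel pair carry flow, reduce both by the common amount, which preserves flow value and conservation --- performed before decomposing. With those two points tightened, your proof is complete and correct.
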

Based on Menger's theorem, we have the following theorem regarding the resilience of any NCS graph. In the proof, we use examples provided in {\em italic text} to help understanding.

\begin{theorem}
\label{thm:new1}
An NCS graph $G$ is $K$-resilient if and only if it is $(2K+1)$-edge-connected.
\end{theorem}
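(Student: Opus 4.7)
The plan is to prove the biconditional by combining the fault-free NCS subgraph construction of Section~\ref{subsec:ff-graph} with Menger's theorem (Theorem~\ref{thm:menger1}). The bridge between the two notions is the bookkeeping identity that, for any choice of $K$ actual faults and any $k \le K$ assumed faults, the set $M$ of edges deleted to form $G_s$ satisfies $|M| \le K + k \le 2K$, so the connectivity of $G_s$ is controlled exactly by the edge-connectivity of $G$.

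For sufficiency, I would first observe that Lemma~\ref{lemma:graph1} extends verbatim from complete NCS graphs to arbitrary ones: its proof only uses the existence of a traversal of $G_s$ from $n_0$ and a local three-case edge analysis, so completeness of $G$ is never invoked. Granted this extension, any actual-fault distribution of size $K$ together with any assumed-fault distribution of size $k \le K$ removes at most $2K$ edges, so $(2K+1)$-edge-connectivity forces $G_s$ to remain connected and the NCS equation system to have at most one solution. The clauses of Definition~\ref{def:resilience-condition} can then be verified exactly as in the sufficiency half of the proof of Theorem~\ref{thm:1}: when $k < K$, at least one actual fault is necessarily absent from the assumed set, triggering the inconsistent Case~(3) in the proof of Lemma~\ref{lemma:graph1} and ruling out any solution; when $k = K$, the inconsistent case is triggered for every wrong assumed distribution, while the correct assumed distribution gives the unique correct solution.

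For necessity I would prove the contrapositive: if the edge-connectivity $L$ of $G$ is at most $2K$, then $G$ is not $K$-resilient. By Menger's theorem there exists a subset $S \subset V$ containing the reference node $n_0$ whose edge boundary $C = [S, \bar{S}]$ has $|C| = L$, and after refining $S$ if needed to the connected component of $G \setminus C$ containing $n_0$, the set $V_1 := S$ is connected in $G \setminus C$ while every edge of $C$ has exactly one endpoint in $V_1$. I would then build a counterexample with $K$ actual faults. If $L \ge K$, pick any $C_1 \subseteq C$ with $|C_1| = K$ as the actual faults and $C_2 = C \setminus C_1$ (of size $L - K \le K$) as the assumed-fault set, and choose the fault values $e_{ij}$ on $C_1$ so that each endpoint-component $V_p \subseteq \bar{S}$ is assigned a consistent nonzero per-component offset $c_p$. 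At step $k = |C_2| \le K$ the fault-free subgraph is $G \setminus C$; within it the $V_1$-side pins $\hat{\delta}_{j0} = \delta_{j0}$, each $V_p$-side admits $\hat{\delta}_{j0} = \delta_{j0} + c_p$, the cross-cut equations from $C_1$ are satisfied by construction, and the $C_2$ equations determine the $\hat{e}_{ij}$'s. This is a valid solution with wrong clock offsets under a wrong assumed distribution, violating Definition~\ref{def:resilience-condition}. If $L < K$, I would pad the actual-fault set with $K - L$ edges chosen inside components of $G \setminus C$ so as not to further disconnect them, take the assumed-fault set at step $k = K - L$ to be exactly those padding edges, and run the same cross-cut analysis.

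The main obstacle I anticipate is the necessity-direction bookkeeping when $\bar{S}$ splits into more than one component inside $G \setminus C$: the cross-cut constraints $c_p - c_1 = \pm e_{ij}$ arising from several $C_1$ edges into the same $V_p$ must be made simultaneously consistent while keeping every $e_{ij}$ nonzero. The crucial degree of freedom is that $C = [S, \bar{S}]$ contains no edge between two components of $\bar{S}$, so the constraints on the different $c_p$'s decouple and each $c_p$ can be fixed independently to a nonzero value by tuning absolute values and signs of the corresponding $e_{ij}$'s. A secondary subtlety is the padding argument when $L < K$, where I must check that enough edges exist outside $C$ whose removal does not further disconnect their host components; this is routine provided $|E|$ is not pathologically small relative to $K$, but it deserves an explicit verification.
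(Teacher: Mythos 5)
Your proof is correct and follows the same overall strategy as the paper's (edge-connectivity via Menger plus a fault-free structure for sufficiency, and a small-cut ``shifted solution'' counterexample for necessity), but both halves are executed somewhat differently, and the comparison is worth recording. For sufficiency the paper does not argue that the fault-free subgraph $G_s$ stays connected; it applies Menger's theorem to obtain $2K+1$ edge-disjoint paths from each node to $n_0$ and pigeonholes (at most $k+K\le 2K$ fault-carrying edges) to extract a fault-free path per node, then reruns the case analysis from Lemma~\ref{lemma:graph1}. Your route --- remove the $|M|\le k+K\le 2K$ fault-carrying edges, note that a $(2K+1)$-edge-connected graph cannot be disconnected this way, and observe that Lemma~\ref{lemma:graph1} never uses completeness of $G$ --- is equivalent and, if anything, more direct. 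For necessity the paper takes a minimum cut $C$ with $|C|\le 2K$, places $K$ actual and $K$ estimated faults on $C$ so that every cut edge carries at least one, gives all actual faults a common value $e$, and exhibits the shifted assignment in Eq.~(\ref{eq:wrongsolution}) as a spurious solution at $k=K$; you instead split $C$ into disjoint actual and assumed parts (padding with interior actual faults when the edge connectivity $L$ is below $K$), producing a spurious solution already at a step $k<K$ or at $k=K$. Your variant is valid, and it explicitly covers the regime $L<K$, which the paper's covering construction cannot instantiate literally since it needs $K\le |C|$. Two of the precautions you flag are, however, unnecessary: a minimum edge cut of a connected graph leaves exactly two components (inclusion-minimality forces every cut edge to join the two sides), so the multiple-component bookkeeping for $\bar S$ never arises; and the requirement that padding edges not disconnect their host component should simply be dropped rather than verified --- the padded edges are both actually and assumedly faulty, so their equations only fix the corresponding $\hat e$ values and the exhibited assignment remains a solution even if the component splits, which matters because in tree-like components non-separating edges need not exist.
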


\begin{proof}
	Let $C_1$ denote the clause that an NCS graph $G$ is $K$-resilient; let $C_2$ denote the clause that the NCS graph is $(2K+1)$-edge-connected.
	
    \textbf{Proof of backward implication $C_2 \Rightarrow C_1$.} Assuming that the NCS graph $G$ is $(2K+1)$-edge-connected, from the Menger's theorem, there exist $2K+1$ edge-disjoint paths between any node $n_j$ and the reference node $n_0$. In the case that the network has $K$ actual faults and $k$ estimated faults where $k \leq K$, the total number of edges associated with the estimated or actual synchronization faults is less than $K+k$. Note that $K+k \leq 2K$.  From the principle of drawers, there exists at least one fault-free path among the $2K+1$ edge-disjoint paths {\yellow connecting} $n_j$ and $n_0$. We can formulate a system of equations along the above fault-free path where each equation corresponds to an edge. We denote this fault-free path as $\langle n_0, n_{w_1}, n_{w_2}, \ldots, n_{w_p}, n_j \rangle$. The equation system on the path consists of $\hat{\delta}_{w_10}=\delta_{w_10}$, $\hat{\delta}_{w_10}-\hat{\delta}_{w_20}=\delta_{w_10}-\delta_{w_20}$, $\hat{\delta}_{w_20}-\hat{\delta}_{w_30}=\delta_{w_20}-\delta_{w_30}$, $\ldots$, $\hat{\delta}_{w_{p-1}0}-\hat{\delta}_{w_p0}=\delta_{w_{p-1}0}-\delta_{w_p0}$, $\hat{\delta}_{w_p0}-\hat{\delta}_{w_j0}=\delta_{w_p0}-\delta_{w_j0}$. Similar to the proof of Lemma~\ref{lemma:graph1}, we can substitute the solution of the previous equation to the next equation in the above chain of equations {\yellow to generate the} solution of $\hat{\delta}_{j0}=\delta_{j0}$. {\yellow By repeating the above process for every non-reference node $n_j$}, we obtain a unique solution $\hat{\vecg{\delta}}=\vecg{\delta}$. We substitute the solution $\hat{\vecg{\delta}}=\vecg{\delta}$ into the original equation system $\mat{A}\vec{x}=\vec{b}$ to solve the remaining unknown variables $\hat{\vec{e}}$. As shown in the proof of {\yellow Lemma~\ref{lemma:graph1}}, when $k = K$ and the distribution of the estimated p2p synchronization faults is {\yellow identical to} the distribution of the actual p2p synchronization faults, the NCS equation system $\mat{A} \vec{x} = \vec{b}$ has a unique solution that gives correct clock offset estimates. Otherwise, it has no solution. Therefore, if an NCS graph $G$ is $(2K+1)$-edge-connected, it is $K$-resilient, i.e., $C_2 \Rightarrow C_1$.

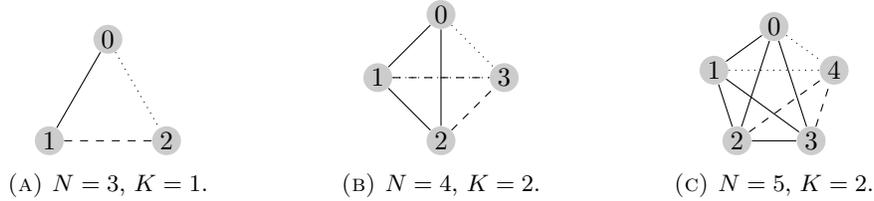
\begin{figure}[b]
	\begin{subfigure}[t]{.30\textwidth}
		\centering
		\def\R{3}
		\begin{tikzpicture}
		[scale=.3,auto=center,every node/.style={circle,fill=black!20,inner sep=1pt}]
		\node (n0) at ({\R*cos(90)},{\R*sin(90)}) {0};
		\node (n1) at ({\R*cos(210)},{\R*sin(210)}) {1};
		\node (n2) at ({\R*cos(330)},{\R*sin(330)}) {2};
		
		\foreach \from/\to in {n0/n1}
		\draw (\from) -- (\to);
		\draw [dotted] (n0) -- (n2);
		\draw [dashed] (n1) -- (n2);
		\end{tikzpicture}
		\centering
		\caption{$N=3$, $K = 1$.}
	\end{subfigure}
	\hfill
	\begin{subfigure}[t]{.30\textwidth}
		\centering
		\def\R{2.8}
		\begin{tikzpicture}
		[scale=.3,auto=center,every node/.style={circle,fill=black!20,inner sep=1pt}]
		\node (n0) at ({\R*cos(90)},{\R*sin(90)}) {0};
		\node (n1) at ({\R*cos(180)},{\R*sin(180)}) {1};
		\node (n2) at ({\R*cos(270)},{\R*sin(270)}) {2};
		\node (n3) at ({\R*cos(360)},{\R*sin(360)}) {3};

		\foreach \from/\to in {n0/n1,n0/n2,n1/n2}
		\draw (\from) -- (\to);
		\foreach \from/\to in {n0/n3,n1/n3}
	    \draw [dotted] (\from) -- (\to);
	    \foreach \from/\to in {n2/n3,n1/n3}
	    \draw [dashed] (\from) -- (\to);
		\end{tikzpicture}
		\centering
		\caption{$N=4$, $K = 2$.}
	\end{subfigure}
	\hfill
	\begin{subfigure}[t]{.30\textwidth}
		\centering
		\def\R{2.8}
		\begin{tikzpicture}
		[scale=.3,auto=center,every node/.style={circle,fill=black!20,inner sep=1pt}]
		\node (n0) at ({\R*cos(90)},{\R*sin(90)}) {0};
		\node (n1) at ({\R*cos(162)},{\R*sin(162)}) {1};
		\node (n2) at ({\R*cos(234)},{\R*sin(234)}) {2};
		\node (n3) at ({\R*cos(306)},{\R*sin(306)}) {3};
		\node (n4) at ({\R*cos(18)},{\R*sin(18)}) {4};
		
		\foreach \from/\to in {n0/n1,n0/n2,n0/n3,n1/n2,n1/n3,n2/n3}
		\draw (\from) -- (\to);
		\foreach \from/\to in {n0/n4,n1/n4}
		\draw [dotted] (\from) -- (\to);
		\foreach \from/\to in {n2/n4,n3/n4}
		\draw [dashed] (\from) -- (\to);
		\end{tikzpicture}
		\centering
		\caption{$N=5$, $K = 2$.}
	\end{subfigure}
	
	\caption{The counterexamples for the small-scale NCS graphs with $N=3,4,5$. Dotted edges denote the edges containing  the actual synchronization faults; dashed edges denote the edges containing the estimated synchronization faults; the dashed-dotted edges denote the edges containing both the estimated and actual synchronization faults. The combination of all the above three types of edges composes of the minimum cut $C$ in a certain NCS graph. The solid edges are fault-free edges. }
	\label{fig:counterexample}
\end{figure}   

    \textbf{Proof of forward implication $C_1 \Rightarrow C_2$.} We have the following equivalence: $(C_1 \Rightarrow C_2) \Leftrightarrow (\neg C_1 \Leftarrow \neg C_2)$.  The $\neg C_2$ means that the NCS graph $G$ is not $(2K+1)$-edge-connected; $\neg C_1$ means that $G$ is not $K$-resilient. In what follows, we prove $\neg C_1 \Leftarrow \neg C_2$. From the definition of $K$-resilience condition in Definition~\ref{def:resilience-condition}, $G$ is not $K$-resilient if we can find any of the following counterexamples: (1) Algorithm~\ref{alg:error-correction} returns a solution when the distribution of the estimated p2p synchronization faults is different from the distribution of the actual faults or (2) Algorithm~\ref{alg:error-correction} returns more than one solution when the distribution of the estimated p2p synchronization faults is identical to the the distribution of the actual faults. In the following, we find such counterexamples when $G$ is not $(2K+1)$-edge-connected. From the Menger's theorem, since $G$ is not $(2K+1)$-edge-connected, there is a minimum edge cut $C$ including at most $2K$ edges for a certain {\yellow pair of nodes $n_i$ and $n_j$}. The minimum edge cut $C$ partitions $G$ into two connected subgraphs $G_i$ and $G_j$ that are disconnected {\yellow from each other}, {\yellow where }$n_i \in \mathcal{V}(G_i)$, $n_j \in \mathcal{V}(G_j)$, and $\mathcal{V}(G)$ represents the set of $G$'s vertexes. Note that the reference node $n_0$ is either in the subgraph $G_i$ or the subgraph $G_j$. Without loss of generality, we assume that $n_0 \in \mathcal{V}(G_i)$. Our counterexamples satisfy the following conditions:  (1) there are $K$ actual faulty sessions and the number of estimated faults is equal to $K$, (2) all the estimated faults and the actual faults are on the edge cut $C$ and each edge of the edge cut $C$  is associated with an estimated fault, or an actual fault, or both of them, (3) all the actual faults have an identical value $e$. {\em For example,
    Fig.~\ref{fig:counterexample} shows the counterexamples for the small-scale networks discussed in \sect\ref{subsec:manual-check}.}

     For the counterexamples described above, the following Eq.~(\ref{eq:wrongsolution}) is a solution of Eq.~(\ref{eq:error-correction}):

\begin{equation}
\left\{
\begin{array}{ll}
\hat{\delta}_{k0}=\delta_{k0}, &\forall n_k \in \mathcal{V}(G_i);\\
\hat{\delta}_{k0}=\delta_{k0}-e,& \forall n_k \in \mathcal{V}(G_j);\\
\hat{e}_{ij}= 0,& \text{if there is an actual fault on} \> n_i \leftrightarrow n_j;\\
\hat{e}_{ij}=-e,& \text{if there is no actual fault on} \> n_i \leftrightarrow n_j.
\end{array}
\right.
\label{eq:wrongsolution}
\end{equation}
\textit{For example, in Fig.~\ref{fig:counterexample}(b), Eq.~(\ref{eq:wrongsolution}) is}
\begin{equation}
	\left\{
	\begin{array}{ll}
		\hat{\delta}_{10}={\delta}_{10};\\
		\hat{\delta}_{20}={\delta}_{20};\\
		\hat{\delta}_{30}={\delta}_{30}-e;\\
	    \hat{e}_{13}=0;\\
	    \hat{e}_{23}=-e.
	\end{array}
	\right.
	\label{eq:examplesoution}
\end{equation}
 Now, we prove that Eq.~(\ref{eq:wrongsolution}) is a solution of Eq.~(\ref{eq:error-correction}) in the counterexamples. We prove it by substituting Eq.~(\ref{eq:wrongsolution}) to Eq.~(\ref{eq:error-correction}). If there is no conflict on each equation of the equation system in Eq.~(\ref{eq:error-correction}) (i.e., Eq.~(\ref{eq:error-correction}) still holds after incorporating Eq.~(\ref{eq:wrongsolution})), Eq.~(\ref{eq:wrongsolution}) is a solution of Eq.~(\ref{eq:error-correction}). Note that Eq.~(\ref{eq:error-correction}) can be separated into three disjoint subequation systems corresponding to the subgraph $G_i$, $G_j$ and the edge cut $C$. Thus, if we substitute Eq.~(\ref{eq:error-correction}) to the three disjoint subequation systems and there is no conflict on each equation, Eq.~(\ref{eq:wrongsolution}) is a solution of Eq.~(\ref{eq:error-correction}). Now, we analyze each of the disjoint subequation systems incorporated with Eq.~(\ref{eq:error-correction}) as follows.

\begin{enumerate}
\item In the subequation system associated with $G_i$, since there are no edges associated with the estimated or actual synchronization fault, the corresponding equation in Eq.~(\ref{eq:error-correction}), denoted by $\mathcal{E}$, has the form of $\hat{\delta}_{w0} - \hat{\delta}_{v0}  = \widetilde{\delta}_{wv}= \delta_{w0}-\delta_{v0}$, where $n_w \in \mathcal{V}(G_i)$ and $n_v \in \mathcal{V}(G_i)$. By substituting Eq.~(\ref{eq:wrongsolution}) to the left-hand side of $\mathcal{E}$, we have $\hat{\delta}_{w0} - \hat{\delta}_{v0} = \delta_{w0}-\delta_{v0}$, which is equal to the right-hand side of $\mathcal{E}$. Thus, there is no conflict when substituting Eq.~(\ref{eq:wrongsolution}) to the subequation system associated with $G_i$. \textit{For example, in Fig.~\ref{fig:counterexample}(b), the subequation system associated with $G_i$ is $\{\hat{\delta}_{10}=\delta_{10}, \hat{\delta}_{20}=\delta_{20},\hat{\delta}_{10}- \hat{\delta}_{20}=\delta_{10}-\delta_{20}\}$. Substituting Eq.~(\ref{eq:examplesoution}) into the above subequation system does not result in any conflict.}
\item In the subequation system associated with $G_j$, since there are no edges associated with the estimated or actual synchronization fault, the corresponding equation $\mathcal{E}$ in Eq.~(\ref{eq:error-correction}) has the form of $\hat{\delta}_{w0} - \hat{\delta}_{v0}  = \widetilde{\delta}_{wv}= \delta_{w0}-\delta_{v0}$, where $n_w \in \mathcal{V}(G_j)$ and $n_v \in \mathcal{V}(G_j)$. By substituting Eq.~(\ref{eq:wrongsolution}) to the left-hand side of $\mathcal{E}$, we have $\hat{\delta}_{w0} - \hat{\delta}_{v0} = (\delta_{w0}-e)-(\delta_{v0}-e) = \delta_{w0}-\delta_{v0}$, which is equal to the right-hand side of $\mathcal{E}$. Thus, there is no conflict when substituting Eq.~(\ref{eq:wrongsolution}) to the subequation system associated with $G_j$. \textit{For example, in Fig.~\ref{fig:counterexample}(b), the subequation system associated with $G_j$ is an empty set $\emptyset$, because there is only one node in the subgraph $G_j$. Thus, there is no conflict between the empty set and the solution in Eq.~(\ref{eq:examplesoution}).}
\item In the subequation system associated with $C$, since each edge is associated with at least one fault between estimated and actual synchronization fault, the corresponding equation $\mathcal{E}$ in Eq.~(\ref{eq:error-correction}) can be in any of the following three forms:
    \begin{enumerate}
    \item If the edge corresponding to $\mathcal{E}$ contains an estimated synchronization fault but no actual synchronization fault, $\mathcal{E}$ has the form of $\hat{\delta}_{w0} - \hat{\delta}_{v0} + \hat{e}_{wv} = \widetilde{\delta}_{wv}= \delta_{w0}-\delta_{v0}$, where $n_w \in \mathcal{V}(G_i)$ and $n_v \in \mathcal{V}(G_j)$. By substituting Eq.~(\ref{eq:wrongsolution}) to the left-hand side of $\mathcal{E}$, we have $\hat{\delta}_{w0} - \hat{\delta}_{v0} +\hat{e}_{wv} = \delta_{w0} - (\delta_{v0}-e) + (-e) =  \delta_{w0}-\delta_{v0}$, which is equal to the right-hand side of $\mathcal{E}$. Thus, there is no conflict when substituting Eq.~(\ref{eq:wrongsolution}) to the subequation system associated with $C$ in this case. \textit{For example, in Fig.~\ref{fig:counterexample}(b), the subequation system associated with $C$ in this case is $\hat{\delta}_{03}= \delta_{03} + e$, which is also in Eq.~(\ref{eq:examplesoution}). Thus, there is no conflict.}
    \item If the edge corresponding to $\mathcal{E}$ contains an actual synchronization fault but no estimated synchronization fault, $\mathcal{E}$ has the form of $\hat{\delta}_{w0} - \hat{\delta}_{v0}  = \widetilde{\delta}_{wv}= \delta_{w0}-\delta_{v0} + e_{wv}$, where $n_w \in \mathcal{V}(G_i)$ and $n_v \in \mathcal{V}(G_j)$. By substituting Eq.~(\ref{eq:wrongsolution}) to the left-hand side of $\mathcal{E}$, we have $\hat{\delta}_{w0} - \hat{\delta}_{v0} = \delta_{w0} - (\delta_{v0}-e) =  \delta_{w0}-\delta_{v0} +e$. Note that in our counterexamples, all the actual faults have an identical value $e$, i.e., $e_{wv} =e$. Thus, $\delta_{w0}-\delta_{v0} +e$ is equal to the left-hand side of $\mathcal{E}$. There is no conflict when substituting Eq.~(\ref{eq:wrongsolution}) to the subequation system associated with $C$ in this case. \textit{For example, in Fig.~\ref{fig:counterexample}(b), the subequation system associated with $C$ in this case is $\hat{\delta}_{20}-\hat{\delta}_{30} + \hat{e}_{23}= \delta_{30} - \delta_{20}$. Note that in Eq.~(\ref{eq:examplesoution}), $\hat{e}_{23} = -e$. Therefore, if we substitute Eq.~(\ref{eq:examplesoution}) into the above subequation system, there is no conflict.} 
    \item If the edge corresponding to $\mathcal{E}$ contains both an actual synchronization fault and an estimated synchronization fault, $\mathcal{E}$ has the form of $\hat{\delta}_{w0} - \hat{\delta}_{v0} +\hat{e}_{wv} = \widetilde{\delta}_{wv}= \delta_{w0}-\delta_{v0} + e_{wv}$, where $n_w \in \mathcal{V}(G_i)$ and $n_v \in \mathcal{V}(G_j)$. Note that in this case $\hat{e}_{wv} = 0$ in Eq.~(\ref{eq:wrongsolution}). By substituting Eq.~(\ref{eq:wrongsolution}) to the left-hand side of $\mathcal{E}$, we have $\hat{\delta}_{w0} - \hat{\delta}_{v0} +\hat{e}_{wv}= \delta_{w0} - (\delta_{v0}-e) + 0 =  \delta_{w0}-\delta_{v0} +e$, which is equal to the right-hand side of $\mathcal{E}$. Thus, there is no conflict when substituting Eq.~(\ref{eq:wrongsolution}) to the subequation system associated with $C$ in this case. \textit{For example, in Fig.~\ref{fig:counterexample}(b), the subequation system associated with $C$ in this case is $\hat{\delta}_{10}-\hat{\delta}_{30} + \hat{e}_{13}= \delta_{10} - \delta_{30} + e$. Note that in Eq.~(\ref{eq:examplesoution}), $\hat{e}_{13} = 0$. Therefore, substituting Eq.~(\ref{eq:examplesoution}) into the above subequation system results in no conflict.}
    \end{enumerate}
    Recall that in our counterexamples, each edge of the edge cut $C$ is associated with an estimated fault, or an actual fault, or both of them. Therefore, there is no conflict when substituting Eq.~(\ref{eq:wrongsolution}) to the subequation system associated with $C$.
\end{enumerate}

In summary, we have substituted Eq.~(\ref{eq:wrongsolution}) to the three disjoint subequation systems and there is no conflict. Thus, Eq.~(\ref{eq:wrongsolution}) is a solution of Eq.~(\ref{eq:error-correction}). Now, we prove that in this case, $G$ is not $K$-resilient. If the distribution of the estimated p2p faults is identical to the distribution of the actual faults, Eq.~(\ref{eq:error-correction}) has at least two solutions including Eq.~(\ref{eq:wrongsolution}) and the solution $\{\hat{\vecg{\delta}}=\vecg{\delta} , \hat{\vecg{e}}=\vecg{e}\}$, which violates the condition  (2)-(a) of the $K$-resilience condition defined in Definition~\ref{def:resilience-condition}. If the two distributions are different, Eq.~(\ref{eq:error-correction}) has a solution of Eq.~(\ref{eq:wrongsolution}), which violates the condition (2)-(b) of the $K$-resilience condition defined in Definition~\ref{def:resilience-condition}. Therefore, for any NCS graph $G$ that is not ($2K+1$)-edge-connected, $G$ is not $K$-resilient, i.e., $\neg C_1 \Leftarrow \neg C_2$. Therefore, $C_1 \Rightarrow C_2$.
\end{proof}

\subsection{Algorithm to Compute Tight Bound of Maximum Resilience}
\label{subsec:alguncomplete}

Based on Theorem~\ref{thm:new1}, Algorithm~\ref{alg:tightboundnew} computes the tight bound of the maximum resilience for {\yellow NCS graph $G$}. Specifically, starting with $K=0$, {\yellow Algorithm~\ref{alg:tightboundnew}} increases $K$ by one in each step of the outer loop to check whether {\yellow the NCS graph} is $K$-resilient by checking {\yellow the connectivity of the subgraphs after removing $2K$ edges from $G$}. If {\yellow any subgraph} is not connected, the sufficient and necessary condition given by Theorem~\ref{thm:new1} is not satisfied for the current $K$ value. Thus, the algorithm returns $K-1$ as the tight bound.

\begin{algorithm}[t]
  \caption{Compute the tight bound of maximum resilience}
\label{alg:tightboundnew}
\begin{algorithmic}[1]
\REQUIRE NCS equation system $\mat{A}\vec{x}=\vec{b}$ and the corresponding NCS graph $G = (V, E)$
\ENSURE Tight bound of maximum resilience of $G$

\STATE $K \leftarrow 0$
\WHILE{$K \le |E|$}
\FOR{each combination of $2K$ edges selected from all the $|E|$ edges in $G$}
\STATE remove the selected $2K$ edges to generate a subgraph $G'$
\IF{$G'$ is not connected}
\RETURN{$K-1$}
\ENDIF
\ENDFOR
\STATE $K \leftarrow K + 1$
\ENDWHILE
\end{algorithmic}
\end{algorithm}

{\yellow Now, we analyze the time complexity of Algorithm~\ref{alg:tightboundnew}. For each $K$ value, Algorithm~\ref{alg:tightboundnew} needs to check the connectivity of totally ${|E| \choose 2K}$ subgraphs. Existing  graph-theoretic algorithms can be used to check the connectivity of a graph, such as depth-first search (DFS) and breadth-first search (BFS) \cite{cormen2009introduction}. The DFS and BFS algorithms have the same time complexity of $O(|V|+|E|)$. In particular, for complete NCS graphs, the complexity of the two algorithms is $O(|V|^2)$.
Thus, the time complexity of the $K$th step of Algorithm~\ref{alg:tightboundnew} is $O\left(|V|^2{\frac{|V||V-1|}{2} \choose 2K}\right)$. Therefore, determining the tight bound of maximum resilience for any graph incurs a high computation overhead for large-scale NCS graphs. Nevertheless, Algorithm~\ref{alg:tightboundnew} is a method to exactly compute the tight bound of maximum resilience for incomplete NCS graphs. Figure~\ref{fig:uncomplete} shows several incomplete NCS graphs and their tight bounds of maximum resilience computed by Algorithm~\ref{alg:tightboundnew}.
 }

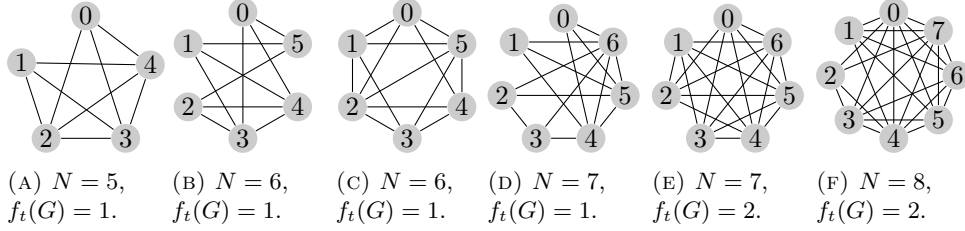
\begin{figure}
	\begin{subfigure}[t]{.15\textwidth}
		\centering
		\def\R{3}
		\begin{tikzpicture}
		[scale=.3,auto=center,every node/.style={circle,fill=black!20,inner sep=1pt}]
		\node (n0) at ({\R*cos(90)},{\R*sin(90)}) {0};
		\node (n1) at ({\R*cos(162)},{\R*sin(162)}) {1};
		\node (n2) at ({\R*cos(234)},{\R*sin(234)}) {2};
		\node (n3) at ({\R*cos(306)},{\R*sin(306)}) {3};
		\node (n4) at ({\R*cos(18)},{\R*sin(16)}) {4};
		
		\foreach \from/\to in {n0/n2,n0/n3,n0/n4,n1/n2,n1/n3,n1/n4,n2/n4,n3/n4,n2/n3}
		\draw (\from) -- (\to);
		\end{tikzpicture}
        \centering
		\caption{$N=5$, \\$f_t(G) = 1$.}
	\end{subfigure}
	\hfill
	\begin{subfigure}[t]{.15\textwidth}
		\centering
		\def\R{2.8}
		\begin{tikzpicture}
		[scale=.3,auto=center,every node/.style={circle,fill=black!20,inner sep=1pt}]
		\node (n0) at ({\R*cos(90)},{\R*sin(90)}) {0};
		\node (n1) at ({\R*cos(150)},{\R*sin(150)}) {1};
		\node (n2) at ({\R*cos(210)},{\R*sin(210)}) {2};
		\node (n3) at ({\R*cos(270)},{\R*sin(270)}) {3};
		\node (n4) at ({\R*cos(330)},{\R*sin(330)}) {4};
		\node (n5) at ({\R*cos(30)},{\R*sin(30)}) {5};
		
		\foreach \from/\to in {n0/n3,n0/n4,n0/n5,n1/n3,n1/n4,n1/n5,n2/n3,n2/n4,n2/n5,n3/n4}
		\draw (\from) -- (\to);
		\end{tikzpicture}
        \centering
		\caption{$N=6$, \\$f_t(G) = 1$.}
	\end{subfigure}
	\hfill
	\begin{subfigure}[t]{.15\textwidth}
		\centering
		\def\R{2.8}
		\begin{tikzpicture}
		[scale=.3,auto=center,every node/.style={circle,fill=black!20,inner sep=1pt}]
		\node (n0) at ({\R*cos(90)},{\R*sin(90)}) {0};
		\node (n1) at ({\R*cos(150)},{\R*sin(150)}) {1};
		\node (n2) at ({\R*cos(210)},{\R*sin(210)}) {2};
		\node (n3) at ({\R*cos(270)},{\R*sin(270)}) {3};
		\node (n4) at ({\R*cos(330)},{\R*sin(330)}) {4};
		\node (n5) at ({\R*cos(30)},{\R*sin(30)}) {5};
		
		\foreach \from/\to in {n0/n2,n0/n1,n0/n4,n0/n5,n1/n2,n1/n3,n1/n5,n2/n3,n2/n5,n2/n4,n3/n4,n3/n5,n4/n5}
		\draw (\from) -- (\to);
		\end{tikzpicture}
        \centering
		\caption{$N=6$, \\$f_t(G) = 1$.}
	\end{subfigure}
	\vspace{1em}
	\begin{subfigure}[t]{.15\textwidth}
		\centering
		\def\R{2.8}
		\begin{tikzpicture}
		[scale=.3,auto=center,every node/.style={circle,fill=black!20,inner sep=1pt}]
		\node (n0) at ({\R*cos(90)},{\R*sin(90)}) {0};
		\node (n1) at ({\R*cos(141.428571429)},{\R*sin(141.428571429)}) {1};
		\node (n2) at ({\R*cos(192.857142857)},{\R*sin(192.857142857)}) {2};
		\node (n3) at ({\R*cos(244.285714286)},{\R*sin(244.285714286)}) {3};
		\node (n4) at ({\R*cos(295.714285714)},{\R*sin(295.714285714)}) {4};
		\node (n5) at ({\R*cos(347.142857143)},{\R*sin(347.142857143)}) {5};
		\node (n6) at ({\R*cos(38.5714285714)},{\R*sin(38.5714285714)}) {6};
		
		\foreach \from/\to in {n0/n4,n0/n5,n0/n6,n1/n4,n1/n5,n1/n6,n2/n3,n2/n5,n2/n6,n3/n4,n3/n6,n4/n5,n4/n6,n5/n6}
		\draw (\from) -- (\to);
		\end{tikzpicture}
        \centering
		\caption{$N=7$, \\$f_t(G) = 1$.}
	\end{subfigure}
	\hfill
	\begin{subfigure}[t]{.15\textwidth}
		\centering
		\def\R{2.8}
		\begin{tikzpicture}
		[scale=.3,auto=center,every node/.style={circle,fill=black!20,inner sep=1pt}]
		\node (n0) at ({\R*cos(90)},{\R*sin(90)}) {0};
		\node (n1) at ({\R*cos(141.428571429)},{\R*sin(141.428571429)}) {1};
		\node (n2) at ({\R*cos(192.857142857)},{\R*sin(192.857142857)}) {2};
		\node (n3) at ({\R*cos(244.285714286)},{\R*sin(244.285714286)}) {3};
		\node (n4) at ({\R*cos(295.714285714)},{\R*sin(295.714285714)}) {4};
		\node (n5) at ({\R*cos(347.142857143)},{\R*sin(347.142857143)}) {5};
		\node (n6) at ({\R*cos(38.5714285714)},{\R*sin(38.5714285714)}) {6};
		
		\foreach \from/\to in {n0/n2,n0/n3,n0/n4,n0/n5,n0/n6,n1/n2,n1/n3,n1/n4,n1/n5,n1/n6,n2/n4,n2/n5,n2/n6,n3/n4,n3/n5,n3/n6,n4/n6,n5/n6,n4/n5}
		\draw (\from) -- (\to);
		\end{tikzpicture}
        \centering
		\caption{$N=7$, \\$f_t(G)=2$.}
	\end{subfigure}
	\hfill
	\begin{subfigure}[t]{.15\textwidth}
		\centering
		\def\R{2.8}
		\begin{tikzpicture}
		[scale=.3,auto=center,every node/.style={circle,fill=black!20,inner sep=1pt}]
		\node (n0) at ({\R*cos(90)},{\R*sin(90)}) {0};
		\node (n1) at ({\R*cos(135)},{\R*sin(135)}) {1};
		\node (n2) at ({\R*cos(180)},{\R*sin(180)}) {2};
		\node (n3) at ({\R*cos(225)},{\R*sin(225)}) {3};
		\node (n4) at ({\R*cos(270)},{\R*sin(270)}) {4};
		\node (n5) at ({\R*cos(315)},{\R*sin(315)}) {5};
		\node (n6) at ({\R*cos(0)},{\R*sin(0)}) {6};
		\node (n7) at ({\R*cos(45)},{\R*sin(45)}) {7};
		
		\foreach \from/\to in {n0/n1,n0/n2,n0/n3,n0/n4,n0/n5,n0/n6,n0/n7,n1/n4,n1/n5,n1/n6,n1/n7,n2/n3,n2/n4,n2/n7,n3/n4,n3/n5,n3/n6,n3/n7,n4/n5,n4/n6,n4/n7,n5/n6,n5/n7,n6/n7}
		\draw (\from) -- (\to);
		\end{tikzpicture}
        \centering
		\caption{$N=8$, \\$f_t(G) = 2$.}
	\end{subfigure}
	\caption{The tight bounds of maximum resilience for several incomplete NCS graphs with $N=5,6,7,8$.}
	\label{fig:uncomplete}
      \end{figure}

\section{Fast NCS Algorithm with Fault Correction}
\label{sec:newal}
Algorithm~\ref{alg:error-correction} enumerates all possible distributions of the faults, leading to the exponential time complexity in the worst case. From the proof of Theorem~\ref{thm:new1}, if we can find a fault-free path connecting $n_i$ and $n_0$, we can obtain the correct estimate of clock offset between $n_i$ and $n_0$. This  observation sheds light on a new NCS algorithm that can correct the faults without enumerating all possible distributions of the faults. In this section, we present such a new NCS algorithm. Then, we show that the new NCS algorithm achieves the same fault correction capability as Algorithm~\ref{alg:error-correction} and analyze the time complexity of the new NCS algorithm.

\subsection{Fast NCS Algorithm}
\label{subsec:newalgorithm}
First, we prove that we can correctly estimate the p2p clock synchronization offsets on the fault-free path. We have the following lemma.
\begin{lemma}
\label{lem:newlemfault}
Any fault-free path between the reference node $n_0$ and node $n_i$ {\yellow leads to the correct estimate of $n_i$'s clock offset, i.e.}, $\hat{\delta_{i}} = \delta_{i}$.
\end{lemma}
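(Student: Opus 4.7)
The plan is to imitate the substitution argument that already appeared in the proof of Lemma~\ref{lemma:graph1} and in the backward implication of Theorem~\ref{thm:new1}, but now stated as a standalone result about a single fault-free path. First I would fix a fault-free path $\langle n_0, n_{w_1}, n_{w_2}, \ldots, n_{w_p}, n_i \rangle$ between the reference node $n_0$ and $n_i$. By the definition of a fault-free path in \sect\ref{subsec:subgraph-method}, every edge on it is associated with neither an estimated nor an actual synchronization fault, so for every such edge the corresponding equation in Eq.~(\ref{eq:error-correction}) contains no $\hat{e}$ term, and the right-hand side measurement $\widetilde{\delta}$ equals the true clock offset difference.

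Next I would write down the chain of equations induced by the path: $\hat{\delta}_{w_1 0} = \delta_{w_1 0}$ from the edge $n_0 \leftrightarrow n_{w_1}$, then $\hat{\delta}_{w_{k+1} 0} - \hat{\delta}_{w_k 0} = \delta_{w_{k+1} 0} - \delta_{w_k 0}$ for each consecutive pair $n_{w_k} \leftrightarrow n_{w_{k+1}}$ on the path, and finally $\hat{\delta}_{i0} - \hat{\delta}_{w_p 0} = \delta_{i0} - \delta_{w_p 0}$ from the last edge. A straightforward induction on the position along the path, seeded by $\hat{\delta}_{w_1 0} = \delta_{w_1 0}$ and propagated by substituting the solved estimate into the next equation, yields $\hat{\delta}_{w_k 0} = \delta_{w_k 0}$ for all $k$, and in particular $\hat{\delta}_{i0} = \delta_{i0}$, which is the claim.

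Since the argument is purely a sequential substitution along a fault-free chain, there is no real obstacle. The only subtlety worth spelling out is to confirm that each edge on the path really does contribute an equation free of $\hat{e}$ and free of actual fault $e_{\cdot\cdot}$ on the right-hand side: this is immediate from the definition of ``fault-free path'' but should be stated explicitly so that the form $\hat{\delta}_{w_{k+1} 0} - \hat{\delta}_{w_k 0} = \delta_{w_{k+1} 0} - \delta_{w_k 0}$ used in the induction is justified. Because the result mirrors a step already performed inside the proofs of Lemma~\ref{lemma:graph1} and Theorem~\ref{thm:new1}, the proof will be short and will serve as the clean building block on which the new fast NCS algorithm of \sect\ref{sec:newal} can rest.
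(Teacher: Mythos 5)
Your proposal is correct and follows essentially the same argument as the paper: form the chain of equations along the fault-free path (each free of $\hat{e}$ terms and actual faults by definition of fault-free) and substitute sequentially from $\hat{\delta}_{w_1 0} = \delta_{w_1 0}$ to obtain $\hat{\delta}_{i0} = \delta_{i0}$. Your explicit inductive phrasing and the remark justifying the form of each equation are just a slightly more careful write-up of the paper's substitution chain.
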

\begin{proof}
 {\yellow Denote the path by} $\langle n_0, n_{w_1}, n_{w_2}, \ldots, n_{w_p}, n_i \rangle$. We can formulate a system of equations along the path. Since all edges on the path are associated with neither estimated nor actual synchronization fault, the equation system consists of a chain $\hat{\delta}_{w_10}=\delta_{w_10}$, $\hat{\delta}_{w_10}-\hat{\delta}_{w_20}=\delta_{w_10}-\delta_{w_20}$, $\hat{\delta}_{w_20}-\hat{\delta}_{w_30}=\delta_{w_20}-\delta_{w_30}$, $\ldots$, $\hat{\delta}_{w_{p-1}0}-\hat{\delta}_{w_p0}=\delta_{w_{p-1}0}-\delta_{w_p0}$, $\hat{\delta}_{w_p0}-\hat{\delta}_{w_i0}=\delta_{w_p0}-\delta_{w_i0}$. By substituting the solution of the previous equation to the next equation in the above chain of equations, we have a solution that $\hat{\delta_{i}} = \delta_{i}$.
\end{proof}
{\yellow From Lemma~3, if} for every node $n_i$ in $G$ we can find at least one fault-free path connecting $n_i$ and $n_0$, we can obtain {\yellow all the correct clock offset estimates, i.e.}, $\hat{\vecg{\delta}}=\vecg{\delta}$. Then, we can use the solution $\hat{\vecg{\delta}}=\vecg{\delta}$ to {\yellow pinpoint the faulty p2p synchronization sessions. Specifically, if} {\yellow $\widetilde{\delta}_{ij} \neq \hat{\delta}_i - \hat{\delta}_j$, where $\hat{\delta}_i$ and $\hat{\delta}_j$ are from $\hat{\vecg{\delta}}$}, the p2p synchronization session between $n_i$ and $n_j$ is faulty. The fault is {\yellow given by} $e_{ij}= \widetilde{\delta}_{ij}-(\hat{\delta_{i}}-\hat{\delta_{j}})$.

 {\yellow Thus, the NCS problem becomes} how to find a fault-free path between any node $n_i$ and the reference node $n_0$. {\yellow This is challenging because the system has no knowledge of the number of faults and their distribution among the $|E|$ sessions.} We {\yellow address this challenge using a voting scheme}. The details are as follows. {\yellow We let $Z_i$ denote the maximum number of pairwise edge-disjoint paths connecting $n_i$ and $n_0$ and let $S_i$ denote a set of such paths. Existing algorithms can be used to compute $Z_i$ and $S_i$, such as those presented in \cite{kaufmann1991faster}, \cite{wagner1995linear} and \cite{perl1978finding}. The worst-case time complexity of these algorithms is $O(|V|^2)$. Assuming every path in $S_i$ is fault-free, we apply the approach described in Lemma~\ref{lem:newlemfault} to compute $\hat{\delta}_i$ for every path in $S_i$.} Note that {\yellow there might be multiple different sets of pairwise edge-disjoint paths with the identical set cardinality. The $S_i$ used in the following discussion can be any one of them.} If all the paths in the set $S_i$ are really fault-free, {\yellow their corresponding estimated clock offsets should be the same.} Otherwise, {\yellow they will be different.} We use the most {\yellow frequent} value {\yellow among all the clock offset estimates} as {\yellow the voting result, which is yielded as the final clock offset estimate $\hat{\delta}_i$.}
 {\yellow We repeat the above process for every node $n_i$ to generate the voting result $\hat{\delta}_i$. After that, we can correct the faults by following the procedure described in last paragraph. Algorithm~\ref{alg:error-correction2} shows the pseudocode of the new NCS algorithm. Algorithm~\ref{alg:error-correction2} has the same practicality as Algorithm~\ref{alg:error-correction} in that it requires neither the actual number nor the actual distribution of the p2p synchronization faults.}

\begin{algorithm}[t]
\caption{{\blue Fast NCS algorithm with fault correction}.}
\label{alg:error-correction2}
\begin{algorithmic}[1]
\REQUIRE $\{ \widetilde{\delta}_{ij} | \forall n_i \leftrightarrow n_j \in E \}$
\ENSURE $\{ \hat{\delta}_{j0}| \forall j \in [1, N-1] \}$ and $\{\hat{e}_{ij} | \forall n_i \leftrightarrow n_j \in E \}$
\label{line:start2}
\FOR{each node $n_i$ in $V$  where $i \neq 0$}
\label{line:foreach2}
\STATE compute the maximum number of pairwise edge-disjoint paths $Z_i$ and find a corresponding set of such paths $S_i$
\FOR{each path $P_k \in S_i$ }
\STATE compute the corresponding value of the estimated clock offset $\hat{\delta}_{i0}^k$
\ENDFOR
\STATE $\hat{\delta}_{i0}$ $\leftarrow$ the most frequent value in ${\hat\delta^{k}_{i0}}$, where $k \in \{1,2,3,...,Z_i\}$
\label{line:judge}
\ENDFOR
\label{line:endfor2}
\FOR{each synchronization session $n_i \leftrightarrow n_j$}
\IF{$\widetilde{\delta}_{ij}-(\hat{\delta_{i}}-\hat{\delta_{j}}) \neq 0$}
\STATE $\hat{e}_{ij}=\widetilde{\delta}_{ij}-(\hat{\delta_{i}}-\hat{\delta_{j}})$
\ENDIF
\ENDFOR
\RETURN $\{ \hat{\delta}_{j0}| \forall j \in [1, N-1] \}$ and $\{\hat{e}_{ij} | \forall n_i \leftrightarrow n_j \in E \}$
\end{algorithmic}
\end{algorithm}

\subsection{Tight Bound of Maximum Resilience of Fast NCS with Fault Correction}
\label{subsec:comparision}
{\yellow In this section, we show that Algorithms~\ref{alg:error-correction} and~\ref{alg:error-correction2} have the same fault correction capability. Therefore, the networks with Algorithms~\ref{alg:error-correction} and~\ref{alg:error-correction2} as the NCS algorithm respectively have the same tight bound of maximum resilience.}
\begin{theorem}
\label{thm:same}
 {\yellow For any NCS graph $G$, Algorithms~\ref{alg:error-correction} and~\ref{alg:error-correction2} achieve the same tight bound of maximum resilience.}
\end{theorem}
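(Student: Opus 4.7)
The plan is to establish that Algorithm~\ref{alg:error-correction2} is $K$-resilient if and only if $G$ is $(2K+1)$-edge-connected; because Theorem~\ref{thm:new1} gives exactly this characterization for Algorithm~\ref{alg:error-correction}, the two algorithms will then share the same tight bound of maximum resilience, which is the assertion of this theorem.

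For the forward implication, suppose $G$ is $(2K+1)$-edge-connected, so Menger's theorem yields $Z_i \ge 2K+1$ for every non-reference node $n_i$. Under any distribution of $K$ actual faults, at most $K$ of the $Z_i$ pairwise edge-disjoint paths in $S_i$ can contain a faulty edge, so at least $K+1$ paths in $S_i$ are fault-free and, by Lemma~\ref{lem:newlemfault}, each yields the correct estimate $\delta_{i0}$. Any other value can be produced by at most $K$ paths, so the majority vote on Line~\ref{line:judge} returns $\hat{\delta}_{i0}=\delta_{i0}$; with every clock offset correct, the subsequent step $\hat{e}_{ij}=\widetilde{\delta}_{ij}-(\hat{\delta}_i-\hat{\delta}_j)$ recovers every true fault, producing the unique correct solution required by Definition~\ref{def:resilience-condition}.

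For the backward implication, suppose $G$ is not $(2K+1)$-edge-connected. By Menger's theorem, a minimum cut $C$ of size $L \le 2K$ partitions $G$ into $G_1\cup G_2$; without loss of generality $n_0\in G_1$, and choosing any $n_m\in G_2$ gives $Z_m\le L\le 2K$. I would then construct a $K$-fault distribution by placing $\min(Z_m,K)$ actual faults with a common magnitude $e$ on $\min(Z_m,K)$ distinct edges of the $Z_m$-cut, with signs chosen so that every edge-disjoint $n_0$-to-$n_m$ path that crosses a faulty cut edge produces the same erroneous estimate $\delta_{m0}+e$; any leftover faults (possible only when $Z_m<K$) are placed on edges of $G$ outside every path in $S_m$. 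When $Z_m<2K$, the wrong value $\delta_{m0}+e$ receives strictly more votes on Line~\ref{line:judge} than the correct value $\delta_{m0}$, so $\hat{\delta}_{m0}\ne\delta_{m0}$; when $Z_m=2K$ the two counts tie at $K$, making the ``most frequent value'' non-unique, so the algorithm cannot guarantee the correct output and $K$-resilience in the sense of Definition~\ref{def:resilience} already fails.

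The main obstacle I expect is the boundary case $Z_m=2K$: because $K$ faults can render at most $K$ of the $Z_m$ edge-disjoint paths non-fault-free, a strict wrong-majority counterexample is structurally impossible, and the argument must instead invoke the ambiguity of the tied vote to rule out deterministic correction. Together the two implications give $K$-resilience of Algorithm~\ref{alg:error-correction2} iff $(2K+1)$-edge-connectivity of $G$, matching Theorem~\ref{thm:new1} for Algorithm~\ref{alg:error-correction} and therefore equating the two tight bounds of maximum resilience.
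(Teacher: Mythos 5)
Your proposal is correct and, at its core, takes the same route as the paper: both reduce the comparison to Theorem~\ref{thm:new1}'s edge-connectivity characterization and then argue via Menger's theorem, Lemma~\ref{lem:newlemfault}, and the vote of Line~\ref{line:judge} that $(2K+1)$-edge-connectivity is exactly what Algorithm~\ref{alg:error-correction2} needs; your positive direction coincides with the paper's. The negative direction is organized differently. The paper argues relative to Algorithm~\ref{alg:error-correction}'s tight bound $K$: it places $K'\ge K+1$ actual faults on a family of $2K+1$ edge-disjoint paths, so the corrupted paths strictly outnumber the fault-free ones and the wrong value wins the vote outright. Your version fixes exactly $K$ faults against a cut of size at most $2K$, which is the cleaner statement of the claim but forces the boundary case $Z_m=2K$, where (as you correctly observe) only a tie is achievable, since each fault can spoil at most one path of an edge-disjoint family.

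Two points would make your negative direction fully rigorous. First, ``a tied vote cannot guarantee correctness'' is the right instinct but should be closed with the paper's own counterexample logic: with equal-valued faults on $K$ of the $2K$ cut edges, the measured data is identical to that produced by a second, distinct $K$-fault ground truth (all offsets on the far side of the cut shifted, faults on the other $K$ cut edges, exactly as in Eq.~(\ref{eq:wrongsolution})), so no deterministic tie-breaking rule can be correct in both scenarios and $K$-resilience per Definition~\ref{def:resilience} fails. Notably, the paper's proof quietly sidesteps this even-connectivity case (its assertion that some pair has exactly $2K+1$ disjoint paths can fail when the edge connectivity is $2K+2$), so your explicit treatment is an improvement once completed this way. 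Second, when $Z_m<K$ you place the leftover faults ``outside every path in $S_i$,'' but $S_i$ is chosen by the algorithm, not by the adversary, and such edges need not exist; it suffices instead to give the leftover faults generic values, since every path of any maximum edge-disjoint family already crosses a faulty cut edge and generic values preclude cancellation, so the correct offset receives no votes regardless of which $S_i$ the algorithm selects.
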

\begin{proof}
 {\yellow First, we prove that, if an NCS graph is $K$-resilient under Algorithm~\ref{alg:error-correction}, it is also $K$-resilient under Algorithm~3.} {\yellow From Theorem~\ref{thm:new1}, the $K$-resilience of $G$ under Algorithm~\ref{alg:error-correction} is equivalent to that} $G$ is $(2K+1)$-edge-connected. From the Menger's theorem{\yellow, the $(2K+1)$-edge-connectivity means that} for any {\yellow pair of nodes} $n_i$ and $n_j$, there exist at least $2K+1$ edge-disjoint paths connecting them. Since the number of actual faults is no greater than $K$, there are at least $K+1$ fault-free paths between $n_i$ and $n_0$. {\yellow Thus, the majority voting in Algorithm~\ref{alg:error-correction2} must give the correct result and Algorithm~\ref{alg:error-correction2} can correct the faults. Therefore, the NCS graph is also $K$-resilient under Algorithm~\ref{alg:error-correction2}.}

{\yellow Then, we prove that, if an NCS graph $G$ is not $K'$-resilient under Algorithm~\ref{alg:error-correction}, it is also not $K'$-resilient under Algorithm~\ref{alg:error-correction2}. We assume $G$ with Algorithm~\ref{alg:error-correction} can correct at most $K$ faults, where $K' \ge K +1$. From Theorem~\ref{thm:new1}, $G$ is $(2K+1)$-edge-connected.} Thus, there exists at least one node pair $n_i$ and $n_j$ that have $2K+1$ edge-disjoint paths and no more connecting them. {\yellow Note that the system's resilience is independent from the choice of reference node, i.e., any node can be designated as the reference node $n_0$. Without loss of generality, we designate $n_j$ as $n_0$. Now, we consider the cases where there are $K'$ faults, i.e., there are at least $K+1$ faults since $K' \ge K+1$. For the case where all the $K+1$ faults occur on the paths among the $2K+1$ edge-disjoint paths, the remaining fault-free edge-disjoint paths do not form the majority of Algorithm~3's voting. As a result, Algorithm~\ref{alg:error-correction2} cannot correctly estimate the clock offset of $n_i$ and cannot correct the faults. Thus, the $G$ with Algorithm~{\ref{alg:error-correction2}} is not $K'$-resilient.}
\end{proof}

{\yellow  Now, we analyze the time complexity of Algorithms~\ref{alg:error-correction} and~\ref{alg:error-correction2}. If the algorithm in \cite{perl1978finding} is used to compute $Z_i$ and $S_1$, Line~2 of Algorithm~\ref{alg:error-correction2} has a time complexity of $O(|V|^2)$. The loop from Line~3 to Line~5 has a time complexity of $O(|V|)$, because from Theorems~\ref{thm:1} and~\ref{thm:new1}, the maximum number of pairwise edge-disjoint paths is less than $ \left\lfloor \frac{|V|}{2} \right\rfloor - 1$. Line~6 has a time complexity of $O(|V|)$ \cite{parhami1994voting}. Thus, the loop from Line~1 to Line~7 has a time complexity of $O(|V|^3)$. The loop from Line~8 to Line~12 has a time complexity $O(|V|^2)$. Therefore, the time complexity of Algorithm~\ref{alg:error-correction2} is $O(|V|^3+|V|^2)=O(|V|^3)$.

In \sect\ref{subsec:system-model}, we have shown that the time complexity upper bound of Algorithm~1 is $O(2^{|E|})$. Now, we derive the time complexity lower bound of Algorithm~1 for complete NCS graphs that are $K$-resilient. When there are $K$ faults, the time complexity of Algorithm~1 is $O \left(  \sum_{k=0}^{K} {|E| \choose k}  \right)$). From Theorem~\ref{thm:1}, Algorithm~\ref{alg:error-correction} can correct at most $K=\left\lfloor \frac{N}{2} \right\rfloor - 1$ faults. Thus, the time complexity of Algorithm~\ref{alg:error-correction} is $O \left(  \sum_{k=0}^{\left\lfloor \frac{N}{2} \right\rfloor - 1} {\frac{N(N-1)}{2} \choose k } \right)$.
When $N > 4$, we have the following inequality:
\begin{equation}
      \sum_{k=0}^{\left\lfloor \frac{N}{2} \right\rfloor - 1} {\frac{N(N-1)}{2} \choose k } >  \sum_{k=0}^{N/2} {N/2 \choose k} = \sqrt{2}^N.
\end{equation}
Thus, Algorithm~\ref{alg:error-correction} has an exponential complexity.
}

From the above analysis, Algorithm~\ref{alg:error-correction2} achieves the same fault correction capability as Algorithm~\ref{alg:error-correction} with a cubic time complexity. In practice, Algorithm~\ref{alg:error-correction2} should be used. Note that as Algorithm~\ref{alg:error-correction} is intuitive, the definition of fault resilience based on Algorithm~\ref{alg:error-correction} is also intuitive. Differently, the development of Algorithm~\ref{alg:error-correction2} is based on our further analysis on the fault resilience. As a result, the fault resilience notion behind Algorithm~\ref{alg:error-correction2} is not direct. From this sense, Algorithm~\ref{alg:error-correction}, though not scalable to the network size, helps achieve a clear definition of fault resilience and is still a basis of this paper.

\section{Minimum NCS Graph for $K$-Resilience}
\label{sec:another}

\sect\ref{sec:fully} and \sect\ref{sec:lower-bounds} analyzed the bounds of the number of faults that Algorithm~\ref{alg:error-correction} and Algorithm~\ref{alg:error-correction2} can correct. Differently, in this section, we aim at minimizing the number of p2p synchronization sessions while maintaining the $K$-resilience of a network under Algorithms~\ref{alg:error-correction} and~\ref{alg:error-correction2}. In other words, we aim at looking for the {\em minimum NCS graph} for $K$-resilience, which is formally defined as follows.
\begin{definition}[Minimum NCS graph for $K$-resilience]
  Denote by $V$ a set of $N$ nodes. An NCS graph $G = (V, E)$ is a minimum NCS graph for $K$-resilience if the network with $G$ is $K$-resilient and any network with the NCS graph $G'=(V, E')$ where $|E'| < |E|$ is not $K$-resilient.
\end{definition}

With minimum NCS graphs, we can minimize the communication cost of NCS without compromising fault correction capability. In \sect\ref{subsec:minimum-ncs}, we develop an algorithm based on the $K$-resilience's sufficient and necessary condition given by Theorem~\ref{thm:new1} to compute the minimum NCS subgraphs and show several examples. In \sect\ref{subsec:min-ncs-condition}, we derive the theoretic lower bound of the number of edges in the NCS graph that provides $K$-resilience. The theoretic lower bound can be used to understand the order of magnitude of the number of edges in a computed minimum NCS graph. In particular, the number of edges of a computed minimum NCS graph is identical to the theoretic lower bound  in our computed examples. {\yellow  This implies that the theoretic lower bound is tight.}

\subsection{The Algorithm to Compute Minimum NCS Graphs}
\label{subsec:minimum-ncs}

\begin{algorithm}[t]
  \caption{Compute minimum NCS graphs for any $N$-node network with $K$-resilience}
\label{alg:quasi-minimum}
\begin{algorithmic}[1]
\REQUIRE The number of nodes $N$, resilience value $K$
\ENSURE  A set of minimum NCS graphs $\mathcal{G}$
\STATE $m \leftarrow 0$, $\mathcal{G} \leftarrow \emptyset$
\STATE construct the $N$-node complete NCS graph $G_c = (V, E_c)$
\WHILE{$m \le \frac{N(N-1)}{2}$}
\label{line:while-loop-start}
\STATE $\mathcal{G}_{\text{current}} \leftarrow \emptyset$
\label{line:reset}
\FOR{each combination of $m$ sessions among $E_c$}
\STATE remove the $m$ sessions from $G_c$ to generate an NCS subgraph $G' =(V,E')$\label{line:min-graph-generate}
\STATE \textit{resilient} $\leftarrow$ \TRUE
\label{line:test-start}
\FOR{each combination of $2K$ equations selected from all the $|E'|$ sessions}
\STATE remove the selected $2K$ edges to generate an NCS subgraph $G''$
\IF{$G''$ is unconnected}
\STATE \textit{resilient} $\leftarrow$ \FALSE $\quad$ \COMMENT{$G'$ is not $K$-resilient}
\STATE break
\ENDIF
\ENDFOR
\label{line:test-end}
\IF{\textit{resilient} = \TRUE}
\STATE $\mathcal{G}_{\text{current}} \leftarrow \mathcal{G}_{\text{current}} \cup \{G'\}$
\label{line:inclusion}
\ENDIF
\ENDFOR
\IF{$\mathcal{G}_{\text{current}} = \emptyset$}
\label{line:return-start}
\RETURN{$\mathcal{G}$} \COMMENT{each $G'$ is not $K$-resilient for current $m$}
\ELSE
\STATE $\mathcal{G} \leftarrow \mathcal{G}_{\text{current}}$
\ENDIF
\STATE $m \leftarrow m+1$
\label{line:return-end}
\ENDWHILE
\label{line:while-loop-end}
\end{algorithmic}
\end{algorithm}

Based on Theorem~\ref{thm:new1}, Algorithm~\ref{alg:quasi-minimum} finds the minimum NCS graphs for any $N$-node network to ensure $K$-resilience. Note that a network may have multiple different minimum NCS graphs. Algorithm~\ref{alg:quasi-minimum} returns a set of minimum NCS graphs that have the same number of edges. We now explain Algorithm~\ref{alg:quasi-minimum}. The algorithm uses an $N$-node complete NCS graph $G_c = (V, E_c)$ as the basis to look for the minimum NCS graphs. In each iteration of the while loop (from Line~\ref{line:while-loop-start} to Line~\ref{line:while-loop-end}), the $m$ is increased by one from zero, where the $m$ represents the number of edges removed from the $E_c$ of $G_c$ to generate a candidate minimum NCS graph $G'$ (Line~\ref{line:min-graph-generate}). For each possible $G'$, the snippet from Line~\ref{line:test-start} to Line~\ref{line:test-end} uses Theorem~\ref{thm:new1} to check whether $G'$ is $K$-resilient. If $G'$ is $K$-resilient, $G'$ is included into a set $\mathcal{G}_{\text{current}}$ (Line~\ref{line:inclusion}) that is reset to an empty set for the next $m$ value (Line~\ref{line:reset}). If all possible $G'$ graphs with the current $m$ value cannot be confirmed $K$-resilient (i.e., $\mathcal{G}_{\text{current}}=\emptyset$), the algorithm returns the non-empty $\mathcal{G}_{\text{current}}$ in the previous iteration of the while loop. This mechanism is implemented by Line~\ref{line:return-start} to Line~\ref{line:return-end}. The algorithm gives the minimum NCS graphs, because the snippet from Line~\ref{line:test-start} to Line~\ref{line:test-end} can confirm the $K$-resilience of the candidate $G'$.

Fig.~\ref{fig:quasi} shows several minimum NCS graphs computed by Algorithm~\ref{alg:quasi-minimum} under different settings of $N$ and $K$. In \sect\ref{subsec:min-ncs-condition}, we will show that the number of edges in all these graphs are equal to the theoretic lower bound.

\subsection{Lower Bound of the Number of Edges for $K$-Resilience}
\label{subsec:min-ncs-condition}
\begin{figure}
	\begin{subfigure}[t]{.15\textwidth}
		\centering
		\def\R{3}
		\begin{tikzpicture}
		[scale=.3,auto=center,every node/.style={circle,fill=black!20,inner sep=1pt}]
		\node (n0) at ({\R*cos(90)},{\R*sin(90)}) {0};
		\node (n1) at ({\R*cos(162)},{\R*sin(162)}) {1};
		\node (n2) at ({\R*cos(234)},{\R*sin(234)}) {2};
		\node (n3) at ({\R*cos(306)},{\R*sin(306)}) {3};
		\node (n4) at ({\R*cos(18)},{\R*sin(16)}) {4};
		
		\foreach \from/\to in {n0/n2,n0/n3,n0/n4,n1/n2,n1/n3,n1/n4,n2/n4,n3/n4}
		\draw (\from) -- (\to);
		\end{tikzpicture}
		\caption{$N=5$, $K=1$.}
	\end{subfigure}
	\hfill
	\begin{subfigure}[t]{.15\textwidth}
		\centering
		\def\R{2.8}
		\begin{tikzpicture}
		[scale=.3,auto=center,every node/.style={circle,fill=black!20,inner sep=1pt}]
		\node (n0) at ({\R*cos(90)},{\R*sin(90)}) {0};
		\node (n1) at ({\R*cos(150)},{\R*sin(150)}) {1};
		\node (n2) at ({\R*cos(210)},{\R*sin(210)}) {2};
		\node (n3) at ({\R*cos(270)},{\R*sin(270)}) {3};
		\node (n4) at ({\R*cos(330)},{\R*sin(330)}) {4};
		\node (n5) at ({\R*cos(30)},{\R*sin(30)}) {5};
		
		\foreach \from/\to in {n0/n3,n0/n4,n0/n5,n1/n3,n1/n4,n1/n5,n2/n3,n2/n4,n2/n5}
		\draw (\from) -- (\to);
		\end{tikzpicture}
		\caption{$N=6$, $K=1$.}
	\end{subfigure}
	\hfill
	\begin{subfigure}[t]{.15\textwidth}
		\centering
		\def\R{2.8}
		\begin{tikzpicture}
		[scale=.3,auto=center,every node/.style={circle,fill=black!20,inner sep=1pt}]
		\node (n0) at ({\R*cos(90)},{\R*sin(90)}) {0};
		\node (n1) at ({\R*cos(150)},{\R*sin(150)}) {1};
		\node (n2) at ({\R*cos(210)},{\R*sin(210)}) {2};
		\node (n3) at ({\R*cos(270)},{\R*sin(270)}) {3};
		\node (n4) at ({\R*cos(330)},{\R*sin(330)}) {4};
		\node (n5) at ({\R*cos(30)},{\R*sin(30)}) {5};
		
		\foreach \from/\to in {n0/n2,n0/n1,n0/n3,n0/n4,n0/n5,n1/n2,n1/n3,n1/n4,n1/n5,n2/n3,n2/n5,n2/n4,n3/n4,n3/n5,n4/n5}
		\draw (\from) -- (\to);
		\end{tikzpicture}
		\caption{$N=6$, $K=2$.}
	\end{subfigure}
	\vspace{1em}
	\begin{subfigure}[t]{.15\textwidth}
		\centering
		\def\R{2.8}
		\begin{tikzpicture}
		[scale=.3,auto=center,every node/.style={circle,fill=black!20,inner sep=1pt}]
		\node (n0) at ({\R*cos(90)},{\R*sin(90)}) {0};
		\node (n1) at ({\R*cos(141.428571429)},{\R*sin(141.428571429)}) {1};
		\node (n2) at ({\R*cos(192.857142857)},{\R*sin(192.857142857)}) {2};
		\node (n3) at ({\R*cos(244.285714286)},{\R*sin(244.285714286)}) {3};
		\node (n4) at ({\R*cos(295.714285714)},{\R*sin(295.714285714)}) {4};
		\node (n5) at ({\R*cos(347.142857143)},{\R*sin(347.142857143)}) {5};
		\node (n6) at ({\R*cos(38.5714285714)},{\R*sin(38.5714285714)}) {6};
		
		\foreach \from/\to in {n0/n4,n0/n5,n0/n6,n1/n4,n1/n5,n1/n6,n2/n3,n2/n5,n2/n6,n3/n4,n3/n6}
		\draw (\from) -- (\to);
		\end{tikzpicture}
		\caption{$N=7$, $K=1$.}
	\end{subfigure}
	\hfill
	\begin{subfigure}[t]{.15\textwidth}
		\centering
		\def\R{2.8}
		\begin{tikzpicture}
		[scale=.3,auto=center,every node/.style={circle,fill=black!20,inner sep=1pt}]
		\node (n0) at ({\R*cos(90)},{\R*sin(90)}) {0};
		\node (n1) at ({\R*cos(141.428571429)},{\R*sin(141.428571429)}) {1};
		\node (n2) at ({\R*cos(192.857142857)},{\R*sin(192.857142857)}) {2};
		\node (n3) at ({\R*cos(244.285714286)},{\R*sin(244.285714286)}) {3};
		\node (n4) at ({\R*cos(295.714285714)},{\R*sin(295.714285714)}) {4};
		\node (n5) at ({\R*cos(347.142857143)},{\R*sin(347.142857143)}) {5};
		\node (n6) at ({\R*cos(38.5714285714)},{\R*sin(38.5714285714)}) {6};
		
		\foreach \from/\to in {n0/n2,n0/n3,n0/n4,n0/n5,n0/n6,n1/n2,n1/n3,n1/n4,n1/n5,n1/n6,n2/n4,n2/n5,n2/n6,n3/n4,n3/n5,n3/n6,n4/n6,n5/n6}
		\draw (\from) -- (\to);
		\end{tikzpicture}
		\caption{$N=7$, $K=2$.}
	\end{subfigure}
	\hfill
	\begin{subfigure}[t]{.15\textwidth}
		\centering
		\def\R{2.8}
		\begin{tikzpicture}
		[scale=.3,auto=center,every node/.style={circle,fill=black!20,inner sep=1pt}]
		\node (n0) at ({\R*cos(90)},{\R*sin(90)}) {0};
		\node (n1) at ({\R*cos(135)},{\R*sin(135)}) {1};
		\node (n2) at ({\R*cos(180)},{\R*sin(180)}) {2};
		\node (n3) at ({\R*cos(225)},{\R*sin(225)}) {3};
		\node (n4) at ({\R*cos(270)},{\R*sin(270)}) {4};
		\node (n5) at ({\R*cos(315)},{\R*sin(315)}) {5};
		\node (n6) at ({\R*cos(0)},{\R*sin(0)}) {6};
		\node (n7) at ({\R*cos(45)},{\R*sin(45)}) {7};
		
		\foreach \from/\to in {n0/n1,n0/n2,n0/n3,n0/n4,n0/n5,n0/n6,n0/n7,n1/n2,n1/n3,n1/n4,n1/n5,n1/n6,n1/n7,n2/n3,n2/n4,n2/n5,n2/n6,n2/n7,n3/n4,n3/n5,n3/n6,n3/n7,n4/n5,n4/n6,n4/n7,n5/n6,n5/n7,n6/n7}
		\draw (\from) -- (\to);
		\end{tikzpicture}
		\caption{$N=8$, $K=3$.}
	\end{subfigure}
	\caption{Minimum NCS graphs providing $K$-resilience computed by Algorithm~\ref{alg:quasi-minimum} under different $K$ and $N$ settings.}
	\label{fig:quasi}
\end{figure}
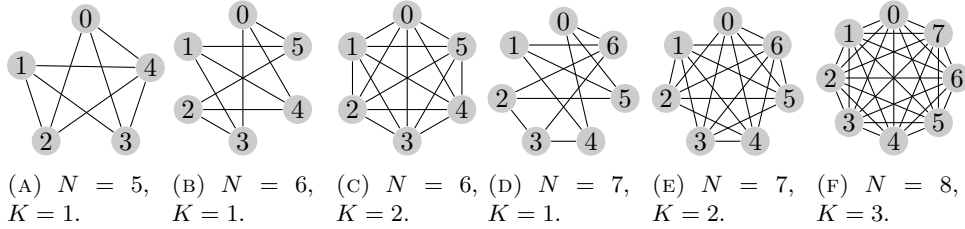

In this section, for any NCS graph $G=(V,E)$ that provides $K$-resilience, we derive a lower bound of $|E|$. We develop the following lemma that will be used to derive the bound.

\begin{lemma}
  A necessary condition for an NCS graph $G=(V,E)$ to give $K$-resilience is $\mathrm{mindeg}(G) \ge 2K+1$ where $\mathrm{mindeg}(G)$ denotes the minimum degree of all vertexes of $G$.
  \label{lemma:graph3}
\end{lemma}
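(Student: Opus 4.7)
My plan is to reduce the claim to the well-known graph-theoretic inequality that the edge-connectivity of a graph is upper-bounded by its minimum degree, and then apply Theorem~\ref{thm:new1} which characterizes $K$-resilience exactly as $(2K+1)$-edge-connectivity.

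First, I would invoke Theorem~\ref{thm:new1}: if $G$ is $K$-resilient, then $G$ must be $(2K+1)$-edge-connected. This converts the NCS-specific hypothesis into a purely graph-theoretic property and is the essential bridge that makes the lemma a short corollary rather than a new combinatorial argument.

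Next, I would prove (or invoke as folklore) that for any connected graph $G$, the edge-connectivity $\lambda(G)$ satisfies $\lambda(G) \le \mathrm{mindeg}(G)$. The argument is one line: let $v \in V$ be a vertex realizing the minimum degree, and let $M$ be the set of all $\mathrm{mindeg}(G)$ edges incident to $v$; then removing $M$ isolates $v$ from the rest of $G$, so $M$ is an edge cut and hence $\lambda(G) \le |M| = \mathrm{mindeg}(G)$. Combining this with the $(2K+1)$-edge-connectivity of $G$ yields $2K+1 \le \lambda(G) \le \mathrm{mindeg}(G)$, which is exactly the claim.

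Since both steps are short and rest on results already established in the paper (Theorem~\ref{thm:new1}) or standard graph theory (the $\lambda \le \mathrm{mindeg}$ inequality, which can be cited from e.g. \cite{west1996introduction}), there is no real obstacle. The only minor care-point is to state the argument contrapositively if desired, namely: if some vertex $v$ has $\deg(v) \le 2K$, then the $\deg(v)$ edges at $v$ form an edge cut of size at most $2K$, contradicting $(2K+1)$-edge-connectivity and hence $K$-resilience. This contrapositive phrasing is perhaps the cleanest way to present the necessity.
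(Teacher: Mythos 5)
Your proposal is correct and follows essentially the same route as the paper: both invoke Theorem~\ref{thm:new1} to translate $K$-resilience into $(2K+1)$-edge-connectivity and then use the standard fact that edge-connectivity is at most the minimum degree (the paper cites Whitney's theorem where you give the one-line edge-cut argument). No gaps.
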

\begin{proof}
  Theorem~\ref{thm:new1} shows that an NCS graph $G$ is $K$-resilient if and only if it is $(2K+1)$-edge-connected. From Whitney's theorem \cite{whitney1931theorem}, the minimum number of edges  whose deletion {\yellow results in disconnectivity of $G$ is {\yellow no greater} than the minimum degree $\mathrm{mindeg}(G)$. Thus, a necessary condition for an NCS graph $G=(V,E)$ to give $K$-resilience is $\mathrm{mindeg}(G) \ge 2K+1$.}
\end{proof}

\begin{theorem}
  For an NCS graph $G=(V,E)$ providing $K$-resilience, $|E| \ge \left\lceil \frac{N(2K+1)}{2} \right\rceil$, where $N=|V|$. In other words, a necessary condition for $G$ to be $K$-resilient is $|E| \ge \left\lceil \frac{N(2K+1)}{2} \right\rceil$.
  \label{pro:minum}
\end{theorem}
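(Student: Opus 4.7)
My plan is to derive the bound as an immediate consequence of Lemma~\ref{lemma:graph3} combined with the handshaking lemma from elementary graph theory. Lemma~\ref{lemma:graph3} already establishes that every vertex of a $K$-resilient NCS graph must have degree at least $2K+1$, which is the hard part; what remains is just a counting argument that turns this per-vertex bound into a global bound on $|E|$.

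Concretely, I would proceed as follows. First, invoke Lemma~\ref{lemma:graph3} to conclude that $\deg(v) \ge 2K+1$ for every $v \in V$. Second, apply the handshaking lemma, which states that $\sum_{v \in V} \deg(v) = 2|E|$, because every edge contributes exactly $2$ to the sum of degrees. Combining these two facts yields
\[
2|E| \;=\; \sum_{v \in V} \deg(v) \;\ge\; N \cdot (2K+1),
\]
so $|E| \ge \frac{N(2K+1)}{2}$. Third, since $|E|$ is an integer, I can strengthen the right-hand side from $\frac{N(2K+1)}{2}$ to $\left\lceil \frac{N(2K+1)}{2} \right\rceil$, which is the stated bound. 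The ceiling is only nontrivial when $N(2K+1)$ is odd, i.e., when $N$ is odd (since $2K+1$ is always odd).

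I do not anticipate any real obstacle, because the substantive graph-theoretic content has been offloaded to Lemma~\ref{lemma:graph3} (which in turn rested on Theorem~\ref{thm:new1} and Whitney's theorem). The only subtlety worth mentioning explicitly in the write-up is the integrality step that produces the ceiling, and the fact that this is only a necessary, not sufficient, condition on $|E|$. Showing tightness of the bound would require exhibiting a $K$-resilient graph achieving this edge count, which the paper defers to the numerical examples in Fig.~\ref{fig:quasi}; my proposed proof does not need to address tightness.
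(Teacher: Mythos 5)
Your proposal is correct and follows essentially the same route as the paper: both derive the bound from Lemma~\ref{lemma:graph3}'s minimum-degree condition $\mathrm{mindeg}(G) \ge 2K+1$ and then convert that per-vertex condition into an edge count. The only difference is in the counting step—the paper argues via a greedy construction of a minimum-edge graph satisfying the degree condition, whereas you apply the handshaking lemma together with integrality of $|E|$, which is a more direct (and arguably tighter) way of making the same count.
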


\begin{proof}
Let $C_1$ denote the clause that $G$ is $K$-resilient; let $C_2$ denote the clause that $\mathrm{mindeg}(G) \ge 2K+1$. Lemma~\ref{lemma:graph3} can be represented in logic as: $C_1 \Rightarrow C_2$.

  From Lemma~\ref{lemma:graph3}, to realize $K$-resilience, each node in $G$ should have a degree of at least $2K + 1$. We adopt a greedy algorithm to construct the graph $G^*$ with the minimum number of edges subject to the condition of $\mathrm{mindeg}(G^*) \ge 2K + 1$. The algorithm is as follows. Starting from no edges, each step of the algorithm adds an edge to connect two nodes that do not have an edge and the degree of each of them is no greater than any other nodes. The algorithm terminates once the condition $\mathrm{mindeg}(G^*) \ge 2K+1$ is satisfied. The resulting graph of this algorithm is as follows. First, when $N$ is an even number, the degree of every node is $2K+1$. The total number of edges is $\frac{N}{2} \cdot (2K+1)$. Second, when $N$ is an odd number, there are a total of $(N-1)$ nodes each having a degree of $(2K+1)$ and the remaining one node having a degree of $(2K+2)$. The total number of edges is $\left\lceil \frac{N(2K+1)}{2} \right\rceil$. In summary, the minimum number of edges to meet $\mathrm{mindeg}(G^*) \ge 2K + 1$ is $\left\lceil \frac{N(2K+1)}{2} \right\rceil$. Denoting by $C_3$ the clause of $|E| \ge \left\lceil \frac{N(2K+1)}{2} \right\rceil$, the above result can be represented in logic as $C_2 \Rightarrow C_3$.

Since $C_1 \Rightarrow C_2$ and $C_2 \Rightarrow C_3$, we have $C_1 \Rightarrow C_3$, i.e., $C_3$ is a necessary condition for $C_1$.
\end{proof}

The lower bound given by Theorem~\ref{pro:minum} can be used to understand the order of magnitude of the number of edges in a computed minimum NCS graph. Table~\ref{tab:bounds} shows the lower bound values under several settings of $N$ and $K$ as well as the numbers of the edges of the corresponding minimum NCS graphs shown in Fig.~\ref{fig:quasi}. We can see that the minimums are identical to the lower bound values, which {\yellow implies} that the {\yellow  theoretic lower bound is tight}. {\gray Thus, we can see that the relationship between the communication overhead (which is characterized by the number of edges) and the network size $N$ is roughly linear. Therefore, order-wise, the communication overhead for achieving $K$-resilience is acceptable. 
	
	In the traditional synchronization methods that do not provide any fault correction capbility, one slave node synchronizes with only one master node. Thus, the number of edges in the NCS graph of the traditional synchronization methods is $N-1$. From the result given by Theorem~\ref{pro:minum}, the additional communication overhead for $K$-resilience is at least $ \left\lceil \frac{N(2K+1)}{2} \right\rceil -(N-1) =  \left\lceil \frac{N(2K-1)}{2} \right\rceil +1$. For instance, when $K=1$ and $N=8$, the additional communication overhead is at least five p2p synchronization sessions.}

\begin{table}
  \caption{The lower bound of the number of edges in NCS graph providing $K$-resilience and the number of edges in the computed minimum NCS graphs shown in Fig.~\ref{fig:quasi}, as well as the upper bound of degree of resilience (DoR)  that is defined in \sect\ref{subsec:most-resilient}.}
  \label{tab:bounds}
  \centering
  \begin{tabular}{c c c c c}
    \hline\hline
    $N$ & $K$ & Lower bound & Number
                                                           of edges  & Upper bound \\
    & &  from Theorem~\ref{pro:minum} & in Fig.~\ref{fig:quasi} & of DoR \\
    \hline
    5 & 1 & 8 & 8 & $1/8$ \\
    6 & 1 & 9 & 9 & $1/9$ \\
    6 & 2 & 15 & 15 & $1/7.5$ \\
    7 & 1 & 11 & 11 & $1/11$ \\
    7 & 2 & 18 & 18 & $1/9$ \\
    8 & 3 & 28 & 28 & $1/9.\dot{3}$ \\
    \hline
  \end{tabular}
\end{table}

\section{Implication of Results}
\label{sec:implication}

This section discusses several important implications of the analytic results obtained in the previous sections.

\subsection{The Most Fault-Resilient Network}
\label{subsec:most-resilient}

If every p2p synchronization session has the same fault rate, the {\em degree of resilience} (DoR) defined as the ratio of the maximum number of correctable faults (i.e., $K$) and the number edges in an NCS graph (i.e., $|E|$) becomes a meaningful metric that characterizes the allowable percentage of faulty p2p synchronization sessions. We have the following corollary.

\begin{corollary}
  The 4-node network with complete NCS graph achieves the highest DoR.
\end{corollary}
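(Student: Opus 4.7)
The plan is to bound the DoR of an arbitrary $K$-resilient NCS graph from above and then show that the 4-node complete graph achieves this bound. First, I would invoke Theorem~\ref{pro:minum} to obtain $|E| \geq \left\lceil \frac{N(2K+1)}{2} \right\rceil \geq \frac{N(2K+1)}{2}$ for any $K$-resilient graph, so that the DoR is bounded by $\frac{K}{|E|} \leq \frac{2K}{N(2K+1)}$. Next, Lemma~\ref{lemma:graph3} gives $\mathrm{mindeg}(G) \geq 2K+1$, which forces $N \geq 2K+2$; substituting this minimal value of $N$ into the previous inequality yields the cleaner upper bound
\begin{equation*}
    \mathrm{DoR} \leq \frac{2K}{(2K+2)(2K+1)} = \frac{K}{(K+1)(2K+1)} =: g(K).
\end{equation*}

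Next I would analyze $g(K)$ for $K \in \mathbb{Z}_{>0}$ (for $K=0$ the DoR is trivially $0$). A direct computation gives $g(1) = \frac{1}{6}$, $g(2) = \frac{2}{15}$, $g(3) = \frac{3}{28}$, and in general $g(K+1)/g(K) = \frac{(K+1)^2(2K+1)}{K(K+2)(2K+3)}$, which I expect to verify is strictly less than $1$ for all $K \geq 1$ by cross-multiplication: the inequality $(K+1)^2(2K+1) < K(K+2)(2K+3)$ reduces after expansion to a straightforward polynomial comparison. Hence $g$ is strictly decreasing, and its maximum over $K \geq 1$ is $g(1) = \frac{1}{6}$, attained only when $K = 1$ and $N = 2K+2 = 4$.

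Finally I would verify that the 4-node complete NCS graph actually attains this upper bound. By Theorem~\ref{thm:1}, such a network has maximum correctable faults $f_t(4) = \lfloor 4/2 \rfloor - 1 = 1$, and its edge count is $|E| = \binom{4}{2} = 6$, giving $\mathrm{DoR} = 1/6$. This matches the upper bound derived above, proving the corollary.

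The main obstacle I anticipate is not deep: it is the bookkeeping to ensure that the upper bound $g(K)$ is indeed the correct envelope over all admissible $(N, K)$ pairs. In particular, one must be careful that for fixed $K$ the function $\frac{2K}{N(2K+1)}$ is decreasing in $N$, so the optimum of $N$ is its minimum feasible value $2K+2$; and that the ceiling in Theorem~\ref{pro:minum} cannot be exploited to beat $\frac{1}{6}$ at any other $(N, K)$ pair (a quick check shows the inequality $6K \geq \lceil N(2K+1)/2\rceil$ forces $N \leq 12K/(2K+1) < 2K+2$ for $K \geq 2$, ruling out all other cases).
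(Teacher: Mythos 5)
Your proof is correct, but it takes a different route from the paper's. Both arguments start from Theorem~\ref{pro:minum} to get $\mathrm{DoR}\le \frac{2K}{N(2K+1)}$, but they diverge in how they finish. The paper eliminates $K$ (via $K<2K+1$) to obtain a bound purely in $N$, namely $\mathrm{DoR}\le \frac{1}{N-4/3}$ in Eq.~(\ref{eq:dor-upper}); this shows $\mathrm{DoR}<1/6$ only for $N\ge 8$, so the paper must then dispose of $N\in[5,7]$ by a finite check of the feasible $(N,K)$ pairs using the lower-bound column of Table~\ref{tab:bounds}. You instead eliminate $N$: you bring in Lemma~\ref{lemma:graph3} (which the paper's proof of the corollary does not use) to get $N\ge 2K+2$, substitute the minimal $N$ to obtain the bound $g(K)=\frac{K}{(K+1)(2K+1)}$ in $K$ alone, and conclude by monotonicity of $g$ (your cross-multiplication does check out: the inequality reduces to $1<2K^2+2K$; alternatively one can verify directly that $g(K)\le 1/6$ iff $(2K-1)(K-1)\ge 0$). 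Your approach buys a uniform, enumeration-free argument that needs no table and even pins down that $\mathrm{DoR}=1/6$ forces $K=1$, $N=4$, $|E|=6$; the paper's approach buys an asymptotic statement ($\mathrm{DoR}=O(1/N)$) as a by-product but pays for it with case analysis on small $N$. The final verification step via Theorem~\ref{thm:1} ($f_t(4)=1$, $|E|=6$, $\mathrm{DoR}=1/6$) is the same in both.
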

\begin{proof}
We use the lower bound given by Theorem~\ref{pro:minum} to derive an upper bound of DoR when $N \ge 4$:
\begin{align}
  \mathrm{DoR} & \le \frac{K}{\left\lfloor \frac{N(2K+1)}{2} \right\rfloor - 1} \le \frac{K}{\frac{N(2K+1)}{2} - 2} = \frac{2K}{N(2K+1)-4} \nonumber \\
               & \le \frac{2K+1}{N(2K+1) - 4} = \frac{1}{N - \frac{4}{2K+1}} \le \frac{1}{N - \frac{4}{3}}, \label{eq:dor-upper}
\end{align}
where the last inequality follows from $K \ge 1$. Therefore, $\mathrm{DoR} = O \left(\frac{1}{N}\right)$, suggesting that larger networks will have lower degree of resilience when $N$ is large enough.

From Theorem~\ref{thm:1}, the DoR of the 4-node network is $1/6$. To ensure that the DoR upper bound given in Eq.~(\ref{eq:dor-upper}) is smaller than $1/6$ (i.e., $\frac{1}{N - 4/3} < 1/6$), we have $N \ge 8$. In other words, when $N \ge 8$, the network's DoR must be smaller than $1/6$. Now, we check the DoRs of the networks when $N \in [5,7]$. The last column of Table~\ref{tab:bounds} gives the upper bound of DoR that is the ratio of $K$ and the third column (i.e., the lower bound of the number of edges from Theorem~\ref{pro:minum}). From the results, we can see that when $N \in [5,7]$, the upper bound of DoR is smaller than $1/6$. Therefore, the 4-node network achieves the highest DoR of $1/6$.
\end{proof}

\subsection{Tiered Clock Synchronization for Fault Resilience}

The result in \sect\ref{subsec:most-resilient} suggests that, for a large-scale network, we can group the nodes into 4-node {\red synchronization groups}, forming the first tier of the clock synchronization. Each {\red synchronization group} with a complete NCS graph will use Algorithm~\ref{alg:error-correction2} to correct at most one fault. Every four central nodes from four tier-1 {\red synchronization groups} form a {\red synchronization group} in the second tier of the clock synchronization. Similarly, each tier-2 synchronization group will use Algorithm~\ref{alg:error-correction2} to correct at most one fault. More tiers are formed until all nodes in the network are connected. The NCS is executed from top to down in the tiered architecture.

We now use an example to illustrate. Suppose a network has 16 nodes. A two-tier clock synchronization with four 4-node tier-1 {\red synchronization groups} and one 4-node tier-2 {\red synchronization group} can be formed, as illustrated in Fig.~\ref{fig:16-node}. The tier-2 {\red synchronization group} executes NCS first. Then, each of the tier-1 {\red synchronization group} executes its own NCS. As such, all 16 nodes can be synchronized even if each {\red synchronization group} has a p2p synchronization faults (i.e., totally five faults). Note that the total number of edges in this two-tier network is 30. Alternatively, we can also use Algorithm~\ref{alg:quasi-minimum} to construct the minimum NCS graph for the 16-node network without the tiered architecture. From Theorem~\ref{pro:minum}, the minimum NCS graph that provides 5-resilience will have at least $\left\lceil \frac{16 \times (2 \times 5 + 1)}{2} \right\rceil = 88$ edges. Therefore, there is a trade-off between the above two solutions. In the minimum NCS graph without the tiered architecture, the five faults can occur on any five edges. However, the number of edges of the minimum NCS graph will be about three times of the tiered architecture shown in Fig.~\ref{fig:16-node}. On the other hand, while the tiered architecture uses less edges and thus incurs less communication cost, the five faults that the network can correct need to be distributed among the five synchronization groups.

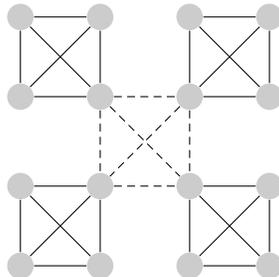
\begin{figure}
  \centering
  \def\R{2.5}
  \begin{tikzpicture}
    [scale=.3,auto=center,every node/.style={circle,fill=black!20,inner sep=1pt, minimum size=10pt}]
    \node (n0) at ({\R*cos(45)},{\R*sin(45)}) {};
    \node (n1) at ({\R*cos(135)},{\R*sin(135)}) {};
    \node (n2) at ({\R*cos(225)},{\R*sin(225)}) {};
    \node (n3) at ({\R*cos(315)},{\R*sin(315)}) {};

    \foreach \from/\to in {n0/n1,n0/n2,n0/n3,n1/n2,n1/n3,n2/n3}
    \draw (\from) -- (\to);

    \node (n4) at ({7.5+\R*cos(45)},{0+\R*sin(45)}) {};
    \node (n5) at ({7.5+\R*cos(135)},{0+\R*sin(135)}) {};
    \node (n6) at ({7.5+\R*cos(225)},{0+\R*sin(225)}) {};
    \node (n7) at ({7.5+\R*cos(315)},{0+\R*sin(315)}) {};

    \foreach \from/\to in {n4/n5,n4/n6,n4/n7,n5/n6,n5/n7,n6/n7}
    \draw (\from) -- (\to);

    \node (n8) at ({0+\R*cos(45)},{-7.5+\R*sin(45)}) {};
    \node (n9) at ({0+\R*cos(135)},{-7.5+\R*sin(135)}) {};
    \node (n10) at ({0+\R*cos(225)},{-7.5+\R*sin(225)}) {};
    \node (n11) at ({0+\R*cos(315)},{-7.5+\R*sin(315)}) {};

    \foreach \from/\to in {n8/n9,n8/n10,n8/n11,n9/n10,n9/n11,n10/n11}
    \draw (\from) -- (\to);

    \node (n12) at ({7.5+\R*cos(45)},{-7.5+\R*sin(45)}) {};
    \node (n13) at ({7.5+\R*cos(135)},{-7.5+\R*sin(135)}) {};
    \node (n14) at ({7.5+\R*cos(225)},{-7.5+\R*sin(225)}) {};
    \node (n15) at ({7.5+\R*cos(315)},{-7.5+\R*sin(315)}) {};

    \foreach \from/\to in {n12/n13,n12/n14,n12/n15,n13/n14,n13/n15,n14/n15}
    \draw (\from) -- (\to);

    \foreach \from/\to in {n3/n6,n3/n8,n3/n13,n6/n8,n6/n13,n8/n13}
    \draw[densely dashed] (\from) -- (\to);

  \end{tikzpicture}
  \caption{A two-tier 16-node clock synchronization architecture consisting of four 4-node tier-1 synchronization groups (solid lines) and one 4-node tier-2 synchronization groups (dashed lines).}
  \label{fig:16-node}
\end{figure}

\section{Conclusion}
\label{sec:conclude}

This paper studied the resilience of network clock synchronization based on practical p2p synchronization fault correction algorithms. Our analysis gave the following results:
\begin{enumerate}
  \item A closed-form tight bound of the maximum number of faults that can be corrected when every node pair in the network performs p2p synchronization, with respect to {\yellow the number of nodes $N$. The tight bound is $\left\lfloor \frac{N}{2} \right\rfloor - 1$}.
  \item An algorithm to compute the tight bound of the maximum number of faults that can be corrected when not every node pair performs p2p synchronization.
  \item A fast NCS algorithm {\yellow with a time complexity of $O(N^3)$} {\yellow that achieves the same fault correction capability as the original NCS algorithm that has an exponential time complexity.}
  \item An algorithm {\yellow that minimizes} the number of p2p synchronization sessions while ensuring that a specified number of faults can be corrected.
  \item A theoretic lower bound of the number of p2p synchronization sessions {\yellow needed to correct $K$ faults. The lower bound is $ \left\lceil \frac{N(2K+1)}{2} \right\rceil$}.
\end{enumerate}
Lastly, we showed that the 4-node network achieves the highest degree of resilience. Based on this, we discussed a tiered clock synchronization architecture that provides understood resilience and requires reduced p2p synchronization sessions. The results in this paper provide important understanding on the resilience of network clock synchronization against p2p synchronization faults and useful guidelines for the design of resilient clock synchronization systems.

\bibliographystyle{amsplain}

\providecommand{\bysame}{\leavevmode\hbox to3em{\hrulefill}\thinspace}
\providecommand{\MR}{\relax\ifhmode\unskip\space\fi MR }
\providecommand{\MRhref}[2]{%
  \href{http://www.ams.org/mathscinet-getitem?mr=#1}{#2}
}
\providecommand{\href}[2]{#2}

\end{document}